\newtheorem{theorem}{Theorem}[section]
\newtheorem{proposition}[theorem]{Proposition}
\newtheorem{lemma}[theorem]{Lemma}
\newtheorem{corollary}[theorem]{Corollary}
\newtheorem{remark}[theorem]{Remark}
\newtheorem{definition}[theorem]{Definition}
\newtheorem{example}[theorem]{Example}
\newtheorem{assumption}[theorem]{Assumption}
\def\A{{\mathcal{A}}}
\def\H{{\mathcal{H}}}
\def\F{{\mathcal{F}}}
\newcommand{\bW}{\mathbf{W}}
\newcommand{\bw}{\mathbf{w}}
\newcommand{\bC}{\mathbf{C}}
\newcommand{\bL}{\mathbf{L}}
\newcommand{\bK}{\mathbf{K}}
\newcommand{\bff}{\mathbf{f}}
\newcommand{\bQ}{\mathbf{Q}}
\newcommand{\bZ}{\mathbf{Z}}
\newcommand{\bv}{\mathbf{v}}
\def\A{{\mathcal{A}}}
\def\H{{\mathcal{H}}}
\def\F{{\mathcal{F}}}
\begin{document}
\title{A general framework for pricing and hedging under local viability\footnote{The second author gratefully acknowledges the support of the National Research, Development and Innovation Office (NKFIH) through grant K 143529 and also within the framework of the Thematic Excellence Program 2021 (National Research subprogramme ``Artificial intelligence, large networks, data security: mathematical foundation and applications''). We thank the referees for pointing out a mistake regarding the definition of the superhedging price in the previous version of the paper. }}

\author[]{Huy N. Chau}
\author[2]{Mikl\'os R\'asonyi}
\affil[1]{Department of Mathematics, University of Manchester, UK}
\affil[2]{HUN-REN Alfr\'ed R\'enyi Institute of Mathematics and E\"otv\"os Lor\'and University, Budapest, Hungary}

\maketitle

\begin{abstract}
In this paper, a new approach for solving the problems of pricing and hedging derivatives 
is introduced in a general frictionless market setting. The method is applicable even in cases where an equivalent 
local martingale measure fails to exist. Our main results include a new superhedging duality for American options when wealth processes can be negative and trading strategies are subject to a cone constraint.  
This answers one of the questions raised by Fernholz, Karatzas and Kardaras in \cite{fkk2005}.
\end{abstract}

\section{Introduction}
In Mathematical Finance, the Fundamental Theorem of Asset Pricing (FTAP) 
states that, rougly speaking, no-arbitrage is tantamount to the existence of equivalent local martingale measures. 
Some earliest results include \cite{ross77}, \cite{dr}, \cite{kreps81}. Discrete-time versions were obtained 
in \cite{dmw}, \cite{harrison1979martingales}. Versions of FTAP for general semimartingales in continuous time 
were given in \cite{delbaen1994general}, \cite{ds1998}. 
The authors introduced therein the condition No Free Lunch with Vanishing Risk (NFLVR) and  
proved that it is equivalent to the existence of local martingale measures (for locally bounded semimartingales). 
We refer to \cite{delbaen2006mathematics} for a comprehensive treatment of no-arbitrage pricing theory. 

The NFLVR condition provides a typical framework where the problems of pricing, hedging 
or portfolio optimization can be formulated. 
However, requiring absence of free lunches with vanishing risk can sometimes be restrictive:
in a model with the three-dimensional Bessel process (see \cite{ds95}) or in the Stochastic 
Portfolio Theory of \cite{fk}, the NFLVR condition fails and arbitrage opportunities may arise. 
Several approaches have been proposed for such models. One could apply the benchmark approach: 
a pricing theory under physical measure, developed in \cite{platen2006}, see also \cite{platenheath2006}. 
Alternatively, one could try to develop a theory based on a weaker no-arbitrage condition. 
No Unbounded Profit with Bounded Risk (NUPBR), introduced in \cite{kk}, is such an alternative. 
The authors in \cite{kk} showed that the problems of pricing and utility maximization can still be solved 
under NUPBR. In \cite{k12}, the author proved the equivalence of NUPBR and the existence of a 
local martingale deflator for one dimensional asset price processes. The paper 
\cite{takaoka} extended this result to a multidimensional semimartingale setting. 
The connection of NUPBR and numerair\'e portfolios was discussed in \cite{cdm15}. 
In a diffusion setting, \cite{ruf} developed pricing equations and hedging formulas under NUPBR. 
See also \cite{claudio} for other concepts of no-arbitrage. 

An American option is a derivative which gives holders the right to exercise the option at any
instant (stopping time) until a given future date. 
In the pioneering work \cite{MacKean}, the author transformed the pricing problem for 
an American option into a free boundary problem. The option price was computed explicitly 
up to the optimal stopping boundary. The idea was investigated further by \cite{moerbeke76}. 
Financial hedging arguments for American claims were introduced later in 
\cite{Bensoussan84}, \cite{karatzas1988}, and \cite{karatzas1989} for diffusion settings. 
For the optimal stopping and free boundary problems, we refer to the book \cite{ps2006}. In another 
line of research, techniques with variational inequalities, BSDEs were applied to compute the prices of 
American options in \cite{bl82} and \cite{jll}, \cite{Karoui1997}, \cite{grigorova2017}, among other papers. Summary of the most essential 
results on the pricing of American options were discussed in \cite{myneni}.

Most of the literature on American options assumes that an equivalent local martingale measure (ELMM) exists. Under arbitrage, there are no ELMMs, and the problems of pricing and hedging are little studied, with the exceptions of \cite{bkx},  \cite{kk2021}.  
As observed by C. Kardaras, in the absence of an ELMM it is not optimal to exercise an American call option 
(written on a non-dividend-paying stock) only at the maturity date. 
Can one then characterize, or compute, the optimal exercise time? The question 
was answered in \cite{bkx} where the optimal stopping time to exercise American call options was derived. 
The solution is meaningful for option holders. However, from the sellers' perspective, 
this is not enough and the following questions arise naturally:\\

Q1: How can we provide a hedging argument for the seller?\\ 

More generally, in \cite{fkk2005}, see also in  \cite{fk}, the authors posed the following problem:\\

Q2: Develop a theory for pricing American contingent claims without ELMMs.\\

In the first part of this paper, we introduce a new framework to study the problems of pricing and hedging for American options in the absence of ELMMs, or, 
more precisely, when the condition NFLVR holds only locally (that is, up to a sequence of stopping times) and not globally. 
The framework covers many interesting situations where arbitrage opportunities exist, as explained above. Unlike previous studies, in our setting option sellers \emph{continue} trading after the exercise time chosen by buyers to lower hedging price. The mathematical background is, as usual, the closedness of the set of hedgeable claims in appropriate topological spaces.
We employ an idea with product spaces which is rooted in \cite{chau2020} for discrete time models, 
in \cite{cfr} for super-replication under model uncertainty and transaction costs. 
In addition, our techniques require the whole machinery developed in \cite{delbaen1994general}.  As far as we know, the current paper is the first to study local viability in general frictionless market 
settings when portfolios are allowed to be negative and strategies take values in a convex cone. 

We obtain new pricing systems which are more complex than the local martingale deflators developed 
in \cite{kk}, \cite{k12}, \cite{takaoka}, \cite{cdm15}, just to mention a few. 
Importantly, under the local NFLVR condition, we are able to establish the existence of superhedging strategies for American options by using the new framework and superhedging duality by using the new pricing systems. A superhedging duality result for American option were given in \cite{kk2021} for \emph{continuous} price processes using tools in stochastic analysis such as optional decomposition (we believe that these results of \cite{kk2021} could be extended to c\`adl\`ag price processes by using the general optional decomposition given in \cite{sy1998}). Nevertheless, the duality of \cite{kk2021} is less practical than ours since sellers in their framework have to stop trading after the exercise time. In addition, our arguments purely rely on  analysis, do not involve change of num\'eraire techniques or optional decomposition, and 
thus are applicable not only to negative portfolios but also when other factors, such as dividends, liquidity costs, 
etc., are taken into account.   
Let us recall the paper \cite{Kardaras} which explains that the notion of num\'eraire portfolio is 
only available for nonnegative processes, and asked for different, appropriate tools for working with negative wealth 
processes, see Remark 1.12 therein. Such requests are particularly useful for utility maximization problems, where the initial capital is fixed. 

In the second part of the paper, we study the pricing and hedging problems for European options.  It is shown that the new superhedging duality is stated by using simpler versions of the new pricing systems as European options are special cases of American options. In addition, when the option maturity is exactly the trading horizon, the superhedging price can be computed by using only the concept of Equivalent Local Martingale Deflator, see \cite{k12}. It is noted that the new duality works with negative portfolios,  trading constraints, and the approach is suitable for more complex situations.  

To sum up, we provide a complete answer for the questions Q1 and Q2 (recalled above). It is worth emphasizing that even when the market satisfies the global NFLVR condition,  our approach for American options is \emph{new} and may be used in future work.  

The paper is organised as follows. Section \ref{sec:model} introduces the model. 
The new approach and results for American options are discussed in Section \ref{sec:american}. Section \ref{sec:eur} discusses results for European options. Proofs are given in 
Section \ref{sec:proofs}. Section \ref{sec:app} collects useful auxiliary material. 

\textit{Notations}. Let $I$ be some index set and $X_{i}$, $i\in I$ be sets.
In the product space  $\mathbf{X}=\prod_{i \in I} X_i$, a vector $(f^i)_{i \in I}$ will be 
denoted by $\mathbf{f}$. If there are orderings $\geq_{i}$ given on each $X_{i}$ then
we write $\mathbf{f} \ge \mathbf{g}$ if $f^i \geq_{i} g^i$ for all $i \in I$. 
If $1\in X_{i}$ for all $i$ then $\mathbf{1}$ denotes the vector with all coordinates equal to $1$ 
and $\mathbf{1}^{i}$ denotes the vector with coordinate $i$ equal to $1$ and the other coordinates zero.  
Similarly, when $0\in X_{i}$, $\mathbf{0}$ denotes the vector all of whose coordinates equal $0$. 
We denote by $L^0(\mathcal{F}, P)$ the vector space of (equivalence classes of) random variables on $(\Omega, \mathcal{F}, P)$, equipped with the metric $d_0(X,Y) = E[|X-Y|\wedge 1]$ for any $X,Y \in L^0(\mathcal{F}, P)$.
As usual, $L^{p}(\mathcal{F}, P), p \in [1, \infty]$ is the space of $p$-integrable (resp. bounded)
random variables equipped with the 
standard $\|\cdot\|_{p}$ norm.

\section{The model}\label{sec:model}
%Let $T>0$ be a fixed time horizon. 
Let $(\Omega, \mathcal{F}, (\mathcal{F}_t)_{t \ge 0}, P)$ be a filtered probability space, where the filtration is assumed to be right-continuous and $\mathcal{F}_0$ contains all
null sets of $\mathcal{F}$. We can assume further that $\mathcal{F}_{\infty}= \bigvee_{t \in \mathbb{R}_+}\mathcal{F}_t$. Consider a financial market model with $1\le d \in \mathbb{N}$ risky assets $S = (\mathcal{S}^i)_{1\le i \le d}$ and one risk-free asset whose price is assumed to be one at all times. Assume that $S$ is an $\mathbb{R}^d$-valued,  adapted and locally 
bounded 
semimartingale. For a given $\mathbb{R}^d$ -valued semimartingale $S$, we write $\mathcal{L}(S)$ for the space of $\mathbb{R}^d$-valued,
$S$-integrable, predictable processes $H = (H^i)_{1\le i \le d}$ and $W_t(H) = H \cdot S_t = \int_0^t{H_udS_u}$ for the corresponding vector stochastic integral, which is a one dimensional process, see \cite{sc2002}.

Let $\mathfrak{C}$ be a closed convex polyhedral cone of $\mathbb{R}^d$, representing trading constraints.
\begin{definition}\label{defi:admi_appro}
	A trading strategy $H$ is $x$-admissible if $H_0 = 0, H_t(\omega) \in \mathfrak{C}, a.s.$ for all $t \ge 0$, and if 
	\begin{equation}\label{}
		H \cdot S_t \ge - x,\mbox{ a.s.,  for all }t \ge 0,
	\end{equation}
	Denote by $\mathcal{A}_{x}$ the set of $x$-admissible strategies, and $\mathcal{A}= \bigcup_{x> 0}\mathcal{A}_{x}$.
\end{definition}
\begin{example} Some examples of cone constraints are given below:
	\begin{itemize}
		\item unconstrained case: $\mathfrak{C}^{u} := \mathbb{R}^d$.
		\item no short-sale constraint on the first $1 \le n \le d$ assets:
		$$\mathfrak{C}^{s} := \{ (h^i)_{1\le i \le d} \in \mathbb{R}^d: h^i \ge 0 \text{ for } 1 \le i \le n\}.$$
	\end{itemize} 
	Other constraints can be found in \cite{napp}.
\end{example}
\begin{definition}\label{defi:local_seq}
	A nondecreasing sequence of stopping times $\mathcal{T}= \{T_k, k \in \mathbb{N}\}$ such that 
\begin{equation}\label{defi:local}
\lim_{k \to \infty}P(T_k =  \infty)= 1
\end{equation} is called a localizing sequence. 
\end{definition}
Since $S$ is locally bounded, we may assume that $S^{T_k}_t:=S_{t \wedge T_k}$ is bounded without loss of generality.  We define 
\begin{eqnarray*}
K_0 &=& \{ (H \cdot S)_{\infty}: H \in \A \text{ and } (H \cdot S)_{\infty}:= \lim_{t \to \infty} (H \cdot S)_t  \text{ exists } a.s.\},
\end{eqnarray*}
and also $C = C_0 \cap L^{\infty}$ where $C_0 = K_0 - L^0_+$. We define the corresponding local versions
\begin{eqnarray*}
	K^k_0 &=& \{ (H \cdot S^{T_k})_{\infty}: H \in \A \text{ and } (H \cdot S^{T_k})_{\infty}:= \lim_{t \to \infty} (H \cdot S^{T_k})_t  \text{ exists } a.s.\},
\end{eqnarray*}
and $C^k = C^k_0 \cap L^{\infty}$ where $C^k_0 = K^k_0 - L^0_+$. We denote by  $\overline{C}^{\|\|_{\infty}}$ and $\overline{C^k}^{\|\|_{\infty}}$ the closures of $C$ and $ C^k$ in the sup norm topology of $L^{\infty}$, respectively. The no-arbitrage condition below is classical. 

\begin{definition} We say that the market satisfies the condition
	\begin{itemize}
		\item[(i)] no arbitrage (NA$_{\mathfrak{C}}$) if $C \cap L^{\infty}_+ = \{0\}$;
		\item[(ii)] local no arbitrage (local NA$_{\mathfrak{C}}$) w.r.t $\mathcal{T}$ if $C^k \cap L^{\infty}_+ = \{0\}$ for all $k \in \mathbb{N}$. 		
	\end{itemize}
\end{definition}
The No Free Lunch with Vanishing Risk condition is recalled from \cite{delbaen1994general} (when $\mathfrak{C} = \mathfrak{C}^u$),  \cite{kk} and its local version is also introduced below. %{\color{red}this was not defined, I think it should be states that this means (NFLVR$\mathbb{R}^d$)}
\begin{definition}\label{defi:hflvr} The market satisfies the condition
	\begin{itemize}
		\item[(i)] no free lunch with vanishing risk (NFLVR$_{\mathfrak{C}}$) if $\overline{C}^{\|\|_{\infty}} \cap L^{\infty}_+ = \{0\}$;
		\item[(ii)] local no free lunch with vanishing risk (local NFLVR$_{\mathfrak{C}}$) w.r.t $\mathcal{T}$ if for all $k \in \mathbb{N}$, it holds that  $\overline{C^k}^{\|\|_{\infty}} \cap L^{\infty}_+ = \{0\}$.
	\end{itemize}
\end{definition}
The concept of No Unbounded Profit with Bounded Risk (NUPBR$_{\mathfrak{C}}$), see \cite{kk}, \cite{k12}, \cite{ccfm},  is recalled below. It was called BK in \cite{kabanov}. Here, we adopt the version from \cite{ccfm} for infinite horizon settings. We define 
\begin{equation*}
	\mathcal{X}:= \{ W \ge 0: W_t = 1 + (H \cdot S)_t, \ t \ge 0\}. 
\end{equation*}
\begin{definition}
 The market satisfies the condition NUPBR$_{\mathfrak{C}}$ if $$\mathcal{X}_T:=\{ W_T: W \in \mathcal{X} \}$$ is bounded in $L^0$ for every $T > 0$.
\end{definition}

For the case without trading constraints, a general version of the first FTAP was given in Corollary 1.2 of \cite{delbaen1994general}. For a comprehensive theory of arbitrage, we refer to the book \cite{delbaen2006mathematics}.
\begin{theorem}\label{thm:classical} For a locally bounded semimartingale $S$, the condition NFLVR$_{\mathfrak{C}}$ when $\mathfrak{C} = \mathfrak{C}^u$ is equivalent to the existence of a
	probability $Q\sim P$ such that $S$ is a local $Q$-martingale.
\end{theorem}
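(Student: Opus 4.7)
The plan is to follow the classical Delbaen-Schachermayer (1994) strategy and prove the two directions separately.

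For the easy direction (existence of an equivalent local martingale measure implies NFLVR), suppose $Q\sim P$ makes $S$ a local $Q$-martingale. For any $H\in\mathcal{A}_{1/n}$, the process $H\cdot S$ is a $Q$-local martingale bounded below by $-1/n$, hence a $Q$-supermartingale by Fatou. Thus $E_Q[H_n\cdot S_T+1/n]\le 1/n$. If $H_n\cdot S_T\to W\in[0,\infty]$ a.s., then applying Fatou's lemma to the nonnegative sequence $H_n\cdot S_T+1/n$ gives $E_Q[W]\le \liminf_n E_Q[H_n\cdot S_T+1/n]=0$, so $W=0$ a.s. Since $Q\sim P$, $W=0$ $P$-a.s.

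For the hard direction, introduce the convex cone $K=\{H\cdot S_T : H\in\mathcal{A}, H\cdot S \text{ uniformly bounded}\}\subset L^\infty$ and set $C=(K-L^0_+)\cap L^\infty$. The crux of the argument is to show that $C$ is weak$^*$-closed in $L^\infty$. By the Krein-Smulian theorem it suffices to establish that $C\cap\{g:\|g\|_\infty\le 1\}$ is closed in probability. Given a sequence $f_n=H_n\cdot S_T-r_n$ in this intersection with $f_n\to f$ in probability, one combines a Komlós-type convex combinations argument with the classical closedness theorem of Mémin (or the Yor-type stability of stochastic integrals) to extract a limiting admissible integrand $H$ with $H\cdot S_T\ge f$; the NFLVR condition is used precisely to rule out the pathological case where mass escapes to $+\infty$ during the passage to the limit.

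Once weak$^*$-closedness is in hand, NFLVR gives $C\cap L^\infty_+=\{0\}$, and a Hahn-Banach separation argument combined with an exhaustion argument over maximal supports produces a strictly positive $g\in L^1$ such that $E[gf]\le 0$ for every $f\in C$. Setting $dQ/dP=g/E[g]$ yields a probability measure equivalent to $P$. To verify that $S$ is a local $Q$-martingale, use local boundedness to find a localizing sequence $\sigma_k$ with $S^{\sigma_k}$ bounded; for any $s<t$ and $A\in\mathcal{F}_s$, the strategies $\pm\mathbf{1}_{]s\wedge\sigma_k,t\wedge\sigma_k]}\mathbf{1}_A\mathbf{1}^i$ are admissible and their payoffs lie in $K$, whence $E_Q[(S^i_{t\wedge\sigma_k}-S^i_{s\wedge\sigma_k})\mathbf{1}_A]=0$, which is the local martingale property.

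The main obstacle is clearly the weak$^*$-closedness of $C$: it is where NFLVR enters in an essential way and where the stochastic-integration machinery of \cite{delbaen1994general} (stability of stochastic integrals under appropriate convergence, together with the technical handling of admissibility under convex combinations) is needed. The separation and the identification of $Q$ as an ELMM are comparatively routine once closedness is established.
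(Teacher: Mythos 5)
The paper does not prove Theorem \ref{thm:classical} at all: it is imported verbatim from Corollary 1.2 of \cite{delbaen1994general}, and your sketch follows exactly that classical route (supermartingale plus Fatou for the easy implication; weak$^*$-closedness of $C$ via Krein--Smulian and a Fatou-closedness/convex-combinations argument; Kreps--Yan type separation with an exhaustion over supports; verification of the local martingale property through stopped simple strategies). So in structure your proposal is the intended proof, with the genuinely hard analytic content correctly located in the closedness of $C$ and correctly attributed to the machinery of \cite{delbaen1994general}.

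Two points deserve correction or at least explicit mention. First, your cone $K=\{H\cdot S_T: H\in\mathcal{A},\ H\cdot S \text{ uniformly bounded}\}$ is not the one used in \cite{delbaen1994general}: there one takes $K_0$ to be the terminal values of all admissible (i.e.\ merely bounded-below) strategies and $C=(K_0-L^0_+)\cap L^\infty$. The difference is not cosmetic at the Krein--Smulian step: the limiting integrand $H$ produced by the convex-combination/M\'emin argument from a sequence $f_n=H_n\cdot S_T-r_n$ with $\|f_n\|_\infty\le 1$ is only admissible (its wealth process is bounded below, e.g.\ by $-1$ after the usual reduction), and there is no reason for $H\cdot S$ to be uniformly bounded above, since a single jump can be unbounded; with your definition the limit may therefore leave your cone and the closedness claim as stated would fail. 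Using the Delbaen--Schachermayer definition costs nothing elsewhere in your argument: NFLVR still gives $C\cap L^\infty_+=\{0\}$, and the payoffs of the strategies $\pm\mathbf{1}_A\mathbf{1}_{]s\wedge\sigma_k,t\wedge\sigma_k]}\mathbf{1}^i$ still lie in $C$ because $S^{\sigma_k}$ is bounded. Second, in the easy direction the assertion that $H\cdot S$ is a local $Q$-martingale for admissible $H$ is not automatic from $S$ being a local $Q$-martingale; it requires the Ansel--Stricker lemma (or an argument using local boundedness of $S$ together with the one-sided bound), and this should be cited rather than passed over.
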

The global NFLVR$_{\mathfrak{C}}$ condition is a special case of the local NFLVR$_{\mathfrak{C}}$ condition when the localizing sequence contains only infinity. When  $\mathfrak{C}$ is predictable closed convex cone, it is known that NFLVR$_{\mathfrak{C}}$ $=$ NUPBR$_{\mathfrak{C}}$ $+$ NA$_{\mathfrak{C}}$, see Proposition 4.2 of \cite{kk}. Similar results in frictionless settings can be found in Corollary 3.8 of \cite{delbaen1994general}, Lemma 2.2 of \cite{kabanov}. Furthermore, local NUPBR$_{\mathfrak{C}}$ is equivalent to global NUPBR$_{\mathfrak{C}}$, while the local NA$_{\mathfrak{C}}$ condition does not imply the global NA$_{\mathfrak{C}}$ one,  see \cite{ruf}, Example 4.6 of \cite{kk}. Therefore, it is possible to have arbitrage opportunities under the local NFLVR$_{\mathfrak{C}}$ condition. 

Equivalent local martingale deflators were introduced in \cite{k12}, see also \cite{ccfm} for the infinite horizon settings, which play the same roles as equivalent martingale measures. 
\begin{definition}
Let $\mathfrak{C} = \mathfrak{C}^u$. An equivalent local martingale deflator (ELMD) is a nonnegative local martingale  $Y$ such that $Y_0 = 1$ and $YW$ is a local martingale for all $W \in \mathcal{X}$.
\end{definition}
When $\mathfrak{C} = \mathfrak{C}^u$ and $S$ is a one dimensional semimartingale, \cite{k12} proved the equivalence between NUPBR$_{\mathfrak{C}}$ and the existence of an ELMD. By introducing the concept of strict sigma-martingale
density, \cite{takaoka} generalized the result for  finite-dimensional semimartingale settings using a change of num\'eraire technique. \cite{choulli} proved the equivalence between NUPBR$_{\mathfrak{C}}$ and the existence of a strict sigma-martingale density for continuous semimartingales by using only stochastic calculus. It is worth noting that all of the previous studies focus on nonnegative portfolios. 

\section{Pricing and hedging American options}\label{sec:american}
Define $\mathbb{T}:=\{ \tau: \tau \text{ is a stopping time taking values in } [0,\infty] \}.$ Let $\Phi: \Omega \times [0, \infty] \to \mathbb{R}$ be measurable. An American option $\Phi$ gives its holders the payoff $\Phi_{\tau}1_{\tau < \infty}$ when exercised at time $\tau\in\mathbb{T}$. On the event $\{\tau = \infty\}$, the option is not exercised, hence the payoff is zero. A European option is a special case when $\mathbb{T} = \{ T^{\sharp} \}$ where $T^{\sharp} > 0$ is the maturity of the European option. A Bermudan option is also a special case where the set of possible exercise times is finite. Let $\mathcal{T}= \{T_k, k \in \mathbb{N}\}$ be a localizing sequence. The following conditions will be imposed.

\begin{assumption}\label{assumption_main}
	\begin{itemize}
		\item[(i)] $S$ satisfies the local NFLVR$_{\mathfrak{C}}$ condition w.r.t. $\mathcal{T}$.  
		\item[(ii)] $\Phi$ is a.s. continuous from the right: for almost every $\omega\in\Omega$, for all $t \in [0, \infty)$, if $t<t_{n}\in [0,\infty)$, $n\in\mathbb{N}$ converges to $t$ then $$\lim_{n \to \infty} \Phi_{t_n}(\omega) = \Phi_t(\omega).$$
		We assume in addition that $\Phi_{\infty}=0$.
	\end{itemize}
\end{assumption}
Under Assumption \ref{assumption_main} (ii), $\Phi$ is optional. Assumption \ref{assumption_main} is very general and includes most interesting models.  Although the local NFLVR$_{\mathfrak{C}}$ and the NUPBR$_{\mathfrak{C}}$ conditions are equivalent, we choose the formulation with the local NFLVR$_{\mathfrak{C}}$ to work with negative portfolios. Furthermore, we are able to use the geometric representation in  \eqref{eq:nflvr1} and the standard separation argument. In this way, the local NFLVR$_{\mathfrak{C}}$ condition  can be seen as a unified framework for the global NFLVR$_{\mathfrak{C}}$ and the NUPBR$_{\mathfrak{C}}$ conditions. 

In classical settings, the superhedging price of the American option $\Phi$ is defined by
\begin{eqnarray}\label{eq:price_American}
	\inf \left\lbrace  z: \exists H \in \mathcal{A} \text{ such that } z + H \cdot S_{\tau} \ge \Phi_{\tau}, \ a.s., \text{ for any } \tau \in \mathbb{T} \right\rbrace. 	
\end{eqnarray}
For hedging, the option seller has to find \emph{one} strategy $H$ such that the corresponding wealth process dominates the American payoff at any stopping time. In the presence of arbitrage opportunities, the seller may prefer to continue trading after the stopping time $\tau$ chosen by the option buyer and exploit such riskless opportunities to reduce hedging prices. Notably, the trading strategy after the exercise time $\tau$ may be different from the strategy before $\tau$. For this modelling purpose, we need to consider a generalization of admissibility. Let $\tau \in \mathbb{T}$ be a stopping time and $v$ be an $\mathcal{F}_{\tau}$-measurable random variable. We define  
\begin{eqnarray*}
\mathcal{A}_x(\tau,v) &:=& \left\lbrace H \text{ such that}\  H_t1_{\tau < t} \in \mathcal{L}(S),   H_t(\omega)1_{\tau < t} \in \mathfrak{C} \right. \\
&& \qquad \left.  v + \int_{\tau}^t H_udS_u \ge -x, \ a.s. \forall \tau \le t \right\rbrace,
\end{eqnarray*}
and $\mathcal{A}(\tau,v) = \cup_{x \ge 0}\mathcal{A}_x(\tau,v)$. We define the set of $x$-generalized strategies by
\begin{equation}\label{eq:ge_stra1}
	\widetilde{\mathcal{A}}^{\mathbb{T}}_x = \left\lbrace  \widetilde{H} = \left( H, (H(\tau))_{\tau \in \mathbb{T}} \right), H \in \mathcal{A}_x \text{  and } H(\tau)1_{]]\tau,\infty[[} \in \mathcal{A}_x\left( \tau,\int_{0}^{\tau}H_udS_u\right) \right\rbrace ,
\end{equation}
and
\begin{equation}\label{eq:tilde_H}
	\widetilde{H}_t(\tau) := H_t1_{t \le \tau} + H_t(\tau)1_{t > \tau}, \  \forall  \tau \in \mathbb{T}.
\end{equation}
Define also $\widetilde{\mathcal{A}}^{\mathbb{T}} = \cup_{x >0} \widetilde{\mathcal{A}}^{\mathbb{T}}_x$. For each $\tau \in \mathbb{T}$, the strategy $\widetilde{H}_t(\tau)$ in \eqref{eq:tilde_H} is also predictable. For each strategy $\widetilde{H} = \left( H, (H(\tau))_{\tau \in \mathbb{T}} \right)$, $H$ represents the strategy that the seller uses before the exercise time $\tau$, and $H(\tau)1_{]]\tau,\infty[[}$ is the strategy used after time $\tau$. Note that a strategy $\widetilde{H}$, parametrized by $\tau$, is an infinite dimensional \emph{vector} of admissible trading strategies. For a strategy  $\widetilde{H}$, the corresponding terminal wealth is a vector $(\widetilde{H}(\tau)\cdot S_{\infty})_{\tau \in \mathbb{T}}$ if the limit vector exists. For example, under the condition local  NFLVR$_\mathfrak{C}$ w.r.t $\mathcal{T}$, Theorem 9.3.3 of \cite{delbaen2006mathematics}, implies that for any $\tau$, the random variable $\lim_{t \to \infty}(\widetilde{H}(\tau)\cdot S)_{t \wedge T_k}$ exists and is finite almost surely for any $T_k$. Therefore, the random variable $(\widetilde{H}(\tau)\cdot S)_{\infty}$ is well-defined, by using  \eqref{defi:local}. 

For superhedging purposes, we assume that $\Phi_t \ge -x, a.s.$ for all $t$ and the credit constraint is also $x$, e.g., wealth processes must be bounded from below by $-x$ at all time.  If the American option is exercised at $\tau < \infty$, the payoff $\Phi_{\tau}$ is delivered at time $\tau$ and the credit constraint is satisfied by a hedging strategy $H$ if
\begin{equation}\label{at_tau}
		z + \int_{0}^{\tau} H_udS_u - \Phi_{\tau} \ge - x, \ a.s.
\end{equation}
For $\tau < t$, we use a strategy $H(\tau)$ such that 
\begin{equation}\label{after_tau}
		z + \int_0^{\tau} H_udS_u - \Phi_{\tau} + \int_{\tau}^t H_u(\tau)dS_u \ge -x, \ a.s.
	\end{equation}
This condition requires that after delivering the payoff at time $\tau$, sellers fulfill the credit constraint. The terminal  wealth needs to be nonnegative
	\begin{equation}\label{at_T}
		z + \int_0^{\tau} H_udS_u - \Phi_{\tau} + \int_{\tau}^{\infty} H_u(\tau)dS_u \ge 0, \ a.s.
	\end{equation}
This approach partially covers the American payoff starting from time $\tau$, as in \eqref{at_tau} and \eqref{after_tau}. The superhedging procedure is fully complete at time infinity.   
This leads to the following definition of superhedging price, see similar concepts in \cite{bhz2015}, \cite{bz2017}, \cite{adot},  
\begin{eqnarray}\label{defi:superhedging_new}
	\pi^A_x(\Phi) &:=& \left\lbrace z \in \mathbb{R}: \exists \widetilde{H} \in \widetilde{\mathcal{A}}^{\mathbb{T}}\text{ s.t. } \forall \tau \in \mathbb{T}, t \in [0, \infty],   \right.  \nonumber \\
	&& \qquad \left. z + \int_{0}^{t}\widetilde{H}_u(\tau) dS_u - \Phi_{\tau}1_{\tau \le t} \ge -x1_{t < \infty}, \ a.s.  \  \right\rbrace.
\end{eqnarray}
Here, the hedging portfolio could be negative, and the credit constraint $x$ is satisfied at all time. 
\begin{remark}
The inequality constraints in \eqref{at_tau} and \eqref{after_tau} should be interpreted as superhedging constraints rather than admissibility constraints. Admissibility constraints are used to rule out doubling strategies and independent from option payoffs.  If we understand \eqref{after_tau} as an admissibility constraint, then we will ask $H(\tau)1_{]]\tau,\infty[[} \in \A_{x}\left( \tau, 	z + \int_0^{\tau} H_udS_u - \Phi_{\tau}   \right).$ This condition depends on the payoff $\Phi_{\tau}$. Therefore, this is not the right formulation for the no arbitrage property. 
\end{remark} 
\begin{remark}
In discrete-time settings, the papers \cite{adot}, 	\cite{bhz2015} and 	\cite{bz2017} allow trading after the exercise time $\tau$ and the payoff can be delivered at the final time as convention. This is equivalent to delivering the payoff at its exercise time by taking a loan, and then the sellers continue trading in such a way that their final wealth is nonnegative. This is possible because there is no admissibility constraint in discrete-time settings. In the present continuous time setting, this convention does not work because of the admissibility and credit constraints. At time $\tau$, delivering the possibly unbounded payoff has a  significant impact\footnote{We thank the referees for pointing out this issue.} to the admissibility for the hedging portfolio at time $\tau$ and after time $\tau$.
\end{remark}

To derive superhedging dualities for American options, we transform the superhedging problem for American options into the one for the corresponding European options in appropriate product spaces. Since our setting allows arbitrage opportunities, the superhedging problem has solutions if trading strategies are bounded from below, see also \cite{ccf} for similar ideas for utility maximization. Here, we assume that the credit constraint $x$ is fixed. 

Let $t_i, i \in \mathbb{N}$ be an enumeration of $(\mathbb{Q} \cap [0,\infty)) \cup \{\infty\}$ with
$t_0 = \infty$. Define by $\Theta$ the set containing all such $t_i$,
\begin{equation}\label{defi:theta}
	\Theta =\{  t_i, i\in \mathbb{N} \}.
\end{equation}
Similar to \eqref{eq:ge_stra1}, we define $ \widetilde{\mathcal{A}}^{\Theta} = \cup_{x >0} \widetilde{\mathcal{A}}^{\Theta}_x$ where
\begin{equation}
	\widetilde{\mathcal{A}}^{\Theta}_x = \left\lbrace  \widetilde{H} = \left( H, (H(\theta))_{\theta \in \Theta} \right), H \in \mathcal{A}_x \text{  and } H(\theta)1_{]]\theta,\infty[[} \in \mathcal{A}_x\left( \theta,\int_{0}^{\theta}H_udS_u\right) \right\rbrace.
\end{equation}
The following proposition shows that it is enough to superhedge $\Phi$ at rational times. 
\begin{proposition}\label{pro:reduce}  Let $z$ be a real number and assume $\Phi_t \ge -x, \ a.s.$ The following are equivalent:
\begin{itemize}
		\item[(i)] There exists $\widetilde{H} \in \widetilde{\mathcal{A}}^{\mathbb{T}}$ such that $\forall \tau \in \mathbb{T}, t \in [0,\infty]$, 
		\begin{equation}\label{eq:hedge}
			z + \int_{0}^{t}\widetilde{H}_u(\tau) dS_u - \Phi_{\tau}1_{\tau \le t} \ge -x1_{t < \infty}, \ a.s., 
		\end{equation}
		\item[(ii)] There exists $\widetilde{G} \in \widetilde{\mathcal{A}}^{\Theta}$ such that $\forall
		k \in \mathbb{N}, \theta, \zeta \in \Theta,$
		\begin{equation}\label{eq:hedge2_2}
			z + \int_0^{\zeta \wedge T_k} \widetilde{G}_u(\theta) dS_u - \Phi_{\theta}1_{\theta\le  \zeta \wedge T_k}  \ge  - x 1_{\zeta \wedge T_k < \infty}, \ a.s.
		\end{equation}
	\end{itemize} 
\end{proposition} 

Motivated by \eqref{eq:hedge2_2}, for a generalized strategy $\widetilde{H}$, we define the corresponding stopped processes for $(k,\theta, \zeta) \in \mathbb{N} \times \Theta \times \Theta, t \in [0,\infty)$,
\begin{equation}\label{eq:stopped_wealth}
	\widetilde{W}^{k,\theta, \zeta}_t(\widetilde{H}) =\int_0^{t \wedge T_k \wedge \zeta}  \widetilde{H}_u(\theta)dS_u=\int_0^{t} \widetilde{H}_u(\theta)dS^{T_k \wedge \zeta}_u.
\end{equation}
In the following discussions, we will work with different product spaces.  Let $\emptyset \ne D \times \Gamma \times \Sigma \subset \mathbb{N} \times \Theta \times \Theta$. For $t \in [0, \infty]$, denote by $$\bL^{\infty, D \times \Gamma \times \Sigma}_t = \prod_{k \in D, \theta \in \Gamma, \zeta \in \Sigma} \left( L^{\infty}(\mathcal{F}_{t \wedge  T_k \wedge \zeta},P), w^* \right)$$ with the corresponding product topology when $L^{\infty}(\mathcal{F}_{t \wedge T_k\wedge \zeta},P)$ is equipped with the weak star topology $w^*=\sigma(L^{\infty}(\mathcal{F}_{t \wedge T_k\wedge \zeta},P),L^1(\mathcal{F}_{t \wedge T_k\wedge \zeta},P))$. Such product topology is called the $\bw^*$-topology. Furthermore, we define 
\begin{eqnarray}
	\bL^{\infty, D \times \Gamma \times \Sigma, b}_t &:=& \left\{  \bff \in \bL^{\infty, D\times \Gamma \times \Sigma}_t: \exists a \in \mathbb{R}_+  \text{ such that } \right. \nonumber\\
	&& \left. -a \le f^{k,\theta, \zeta} \le a, \text{ almost surely } \forall k \in D, \theta \in \Gamma, \zeta \in \Sigma \right\}. \label{defi:subspace}
\end{eqnarray}
\begin{remark}

The boundedness from below and above in \eqref{defi:subspace} make $\bL^{\infty, D\times \Gamma \times \Sigma, b}_t$ a subspace. We note that $\bL^{\infty, D\times \Gamma \times \Sigma, b}_t = \bL^{\infty, D\times \Gamma \times \Sigma}_t$ when $D\times \Gamma \times \Sigma$ is finite. It is crucial to work with the subspace $\bL^{\infty, D\times \Gamma \times \Sigma, b}_t$ to establish Fatou-closedness of certain sets, see Proposition \ref{pro:countable_Fatou} later on. Under the global NFLVR$_{\mathfrak{C}}$ condition, there is no need to restrict to these subspaces. It is clear that  $\bL^{\infty, D\times \Gamma \times \Sigma, b}_t$ is also a locally convex topological space, when equipped with the induced topology of $\bL^{\infty, D\times \Gamma \times \Sigma}_t$. Other product spaces, e.g., $\bL^{0,D\times \Gamma \times \Sigma,b}_t$, are defined similarly, see also \cite{cfr} for product spaces and their duals. The product space 
$\bL^{\infty, \mathbb{N}\times \Theta \times \Theta}_t$ 
admits the predual $\bigoplus_{(k,\theta, \zeta) \in \mathbb{N}\times \Theta \times \Theta} L^{1}(\mathcal{F}_{t \wedge T_k \wedge \zeta},P)$, which is not a Fréchet space, but an LF space, see Appendix \ref{sec:app}. Therefore we cannot apply the Krein-Smulian theorem (see in \cite{schachermayer1994}) to $\bL^{\infty, \mathbb{N}\times \Theta \times \Theta}_t$. Similarly, it is not easy to find the predual of $\bL^{\infty,\mathbb{N}\times \Theta \times \Theta,b}_t$ (see Theorem 8.12.1 of \cite{narici}) and check if the predual is a Fréchet space, in order to use the Krein-Smulian theorem. It is also emphasized that in the proof of Theorem \ref{thm:classical}, the Krein-Smulian theorem plays a crucial role while in our paper it can't be used and we use a compactness argument instead.
\end{remark}
For $D \times \Gamma \times \Sigma \subset \mathbb{N} \times \Theta \times \Theta$, we set
\begin{eqnarray}
	\widetilde{\bK}^{D \times \Gamma \times \Sigma}_x &=& \left\lbrace \left( \widetilde{W}^{k,\theta, \zeta}_{\infty}(\widetilde{H}) \right)_{k \in D,\theta \in \Gamma, \zeta \in \Sigma}: \widetilde{H} \in \widetilde{\mathcal{A}}^{\Theta}_x, \right.  \nonumber \\
	&&\left.  \qquad \text{ and } \left( \widetilde{W}^{k,\theta, \zeta}_{\infty}(\widetilde{H}) \right)_{k \in D,\theta \in \Gamma, \zeta \in \Sigma} \text{exists} \right\rbrace,  x > 0, 
	\label{eq:K_x}\\
\widetilde{\bK}^{D \times \Gamma \times \Sigma}_0 &=& \left\lbrace \left( \widetilde{W}^{k,\theta, \zeta}_{\infty}(\widetilde{H}) \right)_{k \in D,\theta \in \Gamma, \zeta \in \Sigma}: \widetilde{H} \in \widetilde{\mathcal{A}}^{\Theta},  \text{ and } \left( \widetilde{W}^{k,\theta, \zeta}_{\infty}(\widetilde{H}) \right)_{k \in D,\theta \in \Gamma, \zeta \in \Sigma} \text{exists}  \right\rbrace,\\ 
	\widetilde{\bC}^{D \times \Gamma \times \Sigma}_0 &=& (\widetilde{\bK}^{D \times \Gamma \times \Sigma}_0 - \bL^{0,D \times \Gamma \times \Sigma}_{\infty,+}), \label{eq:tilde_C_0}\\
	\widetilde{\bC}^{D \times \Gamma \times \Sigma} &=&  \widetilde{\bC}^{D \times \Gamma \times \Sigma}_0 \bigcap \mathbf{L}^{\infty, \mathbb{N} \times \Theta \times \Theta,b}_{\infty}. \label{eq:tilde_C}
\end{eqnarray}
\begin{remark}
\begin{itemize}
\item In \eqref{eq:K_x}, the random variable $\widetilde{W}^{k,\theta, \zeta}_{\infty}(\widetilde{H})$ is always well-defined if $T_k \wedge \zeta < \infty$. On the event $T_k \wedge \zeta = \infty,$ we define $$\widetilde{W}^{k,\theta, \zeta}_{\infty}(\widetilde{H}):= \lim_{t \to \infty} \int_0^{t}  \widetilde{H}_u(\theta)dS_u.$$
\item In the definition of $\widetilde{\bC}^{D \times \Gamma \times \Sigma}_0$, it is important to ask for the admissibility constraint $\widetilde{H} \in \widetilde{\A}^{\Theta}$ instead of the admissibility for all $\theta \in \Gamma$. This requirement is used later in Section \ref{sec:proof_thm:local_closed_0} (Proof of Theorem \ref{thm:local_closed_0}), where we need $\mathcal{H}^{k,\theta} \subset \overline{\mathcal{H}^{\mathbb{N} \times \Theta}}$ for any $k \in \mathbb{N}, \theta \in \Theta$. 
\end{itemize} 
\end{remark}
The local NFLVR$_{\mathfrak{C}}$  condition is reformulated as below.
\begin{proposition}
	Assumption \ref{assumption_main} (i) holds  if and only if  \begin{equation}\label{eq:nflvr1}
		\overline{\widetilde{\bC}^{D \times \Gamma \times \Sigma}}^{\|\|_{\infty}} \cap \bL^{\infty,D \times \Gamma \times \Sigma,b}_{\infty,+} = \{0\}, \  
	\end{equation}
	for all $\emptyset \ne D \times \Gamma \times \Sigma \subset \mathbb{N} \times \Theta \times \Theta,$ where $\overline{\widetilde{\bC}^{D \times \Gamma \times \Sigma}}^{\|\|_{\infty}}$ is the closure of $\widetilde{\bC}^{D \times \Gamma \times \Sigma}$ in the product space $\bL^{\infty,D \times \Gamma \times \Sigma,b}_{\infty}$. 
\end{proposition}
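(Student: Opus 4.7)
The plan is to reduce both directions to a coordinate-wise application of the classical norm-closure characterization of NFLVR$_{\mathfrak{C}}$ for each stopped process $S^{k}$, namely that $S^{k}$ satisfies NFLVR$_{\mathfrak{C}}$ on $[0,T\wedge T_{k}]$ if and only if $\overline{C^{k}}^{\|\cdot\|_{\infty}}\cap L^{\infty}(\mathcal{F}_{T\wedge T_{k}},P)_{+}=\{0\}$, where $C^{k}:=(\{H\cdot S^{k}_{T}:H\in\mathcal{A}\}-L^{0}_{+})\cap L^{\infty}$. For the ``only if'' direction I pick $\bff\in \overline{\widetilde{\bC}^{D\times\Gamma}}^{\|\cdot\|_{\infty}}\cap \bL^{\infty,D\times\Gamma,b}_{T,+}$. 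Countability of $D\times\Gamma\subset\mathbb{N}\times\Theta$ makes the product topology metrizable, so one can choose a sequence $\widetilde{H}_{n}\in\widetilde{\mathcal{A}}^{\Theta}$ and $g_{n}^{k,\theta}\ge 0$ with $\|\widetilde{W}^{k,\theta}_{T}(\widetilde{H}_{n})-g_{n}^{k,\theta}-f^{k,\theta}\|_{\infty}\to 0$ for every $(k,\theta)\in D\times\Gamma$. For fixed $(k_{0},\theta_{0})\in D\times\Gamma$, the combined strategy $\widetilde{H}_{n}(\theta_{0})$ is admissible for $S$ on $[0,T]$, hence (stopped at $T_{k_{0}}$) admissible for $S^{k_{0}}$ on $[0,T\wedge T_{k_{0}}]$ with terminal wealth $\widetilde{W}^{k_{0},\theta_{0}}_{T}(\widetilde{H}_{n})$. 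Thus $f^{k_{0},\theta_{0}}\in \overline{C^{k_{0}}}^{\|\cdot\|_{\infty}}\cap L^{\infty}(\mathcal{F}_{T\wedge T_{k_{0}}},P)_{+}$, and Assumption \ref{assumption_main}(i) forces $f^{k_{0},\theta_{0}}=0$. Since $(k_{0},\theta_{0})$ was arbitrary, $\bff=\mathbf{0}$.

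For the ``if'' direction I argue the contrapositive. If local NFLVR$_{\mathfrak{C}}$ fails at some index $k_{0}$, the classical characterization produces a nonzero $f\in \overline{C^{k_{0}}}^{\|\cdot\|_{\infty}}\cap L^{\infty}(\mathcal{F}_{T\wedge T_{k_{0}}},P)_{+}$. I specialise to $D\times\Gamma=\{k_{0}\}\times\{T\}$ (noting $T=t_{0}\in\Theta$). Every $H\in\mathcal{A}_{x}$ lifts to a generalised strategy $\widetilde{H}$ by taking the trivial continuations $H(\theta)\equiv 0$; this lies in $\widetilde{\mathcal{A}}^{\Theta}_{x}$ because $0\in\mathfrak{C}$ and $\int_{0}^{\theta}H_{u}\,dS_{u}\ge -x$ for every $\theta\in\Theta$. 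For this lift, $\widetilde{W}^{k_{0},T}_{T}(\widetilde{H})=H\cdot S^{k_{0}}_{T}$, so $C^{k_{0}}$ embeds naturally into $\widetilde{\bC}^{\{k_{0}\}\times\{T\}}$; therefore $f$ lies in the closure of the latter, contradicting \eqref{eq:nflvr1}.

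The main obstacle is not in the product-space manoeuvre above---both directions reduce to a clean coordinate projection or extension---but rather in invoking the classical cone-constrained NFLVR norm-closure characterization. Adapting Delbaen--Schachermayer (Corollary 3.7 in the unconstrained case) to $\mathfrak{C}$-constrained strategies requires improving a $y_{n}$-admissible approximating sequence whose terminal values lie above $-\delta_{n}$ (with $\delta_{n}\to 0$) into $\epsilon_{n}$-admissible ones with $\epsilon_{n}\to 0$; this is usually handled via the NUPBR$_{\mathfrak{C}}$ consequence of NFLVR$_{\mathfrak{C}}$ combined with a Koml\'os-type convex-combination argument, with care taken to keep the combinations inside the polyhedral cone $\mathfrak{C}$.
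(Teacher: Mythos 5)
Your proof is correct and follows essentially the same route as the paper: project onto a single coordinate $(k,\theta)$, where the combined strategy $\widetilde{H}(\theta)$ is an admissible strategy for the stopped market, to get one direction, and lift ordinary strategies with zero continuations into $\widetilde{\mathcal{A}}^{\Theta}$ with $D\times\Gamma=\{k\}\times\{T\}$ for the converse. The only difference is presentational: you route both directions explicitly through the classical Delbaen--Schachermayer-type equivalence between the sequential NFLVR$_{\mathfrak{C}}$ definition and the norm-closure condition for each stopped market, a step the paper uses implicitly without spelling it out.
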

\begin{proof}
	$``\Rightarrow":$ Assume that \eqref{eq:nflvr1} fails. There exist $D \subset \mathbb{N}, \Gamma \subset \Theta, \Sigma \subset \Theta$, a vector $\bL^{\infty,D \times \Gamma \times \Sigma,b}_{\infty, +} \ni \mathbf{f} \ne \mathbf{0}$ and a sequence $\mathbf{f}_n \in \widetilde{\bC}^{D \times \Gamma \times \Sigma}, n \in \mathbb{N}$ such that $\mathbf{f}_n \to \mathbf{f}$ in $\bL^{\infty,D\times \Gamma, \Sigma,b}$ in the sup norm topology. Let $k \in D, \theta \in \Gamma, \zeta \in \Sigma$ be such that $P(f^{k,\theta, \zeta} >0) >0$. The corresponding sequence $f^{k,\theta, \zeta}_n, n \in \mathbb{N}$ satisfies
	$$\|f^{k,\theta, \zeta}_n - f^{k,\theta,\zeta} \|_{\infty} \to 0.$$ Up to a subsequence, we may assume $\|f^{k,\theta, \zeta}_n - f^{k,\theta, \zeta} \|_{\infty} \le  \frac{1}{n}$  and therefore, 
	$f^{k,\theta, \zeta}_n \ge f^{k,\theta, \zeta} - \frac{1}{n} \ge - \frac{1}{n}$. This means the sequence $f^{k,\theta, \zeta}_n, n \in \mathbb{N}$ is a FLVR for the market up to $T_k \wedge \zeta$, which contradicts Assumption \ref{assumption_main} (i). 
	
	$``\Leftarrow":$ Choosing $D = \{k\}, \Gamma = \{\infty\}, \Sigma = \{\infty\}$ for an arbitrary $k \in \mathbb{N}$ yields 
	\begin{equation}\label{na1}
		\overline{\widetilde{\bC}^{ \{k\} \times \{\infty\} \times \{\infty\} }} \cap \bL^{\infty,\{k\} \times \{\infty\}\times \{\infty\},b}_{\infty,+} = \{0\}.
	\end{equation}
	Assume that the condition NFLVR$_{\mathfrak{C}}$ for the market up to time $T_k$ fails, that is 
	$\overline{C^k}^{\|\|_{\infty}}  \cap L^{\infty}_+ \ne \{0\}$. There exists a sequence $H_n1_{[[0,T_k]]} \in \mathcal{A}_{1/n}, n \in \mathbb{N}$, such that $\int_0^{T_k}H_{n,u} dS_u$ converges to $0 \ne W \in [0, \infty)$. On the event $\{T_k = \infty\}$, the terminal wealth is well-defined by the definition of $C^k$. We construct a sequence of generalized strategies $\widetilde{H}_n = (H_n1_{[[0, T_k]]},(H_n(\theta))_{\theta \in \Theta })$ where $H_n(\theta) = H_n1_{[[0, T_k]]}, \ \theta \in \Theta$. By construction, $\widetilde{H}_n \in 	\widetilde{\mathcal{A}}^{\Theta}_{1/n}$ for each $n \in \mathbb{N}$. The corresponding sequence of terminal values  $\widetilde{W}^{k,\infty,\infty}_{\infty}(\widetilde{H}_n)$ converges to $W$, which contradicts \eqref{na1}.  \qed 
\end{proof}
The first main result is the $\bw^*$-closedness of $\widetilde{\bC}^{\mathbb{N} \times \Theta \times \Theta}$ whose proof is given in Subsection \ref{sec:proof_thm:local_closed_0}.  
\begin{theorem}\label{thm:local_closed_0}
	Under Assumption \ref{assumption_main} (i), for any $\emptyset \ne D \times \Gamma \times \Sigma \subset \mathbb{N} \times \Theta \times \Theta$ such that $\infty \in \Gamma$ and $\infty \in \Sigma$,  the convex cone $\widetilde{\bC}^{D \times \Gamma \times \Sigma}$ defined in \eqref{eq:tilde_C} is $\bw^*$-closed. 
\end{theorem}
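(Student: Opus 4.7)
The plan is to prove that any net $(\bff_\alpha) \subset \widetilde{\bC}^{D\times\Gamma}$ that converges in the product $\bw^*$-topology to some $\bff$ in fact has its limit in $\widetilde{\bC}^{D\times\Gamma}$. I would proceed in four stages.

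\emph{Stage 1 — Reduction to a sequence of uniformly bounded vectors.} The inclusion $\widetilde{\bC}^{D\times\Gamma}\subset\bL^{\infty,D\times\Gamma,b}_T$ forces the limit $\bff$ to admit a uniform coordinatewise bound $M$, and $\widetilde{\bC}^{D\times\Gamma}$ is stable under $\bg\mapsto\bg\wedge(M+1)\mathbf{1}$ because $\widetilde{\bC}^{D\times\Gamma}_0$ is downward-free. After truncation I obtain a uniformly bounded net whose limit is still $\bff$. Since $D\times\Gamma$ is at most countable and each predual $L^1(\mathcal{F}_{T\wedge T_k},P)$ is separable, the product of closed balls in $w^*$ is metrizable, so I can replace the net by a sequence $(\bff_n)$.

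\emph{Stage 2 — Pre-exercise strategy $H$ from the $T$-column.} Write $\bff_n=(\widetilde W^{k,\theta}_T(\widetilde H_n)-g^{k,\theta}_n)_{(k,\theta)\in D\times\Gamma}$ with $\widetilde H_n=(H_n,(H_n(\theta))_\theta)\in\widetilde{\mathcal{A}}^{\Theta}_{x_n}$ and $g^{k,\theta}_n\ge 0$. The hypothesis $T\in\Gamma$ is the pivot: it gives $\widetilde W^{k,T}_T(\widetilde H_n)=H_n\cdot S_{T\wedge T_k}$, so the $T$-column of $(\bff_n)$ lies in the classical admissible wealth set of the bounded stopped market $S^k$. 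For each $k$, local NFLVR$_\mathfrak{C}$ on $S^k$ permits a forward-convex-combinations extraction à la Delbaen–Schachermayer (adapted to the polyhedral cone $\mathfrak{C}$, e.g.\ through \cite{napp} or \cite{kabanov}) producing $\bar H_n\in\mathrm{conv}(H_n,H_{n+1},\ldots)$ and a limit $H\in\mathcal{A}_x$ with $\bar H_n\cdot S^k\to H\cdot S^k$ in ucp on $[0,T]$. A diagonal across the countable index set $D$ yields a single $H$ and a single sequence of convex weights that work for every $k\in D$, with $H\cdot S_{T\wedge T_k}\ge f^{k,T}$ almost surely.

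\emph{Stage 3 — Post-exercise strategies $H(\theta)$ by diagonalisation.} For $\theta\in\Gamma\setminus\{T\}$ and $k\in D$ one has the decomposition
\begin{equation*}
\widetilde W^{k,\theta}_T(\widetilde H_n)=\bar H_n\cdot S^k_{\theta}+\int_{\theta\wedge T_k}^{T\wedge T_k}\bar H_n(\theta)\,dS_u,
\end{equation*}
and Stage 2, applied along one further diagonal over the countable set $\Theta$, ensures $\bar H_n\cdot S^k_\theta\to H\cdot S^k_\theta$ in probability for every $\theta$ and $k$. Combined with $\bw^*$-convergence of the left side, the integrals on the right are eventually admissible strategies on the post-$\theta$ stopped market, which again satisfies NFLVR$_\mathfrak{C}$ by Assumption \ref{assumption_main}(i). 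A second closedness-plus-forward-convex-combinations extraction on this market, diagonalised over the countable pair $(\theta,k)\in\Gamma\times D$, yields strategies $H(\theta)$ with $H(\theta)1_{]]\theta,T]]}\in\mathcal{A}_x\bigl(\theta,\int_0^\theta H_u\,dS_u\bigr)$. Setting $H(\theta)=0$ for $\theta\in\Theta\setminus\Gamma$, the assembled strategy $\widetilde H=(H,(H(\theta))_{\theta\in\Theta})$ lies in $\widetilde{\mathcal{A}}^{\Theta}$ and satisfies $\widetilde W^{k,\theta}_T(\widetilde H)\ge f^{k,\theta}$ for every $(k,\theta)\in D\times\Gamma$, so $\bff\in\widetilde{\bC}^{D\times\Gamma}$.

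\emph{Main obstacle.} The real difficulty is compatibility between Stages 2 and 3: the pre-exercise limit $H$ must be fixed first, but the post-exercise strategies $H(\theta)$ have to be built out of the \emph{very same} convex weights, or else the running-wealth term $\bar H_n\cdot S^k_\theta$ and the limiting starting wealth $H\cdot S_\theta$ in the admissibility bound cease to agree. The diagonal scheme must therefore be strictly nested — weights chosen during Stage 2 are only refined, never replaced, in Stage 3 — and one has to verify that forward convex combinations of the $H_n(\theta)$ taken with these shared weights remain admissible with respect to the updated (limit) starting wealth. The role of the subspace $\bL^{\infty,D\times\Gamma,b}_T$ in Stage 1 is exactly to supply a uniform coordinatewise bound, which unlocks Alaoglu compactness on the countable product and sidesteps the unavailable Krein–Smulian theorem emphasised before the statement.
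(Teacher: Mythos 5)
There is a genuine gap, and it sits in your Stage 1. To prove $\bw^*$-closedness you must handle an arbitrary convergent net, and your reduction to a uniformly bounded sequence does not work: (i) the truncation $\bg\mapsto\bg\wedge(M+1)\mathbf{1}$ is not $w^*$-continuous (lattice operations are not continuous for $\sigma(L^\infty,L^1)$), so the truncated net need not converge to $\bff$ any more; (ii) truncating from above gives no uniform lower bound on the $\bff_\alpha$, so the net is still not contained in a product of balls and Alaoglu/metrizability cannot be invoked; (iii) metrizability of the $w^*$-topology on balls requires separability of each $L^1(\mathcal{F}_{T\wedge T_k},P)$, which is nowhere assumed in this setting. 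Most importantly, even if these technical points were repaired, what your scheme would actually establish is closedness of $\widetilde{\bC}^{D\times\Gamma}$ along uniformly bounded, a.s.\ convergent sequences — i.e.\ Fatou-closedness — and in the countably infinite product the passage from Fatou-closedness to $\bw^*$-closedness is exactly what is \emph{not} available here: the predual of $\bL^{\infty,D\times\Gamma}_T$ is an LF space, not a Fr\'echet space, so the Krein--Smulian theorem (and with it Proposition \ref{cfr}, which is stated only for finite index sets) cannot be applied. Your Stage 1 therefore assumes away the central difficulty that the theorem is designed to overcome.

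The paper's proof replaces this missing step by a compactness mechanism you do not have: it first proves Fatou-closedness of $\widetilde{\bC}^{D\times\Gamma}_0$ for \emph{finite} $D\times\Gamma$ containing $T$ (Proposition \ref{pro:countable_Fatou_A}), where Proposition \ref{cfr} legitimately upgrades Fatou-closedness to $\bw^*$-closedness because the predual is then a finite direct sum of $L^1$ spaces; for a general countable index set it does not extract any sequence at all, but defines for each $(k,\theta)$ the closed convex set $\mathcal{H}^{k,\theta}$ of wealth vectors dominating $f^{k,\theta}$ in that coordinate, shows finite intersections are nonempty by the finite-index case, and concludes that the full intersection is nonempty because the set of all wealth vectors from $\widetilde{\mathcal{A}}^{\Theta}_x$ is $c$-bounded (local NUPBR, Lemma \ref{lemma:c-bounded}) and hence convexly compact in $(L^0)^{\mathbb{N}\times\Theta}$ (Proposition \ref{bfl}). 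Your Stages 2--3 are essentially a sketch of the content of Propositions \ref{pro:countable_Fatou} and \ref{pro:countable_Fatou_A} (and even there you pass over the maximal-element/switching argument, Lemma \ref{lemma_maximal}, which is what turns a.s.\ convergence of terminal values into convergence in the semimartingale topology so that Lemma \ref{lem:semi_closed} applies, and which also resolves the pre-/post-exercise compatibility you flag as the ``main obstacle''); but without a substitute for the convex compactness / finite-intersection-property step, your argument does not prove $\bw^*$-closedness.
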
 
Using the $\bw^*$-closedness, we apply the typical separation argument and obtain the new pricing systems below.
\begin{definition}\label{def:pricing_func_American}
	Let $\mathcal{T}$ be a localizing sequence and $\Theta$ be defined as in (\ref{defi:theta}). Let $D = \{k_1 < ...< k_p\}$,  $\Gamma = \{\theta_1 < ... < \theta_q\}$, $\Sigma = \{ \zeta_1, < ... < \zeta_r\}$. A $(D \times \Gamma \times \Sigma)-$pricing system is a vector $\bZ \in \bigoplus_{(k,\theta, \zeta) \in D \times \Gamma \times \Sigma} L^1(\mathcal{F}_{T_k \wedge \zeta},P)$ such that
	\begin{itemize}
		\item[(i)] $0 \le \mathbf{Z} $ and  $E\left[ \sum_{(k,\theta, \zeta) \in D \times \Gamma \times \Sigma} Z^{k,\theta, \zeta} \right] =1;$
		\item[(ii)] For any $u \in \mathfrak{C}$, any two stopping times  $\theta_{\ell} < \sigma \le \tau \le \theta_{\ell+1}$ for $\ell \in \{0,...,q-1\}$ ($\theta_0 := 0$ by convention) and $B \in \mathcal{F}_{\sigma}$, we have \begin{eqnarray}\label{eq:mart_11}
			E\left[ \sum_{i=1}^{p} \sum_{s = 1}^{r} \sum_{j=\ell+1}^{q}  Z^{k_i,\theta_j, \zeta_s} \left( u \cdot S^{T_{k_i} \wedge \zeta_s}_{\tau} - u \cdot S^{T_{k_i} \wedge \zeta_s}_{\sigma} \right)1_{B}  \right] \le 0;	
		\end{eqnarray}
		\item[(iii)] For any $u \in \mathfrak{C}$, any $\theta_j \in \Gamma$, any two stopping times $\theta_j < \sigma \le \tau $ and $B \in \mathcal{F}_{\sigma}$, we have
		\begin{equation}\label{eq:martingale_after1}
			E\left[ \sum_{i=1}^{p} \sum_{s=1}^{r} Z^{k_i,\theta_j, \zeta_s} \left( u \cdot  S^{T_{k_i} \wedge \zeta_s}_{\tau} - u \cdot  S^{T_{k_i} \wedge \zeta_s}_{\sigma} \right)1_{B}  \right] \le 0.
		\end{equation}
	\end{itemize}
\end{definition}
For $k \in D, \theta \in \Gamma, \zeta \in \Sigma$, we define also the martingales $$Z^{k,\theta,\zeta}_t = E[Z^{k, \theta, \zeta}|\F_t].$$ 
We may compare the new pricing systems to absolutely continuous martingale measures introduced in \cite{ds_absolute}. For each $\theta_j \in \Gamma, $ the condition \eqref{eq:martingale_after1} describes the supermartingale property of the deflated wealth processes after the exercise time $\theta_j$. The condition \eqref{eq:mart_11} explains the supermartingale property of the deflated wealth processes between two exercise times $\theta_{\ell}, \theta_{\ell+1}$. Remark \ref{ex:density} below illustrates this effect.
Note that we could always embed a $(D \times \Gamma \times \Sigma)-$pricing system into a $(\mathbb{N}\times \Theta \times \Theta)-$pricing system by setting $Z^{k,\theta,\zeta} = 0$ for $(k,\theta, \zeta) \notin D \times \Gamma \times \Sigma$.  Conversely, a $(\mathbb{N}\times \Theta \times \Theta)-$pricing system is also a $(D \times \Gamma \times \Sigma)-$pricing system for some $D, \Gamma, \Sigma$ by the definition of direct sum.  The set of such pricing systems is denoted by $
\mathcal{Z}^{A}$, which is useful for computing superhedging prices of American options.

In particular, when $\Theta = \{\infty\}$, the concept of pricing system in Definition \ref{def:pricing_func_American} is reduced as follows. 
\begin{definition}\label{def:pricing_func}
Let $\mathcal{T}$ be a localizing sequence. A $(D\times\Sigma)-$ pricing system is a vector $\bZ \in \bigoplus_{k \in D, \zeta \in \Sigma} L^1(\mathcal{F}_{T_k \wedge \zeta},P)$ such that
	\begin{itemize}
		\item[(i)] $0 \le \mathbf{Z} $ and  $E\left[ \sum_{k \in D, \zeta \in \Sigma} Z^{k, \zeta} \right] =1$; 
		\item[(ii)] For any $u \in \mathfrak{C}$ and any two stopping times $0 < \sigma \le \tau$ and $B \in \mathcal{F}_{\sigma}$, we have 	$$E\left[ \sum_{k \in D, \zeta \in \Sigma} Z^{k,\zeta} \left( u \cdot 
		S^{T_k \wedge \zeta}_{\tau} - u \cdot  S^{T_k \wedge \zeta}_{\sigma} \right)1_{B}  \right] \le 0.$$	
	\end{itemize}
\end{definition}
The set of such pricing systems is denoted $\mathcal{Z}$, which is used for pricing European options. 
\begin{example}\label{example}
	Let $\mathfrak{C} = \mathfrak{C}^u$ and $d=1$.
	Assume $Q^{k}$ is an equivalent local martingale measure for the market up to $T_k$. For each $k \in \mathbb{N}$, denote $Z^{k}:=\left. dQ^{k}/dP\right|_{ T_k}$. The vector  $\mathbf{Z}=(0,...0,Z^{k},0,...)$ is a $(\{k\} \times \{\infty\})$-pricing system. The vector $\mathbf{Z}=0.5(0,...0,Z^{k_1},0...0,Z^{k_2},0...)$ is a $(\{k_1,k_2\} \times \{\infty\})$-pricing system.  We note that from such $(\{k\} \times \{\infty\})$-pricing systems $\bZ$, we can concatenate $Z^{k}, k \in \mathbb{N}$ to get an ELMD by standard arguments, see for example in  \cite{cdm15}. An ELMD is not an element in $\bigoplus_{k \in \mathbb{N}, \zeta \in \Theta} L^{1}$, and thus, it is not suitable for our computations in the product spaces.
\end{example}
\begin{remark}\label{ex:density}%Also, perhaps a remark should be added for the case $\mathfrak{C}=\mathbb{R}^d$ (in this case there is no need for $u$ in these equations). I would also add an explanation about all the three conditions (17), (18), (19) in a remark: they are difficult to understand.
	Consider the case without constraint $\mathfrak{C} = \mathfrak{C}^u, d= 2$. Let $0 < \sigma \le \tau$ be two stopping times and $B \in \mathcal{F}_{\sigma}$. Recall that  $\mathcal{S}^{1,T_k}_t = \mathcal{S}^{1}_{t \wedge T_{k}}, \mathcal{S}^{2,T_k}_t = \mathcal{S}^{2}_{t \wedge T_{k}} $. 

The case $D = \{k\}, \Gamma = \{\theta\}, \Sigma = \{\infty\}$:   Choosing $u \in \{ (\pm 1,0), (0, \pm 1)\}$, the condition \eqref{eq:mart_11} yields for $0 < \sigma \le \tau \le \theta$,
		\begin{eqnarray*}
			E\left[ \left(Z^{k,\theta,\infty}_{\tau} 
			\mathcal{S}^{1,T_k}_{\tau} - Z^{k,\theta,\infty}_{\sigma} \mathcal{S}^{1,T_k}_{\sigma} \right)1_{B}  \right] &=& 0,\\	 
			E\left[ \left(Z^{k,\theta,\infty}_{\tau} 
			\mathcal{S}^{2,T_k}_{\tau} - Z^{k,\theta,\infty}_{\sigma} \mathcal{S}^{2,T_k}_{\sigma} \right)1_{B}  \right] &=& 0.
		\end{eqnarray*} 
		Similarly, for $\theta < \sigma \le \tau < \infty$,  the condition \eqref{eq:martingale_after1} gives
		\begin{eqnarray*}
			E\left[ \left( Z^{k,\theta,\infty}_{\tau}\mathcal{S}^{1,T_k}_{\tau} - Z^{k,\theta,\infty}_{\sigma}\mathcal{S}^{1,T_k}_{\sigma} \right)1_{B}  \right] &=& 0,\\ E\left[ \left( Z^{k,\theta,\infty}_{\tau}\mathcal{S}^{2,T_k}_{\tau} - Z^{k,\theta,\infty}_{\sigma}\mathcal{S}^{2,T_k}_{\sigma} \right)1_{B}  \right] &=& 0.
		\end{eqnarray*}
		In other words, the products $Z^{k,\theta,\infty}_{t}\mathcal{S}^{1,T_k}_{t}, Z^{k,\theta,\infty}_{t}\mathcal{S}^{2,T_k}_{t}$ are martingales on $[0,\infty)$.

The case $D = \{k\}, \Gamma = \{\theta_1, \theta_2\}, \Sigma = \{\infty\}$: Again, the condition \eqref{eq:mart_11} yields for $0 < \sigma \le \tau \le \theta_1,$
		\begin{eqnarray*}
			E\left[ \left(\left( Z^{k,\theta_1,\infty}_{\tau} + Z^{k,\theta_2,\infty}_{\tau} \right)  
			\mathcal{S}^{1,T_k}_{\tau} - \left( Z^{k,\theta_1,\infty}_{\sigma} + Z^{k,\theta_2,\infty}_{\sigma} \right)  \mathcal{S}^{1,T_k}_{\sigma} \right)1_{B}  \right] &=& 0,	\\
			E\left[ \left(\left( Z^{k,\theta_1,\infty}_{\tau} + Z^{k,\theta_2,\infty}_{\tau} \right)
			\mathcal{S}^{2,T_k}_{\tau} - \left( Z^{k,\theta_1,\infty}_{\sigma} + Z^{k,\theta_2,\infty}_{\sigma} \right)  \mathcal{S}^{2,T_k}_{\sigma} \right)1_{B}  \right] &=& 0.
		\end{eqnarray*} 
		For $\theta_1 < \sigma \le \tau \le \theta_2$, we obtain
		\begin{eqnarray}
			E\left[ \left(  Z^{k,\theta_2,\infty}_{\tau}  
			\mathcal{S}^{1,T_k}_{\tau} -  Z^{k,\theta_2,\infty}_{\sigma} \mathcal{S}^{1,T_k}_{\sigma} \right)1_{B}  \right] &=& 0, \label{ex:density1}	\\
			E\left[ \left( Z^{k,\theta_2,\infty}_{\tau}
			\mathcal{S}^{2,T_k}_{\tau} -  Z^{k,\theta_2,\infty}_{\sigma}   \mathcal{S}^{2,T_k}_{\sigma} \right)1_{B}  \right] &=& 0. \label{ex:density31}
		\end{eqnarray} 
		The condition \eqref{eq:martingale_after1} leads to
		\begin{eqnarray}
			E\left[ \left( Z^{k,\theta_1,\infty}_{\tau}  
			\mathcal{S}^{1,T_k}_{\tau} -  Z^{k,\theta_1,\infty}_{\sigma}  \mathcal{S}^{1,T_k}_{\sigma}  \right)1_{B}  \right] &=& 0, \label{ex:density2} \\
			E\left[ \left( Z^{k,\theta_1,\infty}_{\tau}  
			\mathcal{S}^{2,T_k}_{\tau} -  Z^{k,\theta_1,\infty}_{\sigma}  \mathcal{S}^{2,T_k}_{\sigma}  \right)1_{B}  \right] &=& 0.  \label{ex:density4}
		\end{eqnarray}
		for $\theta_1 < \sigma < \tau < \infty$, and
		\begin{eqnarray*}
			E\left[ \left( Z^{k,\theta_2,\infty}_{\tau}  
			\mathcal{S}^{1,T_k}_{\tau} -  Z^{k,\theta_2,\infty}_{\sigma}  \mathcal{S}^{1,T_k}_{\sigma} \right)1_{B}  \right] &=& 0, \\
			E\left[ \left( Z^{k,\theta_2,\infty}_{\tau}  
			\mathcal{S}^{2,T_k}_{\tau} -  Z^{k,\theta_2,\infty}_{\sigma}  \mathcal{S}^{2,T_k}_{\sigma} \right)1_{B}  \right] &=& 0,
		\end{eqnarray*}
		for $\theta_2 < \sigma < \tau < \infty$. Note that \eqref{ex:density1}  differs from \eqref{ex:density2} and \eqref{ex:density31} differs from  \eqref{ex:density4}.  

The case $D = \{k_1,...,k_p\}, \Gamma = \{\theta_1,..., \theta_q\}, \Sigma = \{\infty\}$: Again, for $\theta_{\ell} < \sigma \le \tau \le \theta_{\ell + 1}, \ell \in \{0,...,q-1\}$, the condition \eqref{eq:mart_11} yields 
		\begin{eqnarray*}
			E\left[ \sum_{i=1}^{p}\left(\sum_{j=\ell+1}^{q}Z^{k_i,\theta_j,\infty}_{\tau} \mathcal{S}^{1,T_{k_i}}_{\tau} 
			- \sum_{j=\ell+1}^{q}Z^{k_i,\theta_j,\infty}_{\sigma} \mathcal{S}^{1,T_{k_i}}_{\sigma} \right)1_{B}  \right] &=& 0,	\\
			E\left[ \sum_{i=1}^{p}\left(\sum_{j=\ell+1}^{q}Z^{k_i,\theta_j,\infty}_{\tau} \mathcal{S}^{2,T_{k_i}}_{\tau} 
			- \sum_{j=\ell+1}^{q}Z^{k_i,\theta_j,\infty}_{\sigma} \mathcal{S}^{2,T_{k_i}}_{\sigma} \right)1_{B}  \right] &=& 0.
		\end{eqnarray*} 
		For each $\theta \in \Gamma,$ the condition \eqref{eq:martingale_after1} leads to
		\begin{eqnarray*}
			E\left[  \sum_{i=1}^{p} \left(  Z^{k_i,\theta,\infty}_{\tau}
			\mathcal{S}^{1,T_{k_i}}_{\tau} -  Z^{k_i,\theta,\infty}_{\sigma}  \mathcal{S}^{1,T_{k_i}}_{\sigma}  \right)1_{B}  \right] &=& 0,\\
			E\left[  \sum_{i=1}^{p} \left(  Z^{k_i,\theta,\infty}_{\tau}
			\mathcal{S}^{2,T_{k_i}}_{\tau} -  Z^{k_i,\theta,\infty}_{\sigma}  \mathcal{S}^{2,T_{k_i}}_{\sigma}  \right)1_{B}  \right] &=& 0, \label{ex:density3}
		\end{eqnarray*}
		for $\theta < \sigma < \tau < \infty$.
\end{remark}
\begin{remark}\label{ex:density_short} Consider the setting in Remark \ref{ex:density} for the case with no short sale constraint for the stock $\mathcal{S}^{1}$. In this situation, all the equalities in the equations for $\mathcal{S}^{1}$ are replaced by the inequality $\le$, and we get the  supermartingale properties for the corresponding deflated versions of $\mathcal{S}^{1}$.
\end{remark}
We state a version of the first Fundamental Theorem of Asset Pricing. This time it is formulated in terms of pricing systems instead of martingale measures, noting that generalized strategies are used for trading.
\begin{corollary}\label{cor:density_A}[FTAP] Let $\mathfrak{C} = \mathfrak{C}^u$ or $\mathfrak{C} = \mathfrak{C}^s$. The following are equivalent:
	\begin{itemize}
		\item[(i)] Assumption \ref{assumption_main} (i) holds;
		\item[(ii)] for any $(k,\theta, \zeta) \in D \times \Gamma \times \Sigma$ such that $\infty \in \Gamma, \infty \in \Sigma$ and $A \in \mathcal{F}_{T_{k}\wedge \zeta}$ with $P(A) > 0$, there exists a $(D \times \Gamma \times \Sigma)-$pricing system satisfying $E[Z^{k,\theta,\zeta}1_{A}] > 0.$
		\item[(iii)] for any $(k, \zeta) \in D  \times \Sigma$ such that $\infty \in \Sigma$ and $A \in \mathcal{F}_{T_{k}\wedge \zeta}$ with $P(A) > 0$, there exists a $(D\times \Sigma)-$pricing system  satisfying $E[Z^{k,\zeta}1_{A}] > 0.$
	\end{itemize}
\end{corollary}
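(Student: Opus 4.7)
The plan is to close the cycle $(i)\Rightarrow(ii)\Rightarrow(iii)\Rightarrow(i)$. The middle step is a mere specialisation: with $\Gamma=\{T\}$, condition \eqref{eq:mart_11} of Definition \ref{def:pricing_func_American} collapses to the supermartingale condition in Definition \ref{def:pricing_func}, while \eqref{eq:martingale_after1} becomes vacuous, so a $(D\times\{T\})$-pricing system is precisely a $D$-pricing system.

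For $(i)\Rightarrow(ii)$, assume without loss of generality that $T\in\Gamma$. By Theorem \ref{thm:local_closed_0}, $\widetilde{\bC}^{D\times\Gamma}$ is a $\bw^*$-closed convex cone in the locally convex space $\bL^{\infty,D\times\Gamma,b}_T$, and since $D\times\Gamma$ is finite, its topological dual under $\bw^*$ is the direct sum $\bigoplus_{(k',\theta')\in D\times\Gamma}L^1(\mathcal{F}_{T\wedge T_{k'}},P)$. Fix $(k,\theta)\in D\times\Gamma$ and $A\in\mathcal{F}_{T\wedge T_k}$ with $P(A)>0$, and set $\bff^{*}:=1_A\mathbf{1}^{(k,\theta)}$. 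Then $\bff^{*}\in\bL^{\infty,D\times\Gamma,b}_{T,+}\setminus\{0\}$, and by the NFLVR reformulation proved just above, $\bff^{*}\notin\widetilde{\bC}^{D\times\Gamma}$. Hahn-Banach separation produces $\bZ$ in the above direct sum with $\langle\bZ,\bff\rangle\le 0$ for all $\bff\in\widetilde{\bC}^{D\times\Gamma}$ and $\langle\bZ,\bff^{*}\rangle=E[Z^{k,\theta}_T 1_A]>0$. Since $-\bL^{\infty,D\times\Gamma,b}_{T,+}\subset\widetilde{\bC}^{D\times\Gamma}$ (via the zero strategy), $\bZ\ge 0$ componentwise; rescale so that $E[\sum Z^{k',\theta'}_T]=1$. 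I then read \eqref{eq:mart_11} and \eqref{eq:martingale_after1} off from $\langle\bZ,\widetilde{W}(\widetilde{H})\rangle\le 0$ by feeding concrete test strategies: for \eqref{eq:mart_11} at $\theta_\ell<\sigma\le\tau\le\theta_{\ell+1}$ with $u\in\mathfrak{C}$, $B\in\mathcal{F}_\sigma$, take $\widetilde{H}$ with pre-exercise leg $H_t=u\,1_{(\sigma,\tau]}(t)\,1_B$ and every post-exercise leg zero, giving $\widetilde{W}^{k',\theta'}_T=(u\cdot S^{k'}_\tau-u\cdot S^{k'}_\sigma)1_B$ for $\theta'\ge\theta_{\ell+1}$ and $0$ for $\theta'\le\theta_\ell$; for \eqref{eq:martingale_after1} take $H\equiv 0$ together with one active post-exercise leg $H(\theta_j)_t=u\,1_{(\sigma,\tau]}(t)\,1_B$ at $\theta_j<\sigma$. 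Admissibility of both families follows from boundedness of the stopped $S^{k'}$, and conditioning via the martingale $Z^{k',\theta'}_s:=E[Z^{k',\theta'}_T\mid\mathcal{F}_{s\wedge T_{k'}}]$ turns the separation inequalities into the claimed identities.

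For $(iii)\Rightarrow(i)$, suppose NFLVR$_{\mathfrak{C}}$ fails on $[0,T\wedge T_k]$: there exist $H_n\in\mathcal{A}_{1/n}$ supported on $[0,T\wedge T_k]$ with $H_n\cdot S_T\to W$ a.s., $W\ge 0$ and $P(W>0)>0$. Pick $\epsilon>0$ so that $A:=\{W\ge\epsilon\}$ has $P(A)>0$, and invoke $(iii)$ with $D=\{k\}$ to produce $\bZ$ with $E[Z^k_T 1_A]>0$. The trivial extension $\widetilde{H}_n:=(H_n,(0)_{\theta\in\Theta})$ lies in $\widetilde{\mathcal{A}}^{\Theta}_{1/n}$ and satisfies $\widetilde{W}^{k,T}_T(\widetilde{H}_n)=H_n\cdot S_T$, so Proposition \ref{pro:1} (applied with $\Gamma=\{T\}$, $D=\{k\}$) gives $E[Z^k_T(H_n\cdot S_T)]\le 0$. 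Fatou applied to the nonnegative sequence $Z^k_T(H_n\cdot S_T+1/n)$ then yields $E[Z^k_T W]\le\liminf_n\bigl(E[Z^k_T(H_n\cdot S_T)]+n^{-1}E[Z^k_T]\bigr)\le 0$, contradicting $E[Z^k_T W]\ge\epsilon\,E[Z^k_T 1_A]>0$.

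The hard part will be the separation step in $(i)\Rightarrow(ii)$: the classical Delbaen-Schachermayer route obtains $\bw^*$-closedness almost for free via the Krein-Smulian theorem, but that route is blocked here because the predual of $\bL^{\infty,\mathbb{N}\times\Theta,b}_T$ is an LF-space rather than Fréchet, so all of the topological heavy lifting is quarantined inside Theorem \ref{thm:local_closed_0}. A subsidiary bookkeeping issue is choosing the test strategies above so that the truncation $1_{t\le\theta'}$ correctly activates or annihilates the integrand across the various exercise times $\theta'\in\Gamma$, thereby isolating exactly the linear combinations that appear in \eqref{eq:mart_11}-\eqref{eq:martingale_after1}.
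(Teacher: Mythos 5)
Your proposal is correct and follows essentially the same route as the paper: $\bw^*$-closedness from Theorem \ref{thm:local_closed_0}, Hahn--Banach separation of $1_A\mathbf{1}^{(k,\theta)}$ from the cone $\widetilde{\bC}^{D\times\Gamma}$, nonnegativity and normalization of the separating functional, and the same test strategies (pre-exercise leg on $]]\sigma,\tau]]$ for \eqref{eq:mart_11}, a single post-exercise leg for \eqref{eq:martingale_after1}), with (ii)$\Rightarrow$(iii) by specializing $\Gamma=\{T\}$. Your (iii)$\Rightarrow$(i) step merely spells out, via the failing FLVR sequence, Proposition \ref{pro:1} and Fatou's lemma, what the paper compresses into a one-line appeal to Proposition \ref{pro:1}, so it is the same argument in a more explicit form.
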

\begin{remark}
We only use the local boundedness of $S$ in the proof of Corollary \ref{cor:density_A} when constructing admissible strategies. Theorem \ref{thm:local_closed_0} remains valid even the local boundedness property is dropped. This is similar to the results in  \cite{kabanov}. We believe that extending other results to the case with    unbounded $S$ is possible and one has to take care of big jumps of $S$ as in the proof of Main Theorem 14.1.1 of \cite{delbaen2006mathematics}. We leave this extension for future studies, because the key ideas in the present paper will certainly be lost under the technical complexity involving sigma-martingales for the unbounded case.
\end{remark}
In discrete time settings, versions of FTAPs under cone constraints were given in \cite{pt99}, \cite{napp}. In Application 3.2 of \cite{pt99}, the condition no time $t$ local arbitrage is equivalent to the existence of a $\mathcal{F}_t$-measurable density $Z$ such that 
\begin{equation}\label{eq:pham_touzi}
	E[Z(diag(S_{t-1})^{-1}S_t - 1)|\mathcal{F}_{t-1}] \in \widetilde{\mathfrak{C}}, 
\end{equation}
where $\widetilde{\mathfrak{C}}: = \{x \in \mathbb{R}^d: x \cdot y \le 0, \forall y \in \mathfrak{C}\}$. The conditions \eqref{eq:mart_11}, \eqref{eq:martingale_after1} are close in spirit to \eqref{eq:pham_touzi}, however, \cite{pt99} considered the constraint $diag(S_t) H_t \in \mathfrak{C}$ instead.

The new pricing system also gives supermartingale property for wealth processes in the product framework.
\begin{proposition}\label{pro:1}
	Let $\mathfrak{C} = \mathfrak{C}^u$ or $\mathfrak{C} = \mathfrak{C}^s$. Let $\bZ$ be a $(D\times \Gamma \times \Sigma)-$pricing system. For any $\widetilde{H} \in \widetilde{\mathcal{A}}^{\Theta}$, we have 
	$$E\left[\sum_{(k,\theta, \zeta) \in D \times \Gamma \times \Sigma}Z^{k,\theta,\zeta} \widetilde{W}^{k,\theta,\zeta}_{\infty}(\widetilde{H}) \right] \le 0.$$
\end{proposition}
\begin{remark}
	The supermartingale property can be obtained from nonnegative wealth processes and ELMDs. For some $H \in \mathcal{A}_x$ and an ELMD $Y$, the local martingale $Y_t\left( x + H \cdot S_t \right)$ is bounded from below by zero and hence a supermartingale. We get that 	\begin{equation}\label{eq:supermart} 		E\left[ Y_T \left( x + H \cdot S_{T} \right)  \right] \le x, \ \forall T > 0,
	\end{equation} 	
	and thus $$E\left[Y_T \left(H \cdot S)_{T} \right)  \right] \le x - xE[Y_T].$$
	When $Y$ is a strict local martingale, the upper bound may be different from zero. In that case, $Y_t\left(H \cdot S_t \right)$ is not a supermartingale, as expected under arbitrage. 
\end{remark}
\begin{remark}
	%Last remark: I think that at some point there should be a discussion about the note, that is, in previous work they were working with $C_x^x$ while we can work with $C_x^z$. (Clearly, this should be explained without introducing new letters.)
	For non-negative wealth processes with an initial capital $z > 0$, the corresponding strategies must be in $\mathcal{A}_z$. The closedness of the set of non-negative wealth processes with the initial capital $z$ in the semimartingale topology is proved in  \cite{Kardaras} by a change of num\'eraire technique. The author discussed in Remark 1.12 that the supermartingale property of discounted processes is not suitable when wealth is negative and asked for different techniques to work with negative wealth processes. In this paper, we investigate the situation where the initial capital $z$ differs from the credit constraint $x$, i.e., the wealth processes can be negative as long as they are uniformly bounded below by $-x$. Therefore, the set of wealth processes in our setting is larger than that of \cite{Kardaras}, which may lead to higher expected utilities and smaller hedging prices.  The supermartingale property is preserved by the new pricing systems, as in Proposition \ref{pro:1}.
\end{remark}

Our second main result of the paper is the superhedging duality for American options. 
\begin{theorem}\label{thm:hedge_American}
	Let $\mathfrak{C} = \mathfrak{C}^u$ or $\mathfrak{C} = \mathfrak{C}^s$. Let Assumption \ref{assumption_main} be in force. Assume that $\Phi$ is bounded or $\Phi \ge 0$. Let $x>0$ be a fixed credit constraint such that $\Phi_{t} \ge -x, \ a.s.$, for any $t \ge 0$. There exists $(z,\widetilde{H})\in \mathbb{R} \times \widetilde{\mathcal{A}}^{\mathbb{T}}_{}$ such that
	$$
z + \int_{0}^{t}\widetilde{H}_u(\tau)\cdot dS_u - \Phi_{\tau}1_{\tau \le t} \ge -x1_{t < \infty}, \ a.s. \ \forall \tau \in \mathbb{T}, t \in [0,\infty]
	$$
	if and only if
	\begin{equation}\label{eq:duality}
		z\geq\sup_{\bZ\in \mathcal{Z}^{A}}E\left[\sum_{k \in \mathbb{N}, \theta\in \Theta, \zeta \in \Theta}Z^{k,\theta,\zeta} \left( \Phi_{\theta}1_{\theta\le  \zeta \wedge T_k } - x 1_{\zeta \wedge T_k <\infty} \right) \right].
	\end{equation} 
\end{theorem}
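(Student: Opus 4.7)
My strategy is to first apply Proposition \ref{pro:reduce} to replace the American superhedging problem by a vectorial ``multi-exercise European'' problem in the product space, and then combine the $\bw^{*}$-closedness from Theorem \ref{thm:local_closed_0} with a Hahn--Banach separation argument. Write $f^{k,\theta}:=\Phi_{\theta}1_{T_{k}\ge T}-x 1_{T_{k}<T}$ and $\bff^{*}:=(f^{k,\theta})_{(k,\theta)\in\mathbb{N}\times\Theta}$. By Proposition \ref{pro:reduce}, the existence of a superhedging pair $(z,\widetilde{H})$ as in the statement is equivalent to $\bff^{*}-z\mathbf{1}\in\widetilde{\bC}^{\mathbb{N}\times\Theta}_{0}$, i.e.\ to the existence of $\widetilde{G}\in\widetilde{\mathcal{A}}^{\Theta}_{x+z}$ with $z+\widetilde{W}^{k,\theta}_{T}(\widetilde{G})\ge f^{k,\theta}$ a.s.\ for every $(k,\theta)$. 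The ``only if'' direction then follows immediately from Proposition \ref{pro:1}: multiplying the hedging inequality by $Z^{k,\theta}_{T}\ge 0$, summing, taking expectations, and using $\sum E[Z^{k,\theta}_{T}]=1$ together with $E[\sum Z^{k,\theta}_{T}\widetilde{W}^{k,\theta}_{T}(\widetilde{G})]\le 0$ yields $z\ge E[\sum Z^{k,\theta}_{T}f^{k,\theta}]$, whence \eqref{eq:duality} upon taking the supremum.

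For the ``if'' direction, assume first that $\Phi$ is bounded. Arguing by contradiction, suppose no superhedging strategy at price $z$ exists; then $\bff^{*}-z\mathbf{1}\in\bL^{\infty,\mathbb{N}\times\Theta,b}_{T}$ but $\bff^{*}-z\mathbf{1}\notin\widetilde{\bC}^{\mathbb{N}\times\Theta}$. Since this convex cone is $\bw^{*}$-closed by Theorem \ref{thm:local_closed_0}, Hahn--Banach separation in the locally convex space $(\bL^{\infty,\mathbb{N}\times\Theta,b}_{T},\bw^{*})$ produces a non-zero continuous functional, which is represented by an element $\bZ\in\bigoplus_{(k,\theta)}L^{1}(\mathcal{F}_{T\wedge T_{k}},P)$ of finite support, and some $\alpha\in\mathbb{R}$ with
$$
E\left[\sum_{(k,\theta)}Z^{k,\theta}_{T}h^{k,\theta}\right]\le\alpha<E\left[\sum_{(k,\theta)}Z^{k,\theta}_{T}(f^{k,\theta}-z)\right]\quad\forall\bh\in\widetilde{\bC}^{\mathbb{N}\times\Theta}.
$$
Since $-\bL^{0,\mathbb{N}\times\Theta}_{T,+}\cap\bL^{\infty,\mathbb{N}\times\Theta,b}_{T}\subset\widetilde{\bC}^{\mathbb{N}\times\Theta}$, I deduce that $\bZ$ is componentwise non-negative; since $\widetilde{\bC}^{\mathbb{N}\times\Theta}$ is a cone containing $\mathbf{0}$, necessarily $\alpha=0$; and because $\bZ\ne 0$, renormalization gives $\sum E[Z^{k,\theta}_{T}]=1$.

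The decisive and most delicate step is to verify that this $\bZ$ belongs to $\mathcal{Z}^{A}$, i.e.\ satisfies the supermartingale inequalities \eqref{eq:mart_11}--\eqref{eq:martingale_after1}. I test against simple generalized strategies. For \eqref{eq:martingale_after1} (post-exercise), I take $H\equiv 0$, $H(\theta_{j})=u 1_{B}1_{(\sigma,\tau]}$ with $u\in\mathfrak{C}$, $\theta_{j}<\sigma\le\tau\le T$, $B\in\mathcal{F}_{\sigma}$, and $H(\theta)=0$ for $\theta\ne\theta_{j}$; the terminal wealth vector has the single non-zero component $u\cdot(S^{k}_{\tau}-S^{k}_{\sigma})1_{B}$ on the $(k,\theta_{j})$ slot, so $E[\sum Z h]\le 0$ reads \eqref{eq:martingale_after1} after invoking $Z^{k,\theta_{j}}_{\tau}=E[Z^{k,\theta_{j}}_{T}\mid\mathcal{F}_{\tau}]$. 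For \eqref{eq:mart_11} (between consecutive exercise times) with $\theta_{\ell}<\sigma\le\tau\le\theta_{\ell+1}$, I take $H=u 1_{B}1_{(\sigma,\tau]}$ and $H(\theta)=0$ for every $\theta$; a direct computation shows that the resulting wealth vanishes on slots $(k,\theta)$ with $\theta\le\theta_{\ell}$ and equals $u\cdot(S^{k}_{\tau}-S^{k}_{\sigma})1_{B}$ on those with $\theta\ge\theta_{\ell+1}$, yielding \eqref{eq:mart_11}. Applying the separation inequality to the specific vector $\bff^{*}-z\mathbf{1}$ then gives $z<E[\sum Z^{k,\theta}_{T}f^{k,\theta}]$, contradicting $z\ge\sup_{\bZ\in\mathcal{Z}^{A}}E[\sum Z^{k,\theta}_{T}f^{k,\theta}]$.

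The case $\Phi\ge 0$ (possibly unbounded above) is handled by monotone truncation: apply the bounded case to $\Phi^{N}:=\Phi\wedge N$ to obtain superhedging prices $z_{N}$ satisfying the duality; $z_{N}\uparrow z^{*}$, and monotone convergence against each finite-support $\bZ\in\mathcal{Z}^{A}$ identifies $z^{*}$ with the right-hand side of \eqref{eq:duality} for $\Phi$. Existence of a superhedging strategy at $z^{*}$ finally follows by a closedness argument applied directly to $\widetilde{\bC}^{\mathbb{N}\times\Theta}_{0}$ (with no boundedness ceiling needed), combined with a Koml\'os-type passage to the limit on the hedging vectors $(\widetilde{G}^{N})$.
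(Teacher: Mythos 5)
Your proposal follows essentially the same route as the paper's proof: the reduction via Proposition \ref{pro:reduce}, the ``only if'' direction from the supermartingale bound of Proposition \ref{pro:1}, the bounded case by Hahn--Banach separation against the $\bw^*$-closed cone of Theorem \ref{thm:local_closed_0} with the finite-support functional verified to lie in $\mathcal{Z}^{A}$ by testing simple generalized strategies (exactly the argument of Corollary \ref{cor:density_A}), and the nonnegative unbounded case by truncation plus convex combinations and Fatou-closedness (Lemma \ref{lemma:c-bounded}, Proposition \ref{pro:countable_Fatou_A}). The only point to phrase carefully, as the paper does, is that the Koml\'os-type passage to the limit must be performed on the $c$-bounded terminal wealth vectors rather than on the strategies $\widetilde{G}^{N}$ themselves, and that the test strategies should be stopped at $\max_{k\in D}T_k$ so their wealth vectors lie in the bounded subspace $\bL^{\infty,\mathbb{N}\times\Theta,b}_T$.
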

The proof of this theorem is given in Section \ref{subsection:proof_American}. 
Other ideas to transform problems with American options into ones with European options were discussed in literature, for example, in \cite{vek}, \cite{adot}. It is also known that we need to enlarge the probability space to include uncertain exercise times. Under transaction costs, \cite{vek} introduced  $\mathfrak{T}^m:=\{t_k^m =k2^{-m}T,k=0,...,2^m\}$ and equipped the product space $\Omega \times [0, T ]$ with the product measure $P \otimes \nu^m$ where $T$ is the time horizon and the measure $\nu^m$ charges only the points of $\mathfrak{T}^m$ with equal weights $1/(2^m + 1)$. The concept of  Fatou-convergence in $L^0(P \otimes \nu^m)$ therein and our Fatou-convergence are very much similar. As explained in \cite{vek}, ``the expected ``value" of an American claim is an expectation of the weighted average of ``values" of assets obtained by the option holder for a variety of exercise dates". The authors in \cite{vek} therefore introduced a specific class of coherent price
systems which perfoms this idea. In a discrete time setting $\mathfrak{T} = \{1,2.,...N\}$, the paper \cite{adot} employed the enlarged probability space $\Omega \times \mathfrak{T}$ and showed that superhedging prices of American options are exactly superhedging prices of the corresponding European options in the enlarged space, and the superhedging duality holds true. Our approach shares similar spirits to the two mentioned work, or in terms of mathematics, this could be seen by comparing the enlarged space $L^0(P \otimes \nu^m)$ to the product space $\prod_{\theta \in \mathfrak{T}^m}L^0(P)$. However, there are some features that distinguish our approach from the mentioned ones. First, the approach with product spaces naturally incorporates stopping times under the local NFLVR$_{\mathfrak{C}}$ condition, while the others may need more effort to work with stopping times. Secondly, it is possible to equip suitable topologies in continuous time settings (as the $\bw^*$ topology in our framework), and to give financial and geometrical meanings, in analogous to the classical settings where NFLVR$_{\mathfrak{C}}$ holds globally.

A superhedging duality was also given in the book \cite{kk2021} where the case with continuous asset prices and nonnegative hedging portfolios is treated. In comparison, the superhedging price in the present paper may be smaller than that of \cite{kk2021}, because the sellers could simultaneously exploit better credit constraints and arbitrage profits after the exercise time. On the other hand, the authors in  \cite{kk2021} followed a stochastic analysis approach and employed the optional decomposition theorem for
continuous semimartingales in general filtrations under the existence of strictly positive local martingale deflators, see also \cite{kk15}. A general optional decomposition was given in  \cite{sy1998}. Our approach is completely different and relies on functional analysis, instead. 

\section{Pricing and hedging European options}\label{sec:eur}
Superhedging dualities for European options using 
nonnegative portfolios were discussed in Section 4.7 of \cite{kk}, and in \cite{khasanov}  by the change of num\'eraire technique (without the existence of optimal hedging strategies), or in \cite{kk2021} by a decomposition theorem. In this section, we explain how to use the present approach for European options. Let $G_{T^{\sharp}}$ be a European option with maturity $T^{\sharp} < \infty$. By Proposition \ref{pro:reduce}, in order to superhedge $G_{T^{\sharp}}$, we need to find a capital $z$ and a strategy $H \in \A$ (here, there is no need to use a generalized strategy) such that $\forall t \in [0,\infty]$
\begin{equation}\label{eq:super_Eur}
z + \int_{0}^{t}H_u dS_u - G_{T^{\sharp}}1_{T^{\sharp} \le t} \ge -x1_{t < \infty}, \ a.s., 
\end{equation}
or equivalently, to find a capital $z$ and a strategy $H \in \A$ satisfying for all $k \in \mathbb{N}, \zeta \in \Theta$,
\begin{equation}\label{eq:super_Eur_prod}
z + \int_0^{\zeta \wedge T_k} H_u dS_u - G_{T^{\sharp}}1_{T^{\sharp}\le  \zeta \wedge T_k }  \ge  - x 1_{\zeta \wedge T_k <\infty}, \ a.s.
\end{equation}
Therefore, a superhedging duality for European options can be deduced from Theorem  \ref{thm:hedge_American}. 
\begin{theorem}\label{thm:hedge_Eur}
	Let $\mathfrak{C} = \mathfrak{C}^u$ or $\mathfrak{C} = \mathfrak{C}^s$. Let Assumption \ref{assumption_main} be in force. Let $G_{T^{\sharp}}$ be a $\mathcal{F}_{T^{\sharp}}$-measurable contingent claim such that $G_{T^{\sharp}}$ is bounded or $G_{T^{\sharp}}\ge 0$. Let $x>0$ be a fixed credit constraint such that $G_{T^{\sharp}} \ge -x$. There exists $(z,H)\in \mathbb{R} \times \mathcal{A}_{}$ such that $\forall t \in [0,\infty]$
	$$
	z + \int_{0}^{t}H_u dS_u - G_{T^{\sharp}}1_{T^{\sharp} \le t} \ge -x1_{t < \infty}, \ a.s.
	$$
	if and only if
	\begin{equation}\label{eq:duality_Eur}
		z\geq\sup_{\bZ\in \mathcal{Z}}E\left[\sum_{k \in \mathbb{N},\zeta \in \Theta}Z^{k,\zeta}(G_{T^{\sharp}}1_{T^{\sharp}\le  \zeta \wedge T_k }  - x 1_{\zeta \wedge T_k <\infty}) \right].
	\end{equation}
	We denote by $\pi_{\infty}(G_{T^{\sharp}},x)$ the minimal superhedging price of $G_{T^{\sharp}}$ with the credit constraint $x$. 
\end{theorem}
It is observed from \eqref{eq:duality_Eur} that the supremum is obtained when the pricing systems put more weight on $Z^{k, \zeta}$ such that $T_k \wedge \zeta \ge T^{\sharp}$. If there is a trading horizon $T<\infty$ and $T^{\sharp} \le T$, we will embed the finite horizon market $[0,T]$ into the infinite horizon setting by stopping the stock processes at time $T$. We write $\pi_{\infty}(G_{T^{\sharp}},x) = \pi_{T}(G_{T^{\sharp}},x)$.  

If $T^{\sharp} = T < \infty$, there are no trading activities after the option maturity. Choosing $\Sigma = \{T\}$, the hedging requirement \eqref{eq:super_Eur_prod} becomes
%\begin{equation}\label{eq:super_Eur_2}z + \int_{0}^{t}H_u dS_u - G_{T}1_{T \le t} \ge -x1_{t < T}, \ a.s.,  t \in [0,T], \end{equation}
\begin{equation}\label{eq:super_Eur_prod_2}
	z + \int_0^{T \wedge T_k} H_u dS_u - G_{T}1_{T \le T_k }  \ge  - x 1_{ T_k < T}, \ a.s.
\end{equation} for all $k \in \mathbb{N}$.  The superhedging duality in Theorem \ref{thm:hedge_Eur} can be stated by using ELMDs only as follows. 
\begin{proposition}\label{pro:appro_hedging}
Let $\mathfrak{C} = \mathfrak{C}^u$ and $T < \infty$ be the trading horizon. Let $G_{T^{\sharp}}$ be a European option with maturity $T^{\sharp} =T$ satisfying the conditions in  Theorem \ref{thm:hedge_Eur}. Let $\pi_{T}(G_{T},x)$ be the superhedging  price of the option $G_{T}$ with the credit constraint $x$. Let $\pi_{k}(G_{T},x)$ be the superhedging price of the payoff $G_{T}1_{T \le T_k} - x1_{T_k < T}$ when trading until $T \wedge T_k$. Then 
	\begin{equation}\label{eq:superheding}
		\pi_{T}(G_{T},x) =  \lim_{k \to \infty}\pi_{k}(G_{T},x) =  \sup_{Y \in ELMD} \left(  E\left[ Y_{T}G_{T}\right]  - x(1 - E\left[ Y_{T}\right] ) \right).	
	\end{equation}
\end{proposition}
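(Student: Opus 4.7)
The plan is to deduce both equalities in \eqref{eq:superheding} by reducing the superhedging problem to a nonnegative-payoff/nonnegative-wealth setting and then exploiting the supermartingale property of ELMDs when tested against nonnegative admissible wealth processes.

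First I would perform a reduction. Since $G \ge -x$ by hypothesis, the substitution $w = z + x$ identifies superhedging $G$ from initial capital $z$ under the credit constraint $x$ with nonnegative-wealth superhedging of $F := G + x \ge 0$ from initial capital $w$: the admissibility condition $H \cdot S \ge -(x+z)$ becomes $w + H \cdot S \ge 0$, and $z + H \cdot S_T \ge G$ becomes $w + H \cdot S_T \ge F$. Thus $\pi_T(G,x) + x$ equals the classical nonnegative superhedging price $\widetilde{\pi}_T(F)$ on $[0,T]$, and analogously $\pi_k(G,x) + x = \widetilde{\pi}_k((G+x)\mathbf{1}_{\{T_k \ge T\}})$ on $[0, T \wedge T_k]$. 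Since local NFLVR entails global NUPBR (as recalled after Theorem~\ref{thm:classical}), ELMDs of $S$ exist, and the classical nonnegative superhedging duality of \cite{kk} yields
\[
\widetilde{\pi}_T(F) \;=\; \sup_{Y \in ELMD} E[Y_T F].
\]
Expanding $F = G + x$ and rearranging gives $\pi_T(G,x) = \sup_{Y \in ELMD}(E[Y_T G] - x(1 - E[Y_T]))$, which is the second equality in \eqref{eq:superheding}.

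The inequality $\pi_k(G,x) \le \pi_T(G,x)$ is straightforward: given admissible $(z,H)$ with $z + H \cdot S_T \ge G$, stopping $H$ at $T_k$ yields a strategy on $[0, T \wedge T_k]$ that dominates $G$ on $\{T_k \ge T\}$ and is bounded below by $-x$ everywhere thanks to the credit constraint. A short extension argument shows that any ELMM of $S^{T_k}$ can be extended to an ELMM of $S^{T_{k+1}}$ producing at least as large a dual value (since $G+x \ge 0$), so $(\pi_k)$ is non-decreasing and $\lim_k \pi_k(G,x) \le \pi_T(G,x)$.

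The remaining and main step is the reverse inequality $\pi_T \le \lim_k \pi_k$. I would establish the following key estimate: for every ELMD $Y$ and every nonnegative $F$ that is hedgeable on $[0, T \wedge T_k]$ by a nonnegative wealth process $X_t = z + H \cdot S^{T_k}_t$,
\[
E[Y_{T \wedge T_k}\, F] \;\le\; z.
\]
Indeed, $YX$ is a nonnegative local martingale by the defining property of ELMDs applied to the admissible $X$, hence a supermartingale, so optional stopping at the bounded time $T \wedge T_k$ gives $E[(YX)_{T \wedge T_k}] \le Y_0 X_0 = z$. Since $H$ does not trade after $T_k$, the wealth is constant there, so $X_{T \wedge T_k} = X_T \ge F$, and the estimate follows by infimising over admissible $(z, H)$. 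Applied to $F := (G+x) \mathbf{1}_{\{T_k \ge T\}}$ and using $Y_{T \wedge T_k} = Y_T$ on $\{T_k \ge T\}$, we obtain
\[
E[Y_T (G+x)\mathbf{1}_{\{T_k \ge T\}}] \;\le\; \widetilde{\pi}_k((G+x)\mathbf{1}_{\{T_k \ge T\}}) \;=\; \pi_k(G,x) + x.
\]
Integrability of $Y_T(G+x)$ (provided by finiteness of $\widetilde{\pi}_T(G+x)$ from the first step) enables dominated convergence as $k \to \infty$ to conclude $E[Y_T(G+x)] \le \lim_k \pi_k(G,x) + x$. Taking the supremum over $Y \in ELMD$ and invoking the duality from step~1 gives $\pi_T(G,x) + x \le \lim_k \pi_k(G,x) + x$, closing the chain of equalities. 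The principal technical obstacle is that an ELMD $Y$ need not be a true martingale, so one cannot directly turn $Y_{T \wedge T_k}$ into a density of an ELMM of $S^{T_k}$; the argument sidesteps this by exploiting the constancy of the admissible wealth after $T_k$ together with the supermartingale (rather than martingale) property of $YX$.
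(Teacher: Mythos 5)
Your steps 2 and 3 are correct and in fact reproduce half of the paper's argument: the monotonicity $\pi_k(G,x)\le\pi_{k+1}(G,x)\le\pi_T(G,x)$, and the estimate $E[Y_T(G+x)1_{T_k\ge T}]\le \pi_k(G,x)+x$ for every ELMD $Y$ (which indeed only needs the supermartingale property of $Y$ times a nonnegative wealth process), whence by monotone convergence $\sup_{Y\in ELMD}\left(E[Y_TG]-x(1-E[Y_T])\right)\le\lim_k\pi_k(G,x)\le\pi_T(G,x)$. The problem is that the entire remaining weight rests on your step 1, where you close the chain by citing, as ``classical'', the whole-horizon duality $\widetilde\pi_T(G+x)=\sup_{Y\in ELMD}E[Y_T(G+x)]$ on $[0,T]$ under NUPBR alone. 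That identity is essentially the second equality of the proposition itself, and it is not safely citable in the form you need for a general locally bounded c\`adl\`ag semimartingale: the hedging duality discussed in Section 4.7 of \cite{kk} is formulated with supermartingale deflators (passing from that class to ELMDs is an extra step you do not address), \cite{kk2021} proves the ELMD version for continuous processes only, and \cite{khasanov} argues by change of num\'eraire without producing optimal strategies; the paper's surrounding discussion (including the remark on extending \cite{kk2021} via \cite{sy1998}) makes clear this is exactly the point at stake. As written, you never establish the lower bound $\sup_{Y\in ELMD}E[Y_T(G+x)]\ge\pi_T(G,x)+x$ (equivalently $\ge\lim_k\pi_k(G,x)+x$), so the argument is circular at its key step.

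The paper closes this gap using only the classical NFLVR duality on the stopped horizons $[0,T\wedge T_k]$, where equivalent local martingale measures exist by local NFLVR. First, $\pi_k(G,x)+x=\sup_{Z\in\mathcal{Q}^k}E[Z_{T\wedge T_k}(G+x)1_{T_k\ge T}]$, and each stopped ELMM density $Z$ is pasted with a fixed global ELMD $D$ via $Y_t=Z_{t\wedge T_k}D_t/D_{t\wedge T_k}$, giving a full-horizon ELMD with $Y_T=Z_{T\wedge T_k}$ on $\{T_k\ge T\}$; this is the missing dual lower bound, and it yields $\lim_k\pi_k(G,x)+x=\sup_{Y\in ELMD}E[Y_T(G+x)]$. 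Second, to obtain $\pi_T(G,x)\le\lim_k\pi_k(G,x)$ the paper takes optimal superhedgers $H^k$ on $[0,T\wedge T_k]$ (available from the classical duality), extends them by zero and glues them through the Fatou-closedness of $\mathbf{C}^{\mathbb{N}}_0(0,0)$ (Proposition \ref{pro:countable_Fatou}); this produces a single strategy in $\mathcal{A}_x$ superhedging $G$ from the initial capital $\lim_k\pi_k(G,x)$, and as a by-product an optimal superhedging strategy at $\pi_T(G,x)$ — something your soft sandwich cannot deliver even if the citation were granted. To repair your proof while keeping its structure, replace step 1 by these two arguments (the pasting for the dual bound, the gluing for $\pi_T\le\lim_k\pi_k$); together with your steps 2 and 3 they give \eqref{eq:superheding}.
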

In \eqref{eq:superheding}, if $x = 0$, superhedging portfolios are always nonnegative and the price $\pi_{T}(G,0)$ is well-known, see \cite{kk}, \cite{kk2021}. Under the presence of arbitrage opportunities, the price $\pi_{T}(G,x)$ could be smaller than the price $\pi_{T}(G,0)$ in general, because sellers are allowed to use the credit constraint $x$. If the global condition NFLVR holds on $[0,T]$, the formula \eqref{eq:superheding} becomes the classical superhedging duality, where $x$ plays no roles in the price. When $T^{\sharp} <T$, it is possible that $\pi_{T}(G_{T^{\sharp}},x) \le \pi_{T^{\sharp}}(G_{T^{\sharp}},x)$ because sellers simultaneously use the credit constraint and continue trading to exploit arbitrage after the option maturity. Additionally, we have to use the new pricing systems to compute $\pi_{T}(G_{T^{\sharp}},x)$ as in \eqref{eq:duality_Eur} for such cases.
\subsection{Future extensions}
There are many possible extensions 
for the present paper. The risky assets are assumed to be locally bounded. It 
would be interesting to extend the FTAP results to the case with unbounded risky assets. 
For this goal, techniques developed in \cite{ds1998} should be adopted. The results should be compared to that of  \cite{takaoka}.
%Although 
%continuity conditions for American options' payoffs are sufficient for financial applications, it 
%is natural to ask if such conditions could be weaken. 

The present framework could be applied to situations with uncertain exercise times such as game options, contracts with 
multi-exercise opportunities such as swing options in commodity markets. It could be applicable 
when stochastic factors and trading frictions are taken into account.  Furthermore, 
it would be interesting to develop similar framework for large markets such as bond markets, 
where possibly nonpositive financial cash flows need to be priced without assuming absence of arbitrage.

\section{Proofs}\label{sec:proofs}
\subsection{Some closedness results}
\begin{lemma}\label{lemma:c-bounded}
Let Assumption \ref{assumption_main}(i) be in force.  For any $x \ge 0,$ the set $\widetilde{\bK}^{\mathbb{N} \times \Theta \times \Theta}_{x}$ defined in \eqref{eq:K_x} is $c$-bounded. 
\end{lemma}
\begin{proof}
	The set $\mathbb{N} \times \Theta \times \Theta$ is countable. We need to prove that for each $(k,\theta, \zeta) \in \mathbb{N} \times \Theta \times \Theta$, the set $\left\lbrace \widetilde{W}^{k,\theta, \zeta}_{\infty}(\widetilde{H}), \widetilde{H} \in  \widetilde{\mathcal{A}}^{\Theta}_x \right\rbrace $ is bounded in $L^0_+.$ This comes from Corollary 9.3.4 of \cite{delbaen2006mathematics}, since the market satisfies the condition NFLVR up to $T_k \wedge \zeta$. \qed 
\end{proof}
We recall Corollary 4.9 of \cite{cs11}. 
\begin{lemma}\label{lem:semi_closed}
	Let $K \in \mathbb{R}^d$ be a closed convex cone. The set $$\left\lbrace (H \cdot Y_t)_{t \ge 0}:  H \in \mathcal{L}(Y), H_t(\omega) \in K \right\rbrace $$ is closed in the semimartingale topology for all $\mathbb{R}^d$-valued semimartingales $Y$ if and only if $K$ is polyhedral.
\end{lemma}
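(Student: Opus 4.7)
The plan is to address the two implications separately, relying primarily on Memin's theorem on the closedness of the set of stochastic integrals in the semimartingale topology.

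For the forward direction, assume $K$ is polyhedral. The idea is to triangulate $K$ into finitely many simplicial subcones $K = K_1 \cup \cdots \cup K_N$, where each $K_j = \{\sum_{i=1}^{m_j} \lambda_i v_{j,i} : \lambda_i \ge 0\}$ is generated by linearly independent vectors $v_{j,i} \in \mathbb{R}^d$. Given a sequence $(H^n \cdot Y)_n$ converging in the semimartingale topology to some $X$, I would decompose each $H^n$ according to which simplicial piece it takes values in, writing $H^n = \sum_j H^n \mathbf{1}_{A^n_j}$ for a predictable partition $\Omega \times [0,T] = \bigcup_j A^n_j$, and extract a subsequence for which a single index $j$ dominates. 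Within a single simplicial cone, uniqueness of the conic representation reduces the problem to closedness of integrals against the vector semimartingale $(v_{j,1} \cdot Y, \ldots, v_{j,m_j} \cdot Y)$ with integrands in the nonnegative orthant, which follows from Memin's theorem combined with the elementary closedness under pointwise convergence of $\mathbb{R}^{m_j}_+$-valued predictable processes.

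For the backward direction, assume $K$ is not polyhedral. The goal is to exhibit an explicit semimartingale $Y$ for which closedness fails. A natural candidate is $Y$ a standard $d$-dimensional Brownian motion, exploiting its rotational invariance and the richness of its stochastic integrals. Non-polyhedrality means that $\partial K$ contains either infinitely many extreme rays accumulating at some direction, or a smooth portion, and in either case one can produce unit vectors $u_n \in K$ whose convex combinations approximate a direction that lies outside $K$. I would construct integrands $H^n$ supported on successively finer partitions of $[0,T]$, oscillating between such boundary directions of $K$ with carefully scaled magnitudes, so that the stochastic integrals converge in the semimartingale topology to $H \cdot Y$ where $H$ takes values outside $K$ on a set of positive $dP \otimes dt$ measure.

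The main obstacle will be the backward direction. The forward direction, while technically delicate, assembles known tools: triangulation of polyhedra, Memin's closedness theorem, and the closedness of orthant-valued integrands. The reverse implication, by contrast, requires a genuine counterexample where the algebraic deficiency of $K$ is converted into an analytic failure of closedness in the rather strong semimartingale topology; verifying semimartingale convergence (as opposed to convergence in some weaker topology such as ucp) of the constructed approximating sequence is the technically hardest step.
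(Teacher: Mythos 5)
You should first note that the paper does not prove this lemma at all: it is quoted verbatim as Corollary 4.9 of \cite{cs11} (Czichowsky--Schweizer), so a ``proof'' here means reconstructing their argument, and your sketch misses its central difficulty in both directions. In the forward direction, the obstacle is not how to decompose $K$ but that convergence of $H^n\cdot Y$ in the semimartingale topology gives you, via Memin's theorem, only a limit of the form $H\cdot Y$ with an \emph{unconstrained} integrand $H$, while the $H^n$ themselves need not converge in any sense (not even in measure along a subsequence); moreover the integrand producing a given integral is not unique --- it is determined only up to a predictable linear subspace reflecting the degeneracies of the characteristics of $Y$. The role of polyhedrality is precisely that images of polyhedral cones under linear maps (projections modulo that null space) are again polyhedral, hence closed, which together with a measurable selection argument allows one to replace the Memin limit integrand by a $K$-valued one. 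Your step ``extract a subsequence for which a single index $j$ dominates'' has no meaning, since the partition into simplicial pieces varies with $(\omega,t)$ and with $n$, and your final reduction is circular: closedness for integrands valued in the orthant $\mathbb{R}^{m_j}_+$ is exactly an instance of the statement being proved and does not follow from ``pointwise convergence of predictable processes,'' because no pointwise convergence of the integrands is available.

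The backward direction, which you flag as the hard part, is set up on a wrong idea: with $Y$ a standard $d$-dimensional Brownian motion the constrained set is closed for \emph{every} closed convex cone $K$, since convergence of $H^n\cdot Y$ in the semimartingale topology forces $\int_0^T|H^n_u-H_u|^2\,du\to 0$ in probability, hence $H^n\to H$ $dP\otimes dt$-a.e.\ along a subsequence, and pointwise closedness of $K$ then suffices. So no counterexample can be manufactured from a nondegenerate driver by oscillating boundary directions. The counterexample in \cite{cs11} instead takes a \emph{degenerate} semimartingale (components linearly dependent, e.g.\ built from a single one-dimensional driver), so that the set of stochastic integrals coincides with the set of integrals of the images of the integrands under a fixed noninvertible linear map; non-polyhedrality of $K$ is then converted into non-closedness exactly because a linear image of a closed, non-polyhedral convex cone can fail to be closed, and that failure transfers to the space of integrals. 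As it stands, both halves of your plan would need to be rebuilt around this projection/degeneracy mechanism.
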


\begin{definition}
Let $D$ be a countable set. We say that a sequence $\bff_n, n \in \mathbb{N}$ in $\mathbf{L}^{0,D,b}$ Fatou-converges to $\bff$ if for each $k \in D$, $f^{k}_n$ converges to $f^{k}$ a.s. and $f^{k}_n \ge -x, \ a.s.$ for some $x >0$. 
\end{definition}

Let $\tau$ be a stopping time  taking values in $[0,\infty]$ and $D \times \Sigma \subset \mathbb{N} \times \Theta$. Let $\bv= (v^{k, \tau, \zeta})_{k \in D, \zeta \in \Sigma}$ be $\mathcal{F}_{\tau} $-measurable. Define 

\begin{eqnarray}
	\mathbf{C}^{D \times \Sigma} (\tau, \mathbf{v}) &:=& \left\lbrace \left(  v^{k,\tau, \zeta} + \int_{\tau }^{\infty} H_u 1_{]]\tau,\zeta \wedge T_k]]}dS_u  - h^{k,\tau,\zeta}\right)_{k \in D, \zeta \in \Sigma} \right.   \nonumber\\
	&& \qquad \left. : H1_{]]\tau,\zeta \wedge T_k]]}  \in  \mathcal{A}(\tau  ,v^{k,\tau,\zeta}), h^{k,\tau,\zeta} \in L^{0,}_+\right\rbrace.	 \label{eq:c_set}
\end{eqnarray}
Denote  $B^{\infty}_{x} =\{ y \in L^{\infty}(\mathcal{F}_T,P): \|y\|_{L^{\infty}} \le x  \}$. 
The following proposition generalizes Theorem 4.2 (i) of \cite{delbaen1994general}. 
\begin{proposition}\label{pro:countable_Fatou}
	Let Assumption \ref{assumption_main}(i) hold. For any nonempty (possibly countable) set $D \times \Sigma \subset \mathbb{N} \times \Theta $,   the set $\bC^{D \times \Sigma}(\tau, \mathbf{v})$ defined in \eqref{eq:c_set}  is Fatou-closed, that is $\bC^{D \times \Sigma}(\tau, \mathbf{v}) \cap \prod_{k \in D, \zeta \in \Sigma} B^{\infty}_{x}$ is closed in the space $\bL^{0,D \times \Sigma,b}$.
\end{proposition}
\begin{proof}
	Let $\bff_n, n \in \mathbb{N}$ be a sequence in $\bC^{D \times \Sigma}(\tau,\bv)$ such that for every $k \in D, \zeta \in \Sigma$, $$f^{k,\zeta}_n \to f^{k,\zeta} \text{ in probability and } f^{k,\zeta}_n \ge - x,\ a.s.$$ Without loss of generality, we may assume that $x = 1$. We need to find $H$ such that 
\begin{eqnarray*}
	v^{k,\tau,\zeta} + \int_{\tau }^{\infty} H_u 1_{]]\tau,\zeta \wedge T_k]]}dS_u &\ge& f^{k, \zeta}, \ a.s.\\
H1_{]]\tau,\zeta \wedge T_k]]}  &\in&  \mathcal{A}(\tau  ,v^{k,\tau,\zeta}),
\end{eqnarray*}
for every $k \in D, \zeta \in \Sigma.$ 
	
By taking a subsequence, we may assume that $f^{k, \zeta }_n \to f^{k, \zeta}, \ a.s.$ for all $k \in D, \zeta \in \Sigma$, see Lemma A.1 of \cite{chau2020}. By definition, there is a sequence $H_n, n \in \mathbb{N}$ such that for any $k \in D, \zeta \in \Sigma$,
\begin{eqnarray}
v^{k,\tau,\zeta} + \int_{\tau }^{ \infty} H_{n,u} 1_{]]\tau,\zeta \wedge T_k]]}dS_u \ge f^{k,\zeta}_n &\ge& -1, \ a.s., \label{eq:1}\\
H_n1_{]]\tau,\zeta \wedge T_k]]}  &\in&  \mathcal{A}(\tau  ,v^{k,\tau,\zeta}). \nonumber
\end{eqnarray}
Since the condition NFLVR holds up to $T_k, k \in D$, we get 
$$H_{n} 1_{]]\tau,\zeta \wedge T_k]]} \in  \mathcal{A}_1(\tau ,v^{k,\tau,\zeta}), \ \forall k \in D, \zeta \in \Sigma,$$
by Proposition 9.3.6 of \cite{delbaen2006mathematics}. Define
	\begin{eqnarray*}
&&\mathfrak{D}(\tau,\bv) := \left\{ (g^{k,\zeta})_{k \in D, \zeta \in \Sigma} :  \exists H_{n}, n \in \mathbb{N}  \text{ such that } H_n1_{]]\tau,\zeta \wedge T_k]]} \in  \mathcal{A}_1(\tau ,v^{k,\tau,\zeta}),    \right.\\
&&\left. v^{k,\tau,\zeta} + \int_{\tau }^{\infty} H_{n,u} 1_{]]\tau,\zeta \wedge T_k]]} dS_u \to  g^{k, \zeta}, \text{and } g^{k,\zeta} \ge f^{k,\zeta}, \ a.s., \forall k \in D, \zeta \in \Sigma  \right\} .
	\end{eqnarray*}
	By Lemma 4.4 of \cite{cfr}, there exist $H^{\sharp}_n \in conv\{H_n,H_{n+1},...\}$ such that for all $k \in D, \zeta \in \Sigma$, we have  
	$$v^{k, \tau, \zeta} + \int_{\tau }^{\infty} H^{\sharp}_{n,u} 1_{]]\tau,\zeta \wedge T_k]]} dS_u  \to q^{k,\zeta} \ge f^{k,\zeta}, \ a.s.$$ for some $q^{k,\zeta}$ taking finite values.  So the set $\mathfrak{D}(\tau,\bv)$ is non-empty.  It is also closed in $\mathbf{L}^{0,D \times \Sigma}$. By Lemma \ref{lemma_maximal}, the set  $\mathfrak{D}(\tau,\bv)$ has a maximal element $\mathbf{g}_0$. It remains to check that there exists $H^*$ such that 
\begin{eqnarray}
v^{k,\tau,\zeta} + \int_{\tau}^{\infty} H^*_u 1_{]]\tau,\zeta \wedge T_k]]} dS_u  &=& g^{k,\zeta}_0, \ a.s. \label{eq:goal0}\\
H^*1_{]]\tau,\zeta \wedge T_k]]}  &\in&  \mathcal{A}_1(\tau  ,v^{k,\tau,\zeta}),\nonumber
\end{eqnarray}
for all $k \in D, \zeta \in \Sigma$.  Since $\mathbf{g}_0 \in \mathfrak{D}(\tau,\bv)$, there exists a sequence $\widehat{H}_n$ such that 
\begin{eqnarray}
v^{k, \tau, \zeta} + \int_{\tau }^{\infty} \widehat{H}_{n,u}  1_{]]\tau,\zeta \wedge T_k]]} dS_u &\to&  g^{k,\zeta}_0 \ge f^{k,\zeta}, \ a.s., \label{eq:hat_H} \\
\widehat{H}_n1_{]]\tau,\zeta \wedge T_k]]}  &\in&  \mathcal{A}(\tau  ,v^{k,\tau,\zeta}),\nonumber
\end{eqnarray}
for all $k \in D, \zeta \in \Sigma$. Fix $k_0 \in D, \zeta_0 \in \Sigma$ arbitrarily. We prove that \begin{equation}\label{eq:prob_conv}
		\sup_{\tau \le t < \infty}\left| \int_{\tau }^{t} \widehat{H}_{n,u}  1_{]]\tau,\zeta_0 \wedge T_{k_0}]]}  dS_u - \int_{\tau }^{t} \widehat{H}_{m,u}  1_{]]\tau,\zeta_0 \wedge T_{k_0}]]}  dS_u\right| \to 0, 
	\end{equation}
in probability as $n, m$ tend to infinity. If the claim fails, there exist subsequences $n_{\ell}, m_{\ell}$ such that
	$$ P\left( 	\sup_{\tau \le t < \infty }\left| \int_{\tau }^{t} \widehat{H}_{n_{\ell},u}   1_{]]\tau,\zeta_0 \wedge T_{k_0}]]}  dS_u - \int_{\tau }^{t} \widehat{H}_{m_{\ell},u}  1_{]]\tau,\zeta_0 \wedge T_{k_0}]]}  dS_u\right|> \alpha  \right) \ge \alpha $$
	for some $\alpha > 0$. Define 
	$$ U_{\ell} = \inf \left\lbrace  t \ge \tau:   \int_{\tau }^{t} \widehat{H}_{n_{\ell},u}    dS_u - \int_{\tau }^{t} \widehat{H}_{m_{\ell},u}   dS_u\ \ge \alpha \right\rbrace .$$
	Then $P(U_{\ell}  \le \zeta_0 \wedge T_{k_0} ) \ge \alpha$,  for every $\ell \in \mathbb{N}$. Define 
	\begin{equation}
	H^{\flat}_{\ell,u} = \widehat{H}_{n_{\ell},u}1_{u \le U_{\ell}} +  \widehat{H}_{m_{\ell},u} 1_{u > U_{\ell}}.
	\end{equation} 
We can check that $H^{\flat}_{\ell} 1_{]]\tau,\zeta \wedge T_k]]}  \in  \mathcal{A}_1(\tau  ,v^{k,\tau,\zeta})$ for all $k \in D, \zeta \in \Sigma$, by Lemma 2.3 (b) of \cite{kabanov}. We obtain that for any $k \in D, \zeta \in \Sigma, \ell \in \mathbb{N}$,
	\begin{eqnarray*}
	&&\int_{\tau }^{\infty}H^{\flat}_{\ell,u}  1_{]]\tau,\zeta \wedge T_k]]}  dS_{u} \nonumber \\
	&=& 1_{U_{\ell} \ge \zeta \wedge T_k} \int_{\tau }^{\infty }\widehat{H}_{n_{\ell}}  1_{]]\tau,\zeta \wedge T_k]]}  dS_{u}  \\
		&& + 1_{U_{\ell} < \zeta \wedge T_k}\left( \int_{\tau }^{U_{\ell} }\widehat{H}_{n_{\ell}}  1_{]]\tau,\zeta \wedge T_k]]}  dS_{u} +  \int_{\tau }^{\infty }\widehat{H}_{m_{\ell}}  1_{]]\tau,\zeta \wedge T_k]]}  dS_{u} \right.  \\
		&& \left. - \int_{\tau }^{U_{\ell} }\widehat{H}_{m_{\ell}}  1_{]]\tau,\zeta \wedge T_k]]} dS_{u} \right) \\
		&=& 1_{U_{\ell} \ge \zeta \wedge T_k} \int_{\tau}^{\infty }\widehat{H}_{n_{\ell}}  1_{]]\tau,\zeta \wedge T_k ]]}    dS_{u}  + 1_{U_{\ell} < \zeta \wedge T_k } \int_{\tau }^{\infty}\widehat{H}_{m_{\ell}}  1_{]]\tau,\zeta \wedge T_k ]]}  dS_{u} + \xi^{k, \zeta}_{\ell},
	\end{eqnarray*}
	where 
	$$\xi^{k,\zeta}_{\ell} = 1_{U_{\ell} < \zeta \wedge T_k}\left( \int_{\tau }^{U_{\ell} }\widehat{H}_{n_{\ell}}  1_{]]\tau,\zeta \wedge T_k ]]} dS_{u}  - \int_{\tau}^{U_{\ell} }\widehat{H}_{m_{\ell}}  1_{]]\tau,\zeta \wedge T_k ]]} dS_{u}\right)  \ge 0.$$
For all $k \in D, \zeta \in \Sigma$, the final payoffs are improved if $U_{\ell} < \zeta \wedge T_k$. However, the improvement occurs at $k_0, \zeta_0$ is suitable for our argument. Since for every $\ell \in \mathbb{N}$,
	$$P(\xi^{k_0, \zeta_0}_{\ell} \ge \alpha)  =  P(U_{\ell} < \zeta_0 \wedge T_{k_0}) \ge \alpha,$$
	there exist convex combinations $\xi^{k,\zeta,\sharp}_{\ell} \in conv\{\xi^{k,\zeta}_{\ell}, \xi^{k,\zeta}_{\ell+1},...\}$ such that $\xi^{k,\zeta,\sharp}_{\ell} \to \eta^{k,\zeta} \ge 0$ for all $k \in D, \zeta \in \Sigma$ 
	with $\eta^{k_0,\zeta_0} \ne 0,$ 
	by Lemma 9.8.1 of \cite{delbaen2006mathematics} and Lemma 4.4 of \cite{cfr}. The corresponding convex combinations $H^{\flat,\sharp}_{\ell}$ satisfies  
	$$v^{k,\tau,\zeta} + \int_{\tau }^{\infty} H^{\flat,\sharp}_{\ell,u}  1_{]]\tau,\zeta \wedge T_k]]}  dS_{u} \to g^{k,\zeta}_0 + \eta^{k,\zeta}, \ \forall k \in D, \zeta \in \Sigma,$$
	as $\ell \to \infty$. Since $\eta^{k_0,\zeta_0} \ne 0,$ the random vector  $\mathbf{g}_0$ is not a maximal element of $\mathfrak{D}(\tau,\bv)$, which is a contradiction.  
	
	From \eqref{eq:prob_conv}, because of the arbitrariness of $k_0, \zeta_0$, we conclude that 
	$$\sup_{n \in \mathbb{N}}  \sup_{\tau  \le t <\infty }\left| \int_{\tau }^{t} \widehat{H}_{n,u}  1_{]]\tau,\zeta \wedge T_k ]]}  dS_u \right|  < \infty, \ a.s., $$
 for all $k \in D, \zeta \in \Sigma$. Define
	$$B^{D \times \Sigma} = \bigcup_{k \in D, \zeta \in \Sigma}]]\tau, \zeta \wedge T_k]]$$ and redefine  $\widehat{H}_{n} = \widehat{H}_{n}1_{B^{D \times \Sigma}}$. Noting \eqref{defi:local}, we obtain
	$$ \psi: = \sup_{n \in \mathbb{N}}  \sup_{0 \le t < \infty }\left| \int_{0}^{t} \widehat{H}_{n,u} dS_u \right|  < \infty, \ a.s.$$

From \eqref{eq:prob_conv}, the sequence of random variables
$\int_{\tau }^{t} \widehat{H}_{n,u}  1_{]]\tau,\zeta \wedge T_k]]}  dS_u, \ n \in \mathbb{N}$
converges uniformly in $t$ to a limit in the $L^0$ space as $n$ goes to infinity \footnote{The space $L^0$ equipped with the metric $d(f,g) = E[1 \wedge |f-g|]$ is a Fr\'echet space. The metric induces the topology of convergence in probability.}. Using Lemma A.1 of \cite{chau2020}, up to a subsequence, we may assume that the convergence is almost surely and uniformly in $t$.  

Using the density $dQ/dP= e^{-\psi}/E^P[ e^{-\psi}]$, we change to measure $Q$ equivalent to $P$  such that $\psi \in L^2(Q)$. Since the condition NUPBR holds globally, by repeating Lemmas 2.8,  3.3 of \cite{kabanov}, there exists $\overline{H}_n \in conv\{\widehat{H}_n, \widehat{H}_{n+1}, ... \}$ such that $\overline{H}_n \cdot S$ converges  in the semimartingale topology. Lemma \ref{lem:semi_closed} implies there is $H^*$ such that $\overline{H}_n \cdot S \to H^* \cdot S$ in the semimartingale topology. 	

Because the almost sure  convergence of $\int_{\tau }^{t} \widehat{H}_{n,u}  1_{]]\tau,\zeta \wedge T_k]]}  dS_u, \ n \in \mathbb{N}$ is uniform in $t$, we get the same property for the sequence of convex combinations
\begin{equation}\label{eq:sup_conv}
\sup_{t\ge 0}\left|  \int_{\tau}^{t }\overline{H}_{n,u}1_{]]\tau, T_k \wedge \zeta]]}dS_u - \int_{\tau}^{t }H^*_u1_{]]\tau, T_k \wedge \zeta]]}dS_u\right|  \to  0, \ a.s.,
\end{equation}
for any $k\in D, \zeta \in \Sigma$. Here, it's important to note that \eqref{eq:sup_conv} is stronger than the uniform convergence on compact  in probability obtained from the convergence of $\overline{H}$ in the semimartingale topology because $T_k \wedge \zeta$ can be infinity.  We compute for all $k \in D, \zeta \in \Sigma$ that
\begin{eqnarray*}
	&&\lim_{n \to \infty}\lim_{t \to \infty} \left|  \int_{\tau}^{t }\overline{H}_{n,u}1_{]]\tau, T_k \wedge \zeta]]}dS_u -  \int_{\tau}^{t }H^*_{u}1_{]]\tau, T_k \wedge \zeta]]}dS_u \right| \\
	&&  \le \lim_{n \to \infty} \sup_{ \tau \le t < \infty} \left| \int_{\tau}^{t }\overline{H}_{n,u}1_{]]\tau, T_k \wedge \zeta]]}dS_u - \int_{\tau}^{t }H^*_{u}1_{]]\tau, T_k \wedge \zeta]]}dS_u \right| \to  0, \ a.s.
\end{eqnarray*}
Therefore,  for all $k \in D, \zeta \in \Sigma$ 
\begin{eqnarray*}
	\lim_{t \to \infty}   \int_{\tau}^{t }H^*_{u}1_{]]\tau, T_k \wedge \zeta]]}dS_u &=& \lim_{n \to \infty} \lim_{t \to \infty} \int_{\tau}^{t }\overline{H}_{n,u}1_{]]\tau, T_k \wedge \zeta]]}dS_u \\
	&=& \lim_{n \to \infty}  \int_{\tau}^{\infty}\overline{H}_{n,u}1_{]]\tau, T_k \wedge \zeta]]}dS_u \\
	&=& g^{k,\zeta}_0 - v^{k,\tau,\zeta},\ a.s.,
\end{eqnarray*}
where the last equality comes from \eqref{eq:hat_H}.  In other words,  for all $k \in D, \zeta \in \Sigma$,
$$v^{k,\tau,\zeta} + \int_{\tau}^{\infty}H^*_u 1_{]]\tau,\zeta \wedge T_k]]}  dS_{u} = g^{k,\zeta}_0 \ge f^{k,\zeta}, \ a.s.$$
We get that $H^* 1_{]]\tau,\zeta \wedge T_k]]}  \in  \mathcal{A}_1(\tau  ,v^{k,\tau,\zeta})$ for all $k \in D, \zeta \in \Sigma$. The proof is complete. \qed
\end{proof}
Recall from \eqref{eq:tilde_C_0} that 
\begin{eqnarray*}
	\widetilde{\bC}^{D \times \Gamma \times \Sigma}_0 =\left\lbrace \left( \int_{0}^{T_k \wedge \zeta} \widetilde{H}_u(\theta) dS_u - h^{k,\theta,\zeta}\right)_{k \in D,\theta \in \Gamma, \zeta \in \Sigma}: \widetilde{H} \in \widetilde{\mathcal{A}}^{\Theta}, h^{k,\theta,\zeta} \in L^{0}_+ \right\rbrace.
\end{eqnarray*}
We also prove that the set $\widetilde{\bC}^{D \times \Gamma \times \Sigma}_0$ constructed from generalized strategies is Fatou-closed by using the arguments in the proof of Proposition \ref{pro:countable_Fatou}.
\begin{proposition}\label{pro:countable_Fatou_A}
	Let Assumption \ref{assumption_main}(i) hold. For any nonempty (possibly countable) set $D \times \Gamma \times \Sigma \subset \mathbb{N} \times \Theta \times \Theta$ such that $\infty \in \Gamma, \infty \in \Sigma$ and $x>0$, the set $\widetilde{\bC}^{D \times \Gamma \times  \Sigma}_0 $ is Fatou-closed, that is $\widetilde{\bC}^{D \times \Gamma \times  \Sigma}_0 \cap \prod_{k \in D, \theta \in \Gamma, \zeta \in \Sigma} B^{\infty}_{x}$ is closed in the space $\bL^{0,D \times \Gamma \times \Sigma}$,  for each $x \ge 0$.
\end{proposition}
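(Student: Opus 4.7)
The plan is to follow the template of the proof of Proposition~\ref{pro:countable_Fatou}, adapting the maximal-element construction from strategies to generalized strategies. Take $\bff_n \in \widetilde{\bC}^{D\times \Gamma}_0$ with $|f^{k,\theta}_n| \le x$ Fatou-converging to $\bff$; after rescaling assume $x=1$. By definition there exist $\widetilde{H}_n \in \widetilde{\mathcal{A}}^\Theta_1$ with pre-exercise $H_n$ and post-exercise legs $H_n(\theta)$, satisfying $\widetilde{W}^{k,\theta}_T(\widetilde{H}_n) \ge f^{k,\theta}_n \ge -1$ for all $(k,\theta) \in D \times \Gamma$. Introduce
\[
\widetilde{\mathfrak{D}} = \bigl\{ (g^{k,\theta})_{(k,\theta) \in D\times \Gamma} : \exists\, \widetilde{K}_n \in \widetilde{\mathcal{A}}^\Theta_1,\ \widetilde{W}^{k,\theta}_T(\widetilde{K}_n) \to g^{k,\theta} \text{ a.s.},\ g^{k,\theta} \ge f^{k,\theta}\bigr\}.
\]
Since $D\times \Gamma$ is countable, a coordinatewise application of Lemma~4.4 of \cite{cfr} to $\widetilde{H}_n$ extracts simultaneous convex combinations whose wealth vectors converge a.s., so $\widetilde{\mathfrak{D}} \ne \emptyset$; a diagonal argument shows $\widetilde{\mathfrak{D}}$ is closed in $\bL^{0,D\times\Gamma}_T$. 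By Lemma~\ref{lemma_maximal} there is a maximal element $\mathbf{g}_0 \in \widetilde{\mathfrak{D}}$ with a witnessing sequence again denoted $\widetilde{H}_n = (H_n, (H_n(\theta))_\theta) \in \widetilde{\mathcal{A}}^\Theta_1$.

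The core task is to realize $\mathbf{g}_0$ by a single generalized strategy. For the pre-exercise leg, since $T \in \Gamma$ the coordinate $\theta = T$ gives $\int_0^{T \wedge T_k} H_{n,u}\,dS_u \to g^{k,T}_0$ a.s.\ for each $k \in D$. If these processes were not Cauchy in ucp on $[0,T]$, I pick subsequences $n_\ell, m_\ell$ and a stopping time $U_\ell$ on which their pre-exercise wealths differ by at least $\alpha>0$ (as in Proposition~\ref{pro:countable_Fatou}), and form the stitched generalized strategy with pre-exercise $H^\flat_\ell := 1_{[0,U_\ell]} H_{n_\ell} + 1_{(U_\ell,T]} H_{m_\ell}$ and post-exercise $H^\flat_\ell(\theta) := 1_{\{U_\ell > \theta\}} H_{n_\ell}(\theta) + 1_{\{U_\ell \le \theta\}} H_{m_\ell}(\theta)$, which is predictable since $\theta$ is deterministic and $\{U_\ell > \theta\} \in \mathcal{F}_\theta$. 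A direct computation shows the stitched wealth $\widetilde{W}^{k,\theta}_T$ equals $\widetilde{W}^{k,\theta}_T(\widetilde{H}_{n_\ell})$ on $\{U_\ell > \theta\}$ and $\widetilde{W}^{k,\theta}_T(\widetilde{H}_{m_\ell}) + \xi_\ell$ on $\{U_\ell \le \theta\}$, where $\xi_\ell := \int_0^{U_\ell} (H_{n_\ell} - H_{m_\ell})\,dS \ge 0$; admissibility is inherited from the two components. At coordinate $(k_0,T)$ the surplus $\xi_\ell \ge \alpha$ on the event $\{U_\ell < T\wedge T_{k_0}\}$ of probability at least $\alpha$. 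Taking joint convex combinations via Lemma~4.4 of \cite{cfr}, the limit vector lies in $\widetilde{\mathfrak{D}}$ and strictly dominates $\mathbf{g}_0$ at $(k_0,T)$, contradicting maximality. Then Kabanov's argument (global NUPBR holds since it is equivalent to local NUPBR) together with Lemma~\ref{lem:semi_closed} yields, after further convex combinations, some $H_0 \in \mathcal{A}_1$ with $H_n \cdot S \to H_0 \cdot S$ in the semimartingale topology on $[0,T]$.

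For each $\theta \in \Gamma \setminus \{T\}$ I rerun the same stopping-time / maximality argument on the post-exercise leg $H_n(\theta)\,1_{]]\theta,T]]}$: the starting capital $v^\theta_n := \int_0^\theta H_{n,u}\,dS_u$ converges in ucp to $v^\theta := \int_0^\theta H_{0,u}\,dS_u$ by the previous step, and the bound $\int_\theta^t H_n(\theta)_u\,dS_u \ge -1 - v^\theta_n$ becomes uniformly close to $-1 - v^\theta$ on exhausting sets. A change-of-measure localization (as at the end of the proof of Proposition~\ref{pro:countable_Fatou}) together with Kabanov's semimartingale-topology extraction produces $H_0(\theta)\,1_{]]\theta,T]]} \in \mathcal{A}_1(\theta, v^\theta)$ realizing $g^{k,\theta}_0$. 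Iterating over the countable set $\Gamma \setminus \{T\}$ by a diagonal extraction produces a single sequence of convex combinations witnessing all the required limits simultaneously. The assembled $\widetilde{H}_0 = (H_0, (H_0(\theta))_{\theta \in \Gamma})$, extended by zero on $\Theta \setminus \Gamma$, lies in $\widetilde{\mathcal{A}}^\Theta_1$ and satisfies $\widetilde{W}^{k,\theta}_T(\widetilde{H}_0) \ge f^{k,\theta}$ for every $(k,\theta) \in D\times \Gamma$, whence $\bff \in \widetilde{\bC}^{D\times \Gamma}_0$. The main obstacle I expect is the post-exercise step: the starting capital $v^\theta_n$ varies with $n$, so invoking Proposition~\ref{pro:countable_Fatou} verbatim (which assumes a fixed starting capital) requires a careful localization argument to preserve the uniform admissibility bound on which Kabanov's extraction relies.
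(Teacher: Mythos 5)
Your handling of the pre-exercise leg reproduces the paper's argument (maximal element of the set you call $\widetilde{\mathfrak{D}}$ via Lemma \ref{lemma_maximal}, the switching time $U_\ell$ with the stitched post-exercise legs and the surplus $\xi_\ell$, then Kabanov's lemmas plus Lemma \ref{lem:semi_closed} to get $H_0\in\mathcal{A}_1$ with semimartingale-topology convergence of the pre-exercise wealths), and that part is fine. The genuine gap is exactly where you flag it, and you do not close it: the post-exercise legs. The paper does \emph{not} rerun the stopping-time/maximality argument nor a second semimartingale-topology extraction for each $\theta\in\Gamma\setminus\{T\}$. Instead, keeping the \emph{same} convex weights used to build the pre-exercise combinations $G^{\sharp}_n$, it observes that the total wealths converge a.s.\ while the pre-exercise parts converge in ucp, so that $v^{\theta}+\int_{\theta}^{T\wedge T_k}G^{\sharp}_{n,u}(\theta)\,dS_u\to g^{k,\theta}_0$ a.s.\ with the \emph{limiting} capital $v^{\theta}=\int_0^{\theta}H^{*}_u\,dS_u$, and then invokes the already-proved Proposition \ref{pro:countable_Fatou} at the fixed pair $(\theta,v^{\theta})$ to obtain some $H^{*}(\theta)\in\mathcal{A}_1(\theta,v^{\theta})$ that \emph{dominates} $g^{k,\theta}_0$ for all $k\in D$ (domination suffices; you do not need to realize $\mathbf{g}_0$). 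This is precisely the device that neutralizes the varying capitals $v^{\theta}_n$, and it is absent from your plan: what you offer instead (``uniformly close on exhausting sets'', ``a change-of-measure localization'') is not an argument, and it is the step on which Kabanov's extraction would otherwise founder, because the bare post-exercise integrals $\int_{\theta}^{\cdot}H_{n,u}(\theta)\,dS_u$ are only bounded below by the random level $-1-v^{\theta}_n$, not uniformly $1$-admissible.

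If you insist on your route, the repair is to work with the full $\theta$-strategies $\widetilde{H}_{n}(\theta)=H_n1_{[[0,\theta]]}+H_n(\theta)1_{]]\theta,T]]}$, whose wealth processes \emph{are} uniformly bounded below by $-1$: a switching argument on their total wealths gives the ucp Cauchy property, Kabanov's lemmas and Lemma \ref{lem:semi_closed} yield a limit $\bar H^{(\theta)}\in\mathcal{A}_1$, and since $\int_0^{\theta}\bar H^{(\theta)}_u\,dS_u=v^{\theta}$ (ucp limit of $v^{\theta}_n$), the restriction $\bar H^{(\theta)}1_{]]\theta,T]]}$ lies in $\mathcal{A}_1(\theta,v^{\theta})$ and is compatible with the pre-exercise leg $H_0$. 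Note also that your final ``single diagonal sequence of convex combinations over $\Gamma\setminus\{T\}$'' is unnecessary: once $H_0$ is fixed, the post-exercise legs for different $\theta$ may be produced by entirely separate extractions, since they only need to be individually admissible relative to the same $v^{\theta}$'s. As written, however, your proof stops at an acknowledged obstacle rather than resolving it, whereas the paper's Proposition \ref{pro:countable_Fatou} was set up precisely to close this step.
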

\begin{proof}
	Without loss of generality, we may assume that $x = 1$. Let $\bff_n, n \in \mathbb{N}$ be a sequence in $\widetilde{\bC}^{D \times \Gamma \times  \Sigma}_0$ such that for every $k \in D, \theta \in \Gamma, \zeta \in \Sigma$ $$f^{k,\theta, \zeta}_n \to f^{k,\theta, \zeta} \text{ in probability and } f^{k,\theta, \zeta}_n \ge - 1,\ a.s.$$ We need to find $\widetilde{H} \in \widetilde{\mathcal{A}}^{\Theta}$  satisfying 
	\begin{equation}\label{eq:goal}
	\int_{0}^{T_k \wedge \zeta} \widetilde{H}_u(\theta) dS_u \ge f^{k,\theta, \zeta}, \ a.s., \forall k \in D, \theta \in \Gamma, \zeta \in \Sigma.
	\end{equation}
By Lemma A.1 of \cite{chau2020}, we may assume (up a subsequence) that $f^{k,\theta, \zeta}_n \to f^{k,\theta, \zeta}, \ a.s.$ for all $k \in D, \theta \in \Gamma, \zeta \in \Sigma$. By definition, there are $\widetilde{H}_n = (H_n,(H_n(\theta))_{\theta \in \Theta})\in \widetilde{\mathcal{A}}^{\Theta}, n \in \mathbb{N}$ such that for any $k \in D, \theta \in \Gamma, \zeta \in \Sigma$,
	\begin{equation}\label{eq:1_A}
		\int_{0}^{T_k \wedge \zeta} \widetilde{H}_{n,u}(\theta) dS_u \ge f^{k,\theta, \zeta}_n \ge -1, \ a.s.
	\end{equation}
Choosing $\theta = \infty, \zeta = \infty$ in \eqref{eq:1_A}, we get 
	\begin{equation}\label{eq:1_A1}
		\int_{0}^{T_k} H_{n,u} dS_u \ge -1, \ a.s. \ \forall k \in D,
	\end{equation}
Define $T_{sup} = \sup\{T_k, k \in D\}$ and note that $T_{sup} = \infty$ if $D$ is infinite. Since the condition NFLVR holds up to $T_{k}, k \in \mathbb{N}$, we redefine $H_n$ such that $H_n = H_n1_{[[0,T_{sup}]]} \in \mathcal{A}_1$, see Proposition 9.3.6 of \cite{delbaen2006mathematics}. For $\theta \in \Gamma \setminus \{\infty\}, \zeta = \infty$, \eqref{eq:1_A} yields
	\begin{equation}\label{eq:1_A2}
		\int_{0}^{T_k} \widetilde{H}_{n,u}(\theta) dS_u \ge -1, \ a.s. \ \forall k \in D.
	\end{equation}
From \eqref{eq:1_A2}, we redefine  $$H_{n,u}(\theta):= H_{n,u}(\theta)1_{]]\theta, T_{sup}]]}(u) \in \mathcal{A}_1\left(\theta , \int_0^{\theta } H_{n,u}dS_u \right).$$
For $\theta \notin \Gamma$, we simply redefine $H_{n,u}(\theta) = 0.$ The redefined sequence 
	$\widetilde{H}_n = (H_n,(H_n(\theta))_{\theta \in \Theta}), n \in \mathbb{N}$ is now in $ \widetilde{\mathcal{A}}^{\Theta}_1$ and satisfies \eqref{eq:1_A}. 
	
We denote
	\begin{eqnarray*}	&&\mathfrak{D}^{D\times \Gamma \times \Sigma} := \left\{ (g^{k,\theta, \zeta})_{k \in D, \theta \in \Gamma, \zeta \in \Sigma} :  \exists \widetilde{H}_n \in  \widetilde{\mathcal{A}}^{\Theta}_1, n \in \mathbb{N} \text{ such that }  \right.\\
	&&  \left. \int_{0}^{T_k \wedge \zeta} \widetilde{H}_{n,u}(\theta) dS_u \to  g^{k,\theta,\zeta}, \ a.s. \text{  and } g^{k,\theta,\zeta} \ge f^{k,\theta, \zeta}, \forall k \in D, \theta \in \Gamma, \zeta \in \Sigma \right\}.
	\end{eqnarray*}
	Noting that $\mathbb{N} \times \Theta \times \Sigma$ is countable, Lemma 4.4 of \cite{cfr} implies that there exists a sequence of convex combinations $\widetilde{H}^{\sharp}_n \in conv\{ \widetilde{H}_{n}, \widetilde{H}_{n+1},... \}$ such that  %$\widetilde{H}^{\sharp}_n=(H^{\sharp}_n,(H^{\sharp}_n(\theta))_{\theta \in \Gamma})$ such that 
	%\begin{eqnarray*}
%		\widetilde{H}^{\sharp}_{n,u}(\theta) &\in& conv\{\widetilde{H}_{n,u}(\theta),\widetilde{H}_{n+1,u}(\theta),...\} \\
	%	&=& conv\{H_{n,u}1_{u \le \theta} +H_{n,u}(\theta)1_{u > \theta} ,H_{n+1,u}1_{u \le \theta} +H_{n+1,u}(\theta)1_{u > \theta},...\}
	%\end{eqnarray*} for all $\theta \in \Gamma$ and
	 $$\int_{0}^{T_k \wedge \zeta}\widetilde{H}^{\sharp}_n(\theta) dS^{}_u \to q^{k,\theta, \zeta} \ge f^{k,\theta, \zeta}, \ a.s.$$ for some $q^{k,\theta, \zeta}$ taking finite values, for all $(k,\theta, \zeta) \in  D \times \Gamma \times \Sigma$. So the set $\mathfrak{D}^{D\times \Gamma \times \Sigma}$ is non-empty. It is also closed in $\bL^{0,D\times \Gamma  \times \Sigma}$. By Lemma \ref{lemma_maximal}, the set  $\mathfrak{D}^{D\times \Gamma \times \Sigma}$ has a maximal element $\mathbf{g}_0$. %It remains to check that there exists $\widetilde{H}^* = (H^*,(H^{*}(\theta))_{\theta \in \Gamma}) \in \widetilde{\mathcal{A}}^{\Theta}_1$ such that \begin{equation}\label{eq:goal}\int_{0}^{T_k \wedge \zeta} \widetilde{H}^*_{u}(\theta) dS_u  = g^{k,\theta, \zeta}_0, \ a.s.	\end{equation}for all $(k,\theta, \zeta) \in  D \times \Gamma \times \Sigma$.  
	 Hence, there exists a sequence $\widetilde{G}^{}_n = (G_n,G_n(\theta)) \in \widetilde{\mathcal{A}}^{\Theta}_1$ satisfying 
	\begin{equation}\label{eq:G}
		\int_{0}^{T_k \wedge \zeta} \widetilde{G}_{n,u}(\theta) dS_u \to  g^{k,\theta, \zeta}_0, \ a.s., \qquad \text{ and } \qquad  g^{k,\theta, \zeta}_0 \ge f^{k,\theta,\zeta}, \ a.s.
	\end{equation}for all $k \in D, \theta \in \Gamma, \zeta \in \Sigma$. %Define 
%\begin{eqnarray}
%	B^{D  \times  \Sigma}_{\le} &=& \bigcup_{k \in D, \zeta \in \Sigma} [[0, T_k \wedge \zeta]], \label{eq:Ble}\\
%	B^{D \times  \Sigma}_{\ge \theta} &=& \bigcup_{k \in D, \zeta \in \Sigma} ]]\theta,  T_k \wedge \zeta]], \ \forall \theta \in \Gamma, \label{eq:Bge}
%\end{eqnarray}the events on which our computation before and after the exercise times are carried out.  
For simplicity, we  redefine 
\begin{eqnarray}
G_n &=& G_n1_{[[0,T_{sup}]]}, \nonumber\\
G_n(\theta) &=& G_n(\theta)1_{]]\theta, T_{sup}]]}, \ \forall \theta \in \Gamma, \nonumber \\
G_n(\theta) &=& 0, \ \theta \notin \Gamma,  \label{eq:redef}
\end{eqnarray}
and the redefined sequence $\widetilde{G}^{}_n = (G_n,G_n(\theta)) \in \widetilde{\mathcal{A}}^{\Theta}_1$, too. 
We fix arbitrarily $k_0 \in \mathbb{N}, \zeta_0 \in \Sigma$ and claim that for any $\varepsilon > 0$
	\begin{equation}\label{eq:P-conv}
		\lim_{n,m \to \infty}P\left( \sup_{0 \le t}  \left| \int_{0}^{t} G_{n,u}1_{[[0,T_{k_0} \wedge \zeta_0]]} dS_u - \int_{0}^{t} G_{m,u} 1_{[[0,T_{k_0} \wedge \zeta_0]]} dS_u\right| > \varepsilon\right) = 0.	
	\end{equation}
If the claim \eqref{eq:P-conv} fails, there exist $n_{\ell},m_{\ell} \to \infty$ such that
	$$ P\left( \sup_{0 \le t}  \left| \int_{0}^{t} G_{n_{\ell},u} 1_{[[0,T_{k_0} \wedge \zeta_0]]} dS_u - \int_{0}^{t} G_{m_{\ell},u}1_{[[0,T_{k_0} \wedge \zeta_0]]} dS_u\right| > \alpha  \right) \ge \alpha $$
	for some $\alpha > 0$. Define 
	$$ U_{\ell} = \inf \left\lbrace  t \ge 0:   \int_{0}^{t} G_{n_{\ell},u}dS_u - \int_{0}^{t} G_{m_{\ell},u} dS_u \ge \alpha \right\rbrace .$$
	Then $P(U_{\ell}  \le   T_{k_0} \wedge \zeta_0) \ge \alpha$,  for every $\ell \in \mathbb{N}$. Define for all $\theta \in \Gamma$ the new strategy 
	\begin{eqnarray}\label{eq:switch}
		J_{\ell,u} &=&  1_{u \le U_{\ell}}G_{n_{\ell},u}+ 1_{u > U_{\ell}}  G_{m_{\ell},u}, \label{case1}\\	
		J_{\ell,u}(\theta)1_{u > \theta} &=& 1_{u > \theta} \left( 1_{\theta \le U_{\ell}}G_{n_{\ell},u}(\theta) + 1_{\theta > U_{\ell}}G_{m_{\ell},u}(\theta)\right), \label{eq:J}\\
		\widetilde{J}_{\ell,u}(\theta) &=& J_{\ell,u}1_{u \le  \theta} + J_{{\ell},u}(\theta)1_{u > \theta}.\nonumber
	\end{eqnarray}
The new strategy $\widetilde{J}^{}_{\ell} = (J_{\ell},J_{\ell}(\theta)) \in \widetilde{\mathcal{A}}^{\Theta}_1$ by Lemma 2.3 (b) of \cite{kabanov}. 
Here, we only switch from $G_{n_{\ell}}$ to $G_{m_{\ell}}$ at $U_{\ell}$, and if $U_{\ell} < \theta$, and we have to switch $G_{n_{\ell}}(\theta)$ to $G_{m_{\ell}}(\theta)$ as a consequence of the first switching, see \eqref{eq:J}. If $U_{\ell} \ge \theta$, we keep using $G_{n_{\ell}}(\theta)$, as in \eqref{eq:J}, because the switch from  $G_{n_{\ell}}(\theta)$ to $G_{m_{\ell}}(\theta)$ may not improve the final payoffs.  For all $k \in D, \theta \in \Gamma, \zeta \in \Sigma$, we consider the following cases:
	\begin{itemize}
		\item[(i)]  The case $ T_k \wedge \zeta \le \theta$: the option is not exercised before $T_k \wedge \zeta$, we switch from the strategy  $G_{n_{\ell}}$ to the strategy $G_{m_{\ell}}$ at the stopping time $U_{\ell}$ as in \eqref{case1} and obtain that 
		\begin{eqnarray*}
			&&	\int_{0}^{ T_k \wedge \zeta}\widetilde{J}_{\ell,u}(\theta) dS_{u} \nonumber \\
			&=& 1_{U_{\ell} >  T_k \wedge \zeta} \int_{0}^{ T_k\wedge \zeta}G_{n_{\ell},u} dS_{u}\nonumber  \\
			&& + 1_{U_{\ell} \le T_k \wedge \zeta}\left( \int_{0}^{U_{\ell} }G_{n_{\ell},u} dS_{u} +  \int_{0}^{ T_k \wedge \zeta}G_{m_{\ell},u}  dS_{u} - \int_{0}^{U_{\ell} }G_{m_{\ell},u}  dS_{u} \right) \nonumber \\
			&=& 1_{U_{\ell} >  T_k \wedge \zeta} \int_{0}^{T_k \wedge \zeta}G_{n_{\ell},u}  dS_{u}  + 1_{U_{\ell} \le  T_k \wedge \zeta} \int_{0}^{T_k \wedge \zeta}G_{m_{\ell},u}  dS_{u} +  \xi^{k,\theta,\zeta}_{\ell}, \label{eq2}
		\end{eqnarray*}
		where we define $$\xi^{k,\theta,\zeta}_{\ell}
		= \left( \int_{0}^{U_{\ell} }G_{n_{\ell},u} dS_{u}  - \int_{0}^{U_{\ell} }G_{m_{\ell},u}dS_{u}\right)1_{U_{\ell} \le  T_k \wedge \zeta}.$$ 
		\item[(ii)] The case $\theta < T_k \wedge \zeta$ and $\theta \le U_{\ell}$: the option is exercised at time $\theta$ and there is no switching before $\theta$. In this case, we use the strategy in \eqref{eq:J} on the event $\{\theta \le U_{\ell}\}$ and obtain 
		\begin{eqnarray*}
			\int_{0}^{T_k \wedge \zeta}\widetilde{J}_{\ell,u}(\theta)dS_{u}
			&=& \int_{0}^{\theta} G_{n_{\ell},u} dS_u +  \int_{\theta }^{ T_k \wedge \zeta}  G_{n_{\ell},u}(\theta) dS_u \to g^{k,\theta, \zeta}. 
		\end{eqnarray*}
		\item[(iii)] The case $\theta <  T_k \wedge \zeta$ and $\theta > U_{\ell}$: the option is exercised at time $\theta$ and there is a  switching before $\theta$. In this case,  we use the strategy in \eqref{eq:J} on the event $\{\theta > U_{\ell}\}$, 
		\begin{eqnarray*}
			\int_{0}^{T_k \wedge \zeta}\widetilde{J}_{\ell,u}(\theta) dS_{u}
			&=& \int_{0}^{U_{\ell}} G_{n_{\ell},u} dS_u + \int_{U_{\ell}}^{\theta } G_{m_{\ell},u} dS_u +  \int_{\theta }^{T_k \wedge \zeta}  G_{m_{\ell},u}(\theta) dS_u \\
			&=& \int_{0}^{U_{\ell}} G_{n_{\ell},u} dS_u - \int_{0}^{U_{\ell}} G_{m_{\ell},u} dS_u + \int_{0}^{\theta} G_{m_{\ell},u} dS_u \\
			&& +  \int_{\theta }^{T_k \wedge \zeta}  G_{m_{\ell},u}(\theta) dS_u  \\
			&=& \xi^{k,\theta,\zeta}_{\ell}  + 
			\int_{0}^{T_k \wedge \zeta}  \widetilde{G}_{m_{\ell},u}(\theta) dS_u.
		\end{eqnarray*}
	\end{itemize}
For any $k \in D, \theta \in \Gamma, \zeta \in \Sigma,$ the final payoffs are improved in the cases $(i)$ and $(iii)$. For our argument, we choose $\theta = \infty$ and thus the cases $(ii)$ and $(iii)$ do not occur.  Therefore, for every ${\ell} \in \mathbb{N}$, 	$$P(\xi^{k_0,\infty,\zeta_0}_{\ell}\ge \alpha)  =  P(U_{\ell}  \le  T_{k_0}  \wedge \zeta_0) \ge \alpha.$$
By Lemma 9.8.1 of \cite{delbaen2006mathematics} and Lemma 4.4 of \cite{cfr}, there exist convex combinations
	\begin{eqnarray*}
		\xi^{k,\theta,\zeta,\sharp}_{\ell} &\in& conv\{\xi^{k,\theta,\zeta}_{\ell}, \xi^{k,\theta,\zeta}_{\ell+1},...\}
	\end{eqnarray*} 
	such that $\xi^{k,\theta,\zeta,\sharp}_{\ell} \to \eta^{k, \theta, \zeta}$ with $\eta^{k, \theta, \zeta} \ge 0$ for all $k \in D, \theta \in \Gamma, \zeta \in \Sigma$ and in addition, we have $\eta^{k_0, \infty, \zeta_0} \ne 0$. Since $\eta^{k_0, \infty, \zeta_0} \ne 0$, the corresponding sequence of convex combinations for $\widetilde{J}_{\ell}$ converges to an element larger than $\mathbf{g}_0$. Therefore, $\mathbf{g}_0$ is not a maximal element of $\mathfrak{D}^{D \times \Gamma \times \Sigma}$, which is a contradiction.  
	
We conclude that the claim \eqref{eq:P-conv} holds true and hence the random variable $$  \sup_{n \in \mathbb{N}} \sup_{0 \le t }\left| \int_{0}^{t} G_{n,u}1_{[[0,T_{k} \wedge \zeta ]]}dS_u \right|  < \infty, \ a.s.$$
for any $k \in D, \zeta \in \Sigma$. Therefore,
	$$\psi = \sup_{n \in \mathbb{N}} \sup_{0 \le t  }\left| \int_{0}^{t} G_{n,u}dS_u \right| < \infty $$
	because  of the redefined sequence $G_n$, see  \eqref{eq:redef}. We change to measure $Q$ using the density $dQ/dP = e^{-\psi}/E^P[ e^{-\psi}]$ such that $\psi \in L^2(Q)$. Since NUPBR holds globally, Lemmas 2.8, 2.9 of \cite{kabanov} imply that there exists $G^{\sharp}_n \in conv\{{G}_n, {G}_{n+1}, ... \}$ such that $G^{\sharp}_n \cdot S$ converges in the semimartingale topology. Lemma \ref{lem:semi_closed} implies there is $H^*$ such that $G^{\sharp}_n \cdot S \to {H}^* \cdot S$ in the the semimartingale topology. Since $G_n \in \A_1$, we obtain that $H^* \in \mathcal{A}_1.$
	
Using the same weights in the construction of $G^{\sharp}_n$, we obtain $G^{\sharp}_n(\theta)$. From \eqref{eq:G}, for any $k \in D, \theta \in \Gamma, \zeta \in \Sigma$ and $\theta < \infty$, 
\begin{equation}\label{eq:G_sharp}
\int_{0}^{\theta}  G^{\sharp}_n1_{[[0, T_k \wedge \zeta]]} dS_u + \int_{\theta}^{\infty} G^{\sharp}_n(\theta)1_{]]\theta, T_k \wedge \zeta]]}dS_u\to g^{k,\theta, \zeta}_0 \ge f^{k,\theta, \zeta}, \ a.s
\end{equation}
as $n \to \infty$. 	We define $v^{k,\theta,\zeta} := \int_{0}^{\theta}  H^*_u1_{[[0, T_k \wedge \zeta]]}dS_u$. Because convergence in the semimartingale topology implies uniform convergence on compacts in probability, 
	\begin{eqnarray*}
	\left|  \int_0^{\theta  \wedge T_k \wedge \zeta} G^{\sharp}_{n,u}dS_u  -  \int_{0}^{\theta  \wedge T_k \wedge \zeta}  H^*_udS_u\right|  &\le& \sup_{s \le \theta} \left|  \int_0^{s} G^{\sharp}_{n,u}dS_u  -  \int_{0}^{s}  H^*_udS_u\right| \\
	 &\to& 0
	\end{eqnarray*}
	in probability for all $k \in D, \theta \in \Gamma, \zeta \in \Sigma$ and $\theta < \infty$.  
	Lemma A.1 of \cite{chau2020} implies that up to a subsequence
	$$ \int_0^{\theta } G^{\sharp}_{n,u}1_{[[0, T_k \wedge \zeta]]}dS_u \to v^{k,\theta,\zeta} , \ a.s.$$
	for all $k \in D, \theta \in \Gamma,  \zeta \in \Sigma$ and $\theta < \infty$. Thus  \eqref{eq:G_sharp} yields
	$$v^{k,\theta,\zeta} + \int_{\theta}^{\infty}  G^{\sharp}_n(\theta)1_{]]\theta, T_k \wedge \zeta]]} dS_u\to g^{k,\theta,\zeta}_0 \ge f^{k,\theta,\zeta}, \ a.s.$$
	For each $\theta \in \Gamma$ and $\theta < \infty$, applying Proposition \ref{pro:countable_Fatou} to the set $\mathbf{C}^{D \times \Sigma}(\theta, \bv)$ where $\bv = (v^{k,\theta,\zeta})_{k \in D, \zeta \in \Sigma}$ yields the existence of $H^*(\theta)$ such that for all $k \in D, \zeta \in \Sigma$, 
\begin{eqnarray*}
v^{k,\theta,\zeta} + \int_{\theta }^{\infty}   H^*(\theta)1_{]]\theta,  T_k \wedge \zeta]]} dS_u &\ge& g^{k,\theta, \zeta}_0 \ge  f^{k,\theta, \zeta}, \ a.s.\\
H^*(\theta)1_{]]\theta, T_k \wedge  \zeta]]} &\in&  \mathcal{A}_1(\theta ,v^{k,\theta,\zeta}).	
\end{eqnarray*}
We redefine $H^*(\theta) = H^*(\theta)1_{]]\theta, T_{sup}]]}$. The redefined strategy satisfies  $H^*(\theta) \in \A_1\left( \theta, \int_0^{\theta}H^*_udS_u\right)$. Setting $H^*(\theta) = 0$ for $\theta \notin \Gamma$, the strategy $(H^*,(H^*(\theta)_{\theta \in \Theta}) \in \widetilde{\mathcal{A}}^{\Theta}_1$ satisfies \eqref{eq:goal}. The proof is complete. \qed
\end{proof}
\subsection{Proof of Theorem \ref{thm:local_closed_0}}\label{sec:proof_thm:local_closed_0}
	Let $\bff_{\alpha}, \alpha \in I$ be a net in $\widetilde{\bC}^{\mathbb{N} \times \Theta \times \Theta}$, i.e., 
	$$ \bff_{\alpha} \le  \left(  \int_{0}^{T_k \wedge \zeta} \widetilde{H}_{\alpha,u}(\theta) dS_u \right)_{k \in \mathbb{N}, \theta \in \Theta, \zeta \in \Theta}, \text{ for some } \widetilde{H}_{\alpha} \in \widetilde{\A}^{\Theta},$$ such that $\bff_{\alpha} \to \bff$ in $\bL^{\infty,\mathbb{N} \times \Theta \times \Theta,b}$ in the $\bw^*$ topology. We need to prove that $\bff \in \widetilde{\bC}^{\mathbb{N} \times \Theta \times \Theta}$, that is there exists $\widetilde{H} \in \widetilde{\mathcal{A}}^{\Theta}$ satisfying  
	$$\int_{0}^{T_k \wedge \zeta}\widetilde{H}_{u}(\theta) dS_{u} \ge f^{k,\theta, \zeta}, \ a.s. \ \forall (k,\theta, \zeta) \in \mathbb{N} \times \Theta \times \Theta.$$
	By the definition of $\bff$, there exists $x \in \mathbb{R}$ such that $f^{k,\theta, \zeta} \ge -x$ for all $(k,\theta, \zeta) \in \mathbb{N} \times \Theta \times \Theta$. For each $(k,\theta, \zeta) \in \mathbb{N} \times \Theta \times \Theta$, we define 
	\begin{eqnarray*}
	\mathcal{H}^{k,\theta, \zeta} &=& \left\lbrace \left( \int_{0}^{T_{k'} \wedge \zeta'}\widetilde{H}_{u}(\theta') dS_{u}\right)_{k' \in \mathbb{N},\theta' \in \Theta, \zeta' \in \Theta} : \widetilde{H} \in \widetilde{\A}^{\Theta}_x \right.  \\
	&& \qquad \qquad \left. \text{ and }  
	\int_{0}^{T_k \wedge \zeta }\widetilde{H}_{u}(\theta) dS_{u} \ge f^{k,\theta, \zeta}, \ a.s. \right\rbrace .
	\end{eqnarray*}
	The set $\mathcal{H}^{k,\theta, \zeta}$ is clearly closed and convex subset of  $\bL^{0,\mathbb{N} \times \Theta \times \Theta}$.  To complete the proof, we need to prove
	\begin{equation}\label{eq:cap_A}
		\bigcap_{k \in \mathbb{N},\theta \in \Theta, \zeta \in \Theta} \H^{k,\theta, \zeta} \ne \emptyset. 
	\end{equation}
	First, we will prove that 
	\begin{equation}\label{eq:finite_intersec_A}
		\mathcal{H}^{D \times \Gamma \times \Sigma}=\bigcap_{(k,\theta, \zeta) \in D \times \Gamma \times \Sigma} \H^{k,\theta, \zeta} \ne \emptyset,
	\end{equation}
	where $D \times \Gamma \times \Sigma$ is an arbitrary nonempty finite subset of $\mathbb{N} \times \Theta \times \Theta$ such that $\infty \in \Gamma$ and $\infty \in \Sigma$. Proposition \ref{pro:countable_Fatou_A} implies that the set $\widetilde{\bC}_0^{D \times \Gamma \times \Sigma}$ is Fatou-closed. Therefore, using Proposition \ref{cfr}, the set $\widetilde{\bC}_0^{D \times \Gamma \times \Sigma}$ is $\bw^*$-closed in $\bL^{\infty,D \times \Gamma \times \Sigma,b}$. Since $f^{k,\theta, \zeta}_{\alpha} \to f^{k,\theta, \zeta}$ in the $w^*$ topology for each $(k,\theta, \zeta) \in D \times \Gamma \times \Sigma$, and we obtain that $(f^{k,\theta, \zeta})_{(k,\theta, \zeta) \in D\times \Gamma \times \Sigma} \in \widetilde{\bC}_0^{D \times \Gamma \times \Sigma}$. Therefore, we can find $\widetilde{H}^{D \times \Gamma \times \Sigma} = (H^{D \times \Gamma \times \Sigma}, (H^{D \times \Gamma \times \Sigma}(\theta))_{\theta \in \Theta})\in \widetilde{\mathcal{A}}^{\Theta}$ satisfying
	$$\int_{0}^{T_k \wedge \zeta}\widetilde{H}^{D \times \Gamma \times \Sigma}_{u}(\theta) dS_{u} \ge f^{k,\theta, \zeta} \ge -x, \ a.s. \ \forall (k,\theta, \zeta) \in D \times \Gamma \times \Sigma.$$
Repeating the argument involving \eqref{eq:1_A}, \eqref{eq:1_A1}, \eqref{eq:1_A2} in  the proof of Proposition \ref{pro:countable_Fatou_A},  we redefine $\widetilde{H}^{D \times \Gamma \times \Sigma}$ such that $\widetilde{H}^{D \times \Gamma \times \Sigma} \in \widetilde{ \mathcal{A}}^{\Theta}_x$ and thus, (\ref{eq:finite_intersec_A}) holds true. 

Denote by
$$\mathcal{H}^{\mathbb{N} \times \Theta \times \Theta} = \left\lbrace  \left( \int_0^{T_k \wedge \zeta} \widetilde{H}_u(\theta) dS^{}_u\right) _{k \in \mathbb{N},\theta \in \Theta, \zeta \in \Theta}: \widetilde{H} \in \widetilde{\A}^{\Theta}_x  \right\rbrace $$
	and by $\overline{\mathcal{H}^{\mathbb{N} \times \Theta \times \Theta}}$ the closure of $\mathcal{H}^{\mathbb{N} \times \Theta \times \Theta}$ in $(L^0)^{\mathbb{N} \times \Theta \times \Theta}$. 
	It is clear that  $\overline{\mathcal{H}^{\mathbb{N} \times \Theta \times \Theta}} $ is a convex set. Furthermore, $\mathcal{H}^{\mathbb{N} \times \Theta \times \Theta}$ is $c$-bounded by Lemma \ref{lemma:c-bounded}, and so is $\overline{\mathcal{H}^{\mathbb{N} \times \Theta \times \Theta}}  \subset (L^0)^{\mathbb{N} \times \Theta \times \Theta}$. Noting that $\mathbb{N} \times \Theta \times \Theta$ is countable, the set $\overline{\mathcal{H}^{\mathbb{N} \times \Theta \times \Theta}} $ is convexly compact by Proposition \ref{bfl}. Since $\mathcal{H}^{k,\theta, \zeta} \subset \overline{\mathcal{H}^{\mathbb{N} \times \Theta \times \Theta}}$ for any $k \in \mathbb{N}, \theta \in \Theta, \zeta \in \Theta$, we conclude that (\ref{eq:cap_A}) holds. The proof is complete. \qed 
\begin{remark}
		In \cite{cfr}, under the NUPBR condition,  trading strategies are bounded in total variation norm because of trading costs and it is possible to exploit the convex compactness property of the set of admissible trading strategies. As a consequence, their techniques are simpler. In the present frictionless settings, it is impossible to have such boundedness for trading strategies, and therefore we have to work with wealth processes, as in the proof of Theorem \ref{thm:local_closed_0}. 
\end{remark}
\subsection{Proof of Proposition \ref{pro:1}}
Let $D = \{k_1 < ...< k_p\}$,  $\Gamma = \{\theta_1 < ... < \theta_q\}$, $\Sigma = \{ \zeta_1, < ... < \zeta_r\}$, and recall the stopped process from \eqref{eq:stopped_wealth} $$\widetilde{W}^{k,\theta,\zeta}_t(\widetilde{H})  =\int_0^{t} \widetilde{H}_u(\theta)dS^{T_k \wedge \zeta}_u.$$
Using It\^o's formula, we obtain
\begin{eqnarray*}
&&d(Z^{k,\theta,\zeta}_t\widetilde{W}^{k,\theta,\zeta}_t(\widetilde{H}) ) \\
&=& \widetilde{W}^{k,\theta,\zeta}_{t-}(\widetilde{H})  dZ^{k,\theta,\zeta}_t + Z^{k,\theta,\zeta}_{t-}d\widetilde{W}^{k,\theta,\zeta}_t(\widetilde{H}) + d[Z^{k,\theta,\zeta},\widetilde{W}^{k,\theta,\zeta}(\widetilde{H})]_t \\
		&=& \widetilde{W}^{k,\theta,\zeta}_{t-}(\widetilde{H}) dZ^{k,\theta,\zeta}_t + Z^{k,\theta,\zeta}_{t-}\left(1_{t \le \theta} H_tdS^{T_k \wedge \zeta}_{t} + 1_{t > \theta} H_t(\theta)dS^{T_k \wedge \zeta}_{t} \right)  \\
		&&\qquad  +  \left(1_{t \le \theta} H_t d[Z^{k,\theta,\zeta},S^{T_k \wedge \zeta}]_{t} + 1_{t > \theta} H_t(\theta) d[Z^{k,\theta,\zeta},S^{T_k \wedge \zeta}]_{t} \right) . 
	\end{eqnarray*}
Therefore, 
	\begin{eqnarray}
	&&\sum_{(k,\theta,\zeta) \in D \times \Gamma \times \Sigma}	Z^{k,\theta,\zeta}\widetilde{W}^{k,\theta,\zeta}_{\infty}(\widetilde{H}) \nonumber \\
	&=& \sum_{(k,\theta,\zeta) \in D \times \Gamma \times \Sigma} \int_{0}^{\infty} \widetilde{W}^{k,\theta,\zeta}_{t-}(\widetilde{H}) dZ^{k,\theta,\zeta}_t \nonumber \\
	&& \qquad+  \sum_{(k,\theta) \in D \times \Gamma \times \Sigma}\int_0^{\theta} Z^{k,\theta,\zeta}_{t-} H_tdS^{T_k \wedge \zeta}_{t} +  H_t d[Z^{k,\theta,\zeta},S^{T_k \wedge \zeta}]_{t} \nonumber \\
	&& \qquad + \sum_{(k,\theta, \zeta) \in D \times \Gamma \times \Sigma} \int_{\theta}^{T_k \wedge \zeta} Z^{k,\theta,\zeta}_{t-} H_t(\theta)dS^{T_k \wedge \zeta}_{t} +  H_t(\theta) d[Z^{k,\theta,\zeta},S^{T_k \wedge \zeta}]_{t}  \nonumber \\
		&=& \sum_{(k,\theta, \zeta) \in D \times \Gamma \times \Sigma} \int_{0}^{\infty} \widetilde{W}^{k,\theta,\zeta}_{t-}(\widetilde{H}) dZ^{k,\theta,\zeta}_t \nonumber\\
		&& \qquad+  \sum_{(k,\theta, \zeta) \in D \times \Gamma \times \Sigma}\int_0^{\theta } H_t \left( d(Z^{k,\theta, \zeta}_tS^{T_k \wedge \zeta}_{t}) - S^{T_k \wedge \zeta}_{t-}dZ^{k,\theta,\zeta}_t\right)  \nonumber \\
		&& \qquad+ \sum_{(k,\theta, \zeta) \in D \times \Gamma \times \Sigma} \int_{\theta}^{\infty}  H_t(\theta)\left( d(Z^{k,\theta,\zeta}_tS^{T_k \wedge \zeta}_{t}) - S^{T_k \wedge \zeta}_{t-}dZ^{k,\theta,\zeta}_t\right). \label{eq:deflated}
	\end{eqnarray}
We consider the case $\mathfrak{C} = \mathfrak{C}^u$. For any $\theta \in \Gamma$, the property \eqref{eq:martingale_after1} implies that $\sum_{k \in D, \zeta \in \Sigma}Z^{k, \theta, \zeta}\mathcal{S}^{1,T_{k} \wedge \zeta}$, ...,$\sum_{k \in D, \zeta \in \Sigma}Z^{k, \theta, \zeta}\mathcal{S}^{d,T_{k} \wedge \zeta}$ are martingales on $[\theta,\infty)$ (see also Remark \eqref{ex:density}), therefore, 
		$$ \int_{\theta}^{t}  H_u(\theta) d\left( \sum_{k \in D, \zeta \in \Sigma} Z^{k,\theta}_uS^{T_k \wedge \zeta}_{u}\right) $$
		is a local martingale on $[\theta,\infty)$. 
		Similarly, the term
		\begin{eqnarray}
&&\sum_{(k,\theta, \zeta) \in D \times \Gamma \times \Sigma} \int_0^{t \wedge \theta} H_u d(Z^{k,\theta,\zeta}_uS^{T_k \wedge \zeta}_{u}) \nonumber \\
&& \qquad = \int_0^{t \wedge\theta_1} H_u d \left(  \sum_{k \in D, \zeta \in \Sigma} \sum_{j=1}^{q} Z^{k,\theta_j, \zeta}_uS^{T_k \wedge \zeta}_{u}\right) 	\nonumber\\
&& \qquad + \int_{t \wedge\theta_1}^{t \wedge \theta_2} H_u d \left(  \sum_{k \in D, \zeta \in \Sigma} \sum_{j=2}^{q} Z^{k,\theta_j,\zeta}_uS^{T_k \wedge \zeta}_{u}\right) \nonumber\\
&&\qquad + ...+ \int_{t \wedge \theta_{q-1}}^{t \wedge \theta_q} H_u d \left(  \sum_{k \in D, \zeta \in \Sigma}  Z^{k,\theta_q,\zeta}_uS^{T_k \wedge \zeta}_{u}\right), \label{eq:term2}
		\end{eqnarray}
		is also a local martingale by \eqref{eq:mart_11}, see also Remark \eqref{ex:density}. Since $\widetilde{W}^{k,\theta, \zeta}_t(\widetilde{H})$ is uniformly bounded from below and $Z_t^{k,\theta, \zeta}, (k,\theta, \zeta) \in D \times \Gamma \times \Sigma$ are martingales, the last quantity  of (\ref{eq:deflated}) is bounded from below by a martingale and hence a supermartingale by Fatou's lemma. Therefore, we have 
		\begin{eqnarray*}
			E\left[ \sum_{(k,\theta, \zeta) \in D \times \Gamma \times \Sigma}	Z^{k,\theta,\zeta}\widetilde{W}^{k,\theta, \zeta}_{\infty}(\widetilde{H}) \right] \le 0.
		\end{eqnarray*}
We study the case $\mathfrak{C} = \mathfrak{C}^s$. This is the case where the first $n$ stocks ($1\le n \le  d$) are prohibited from short selling. Again, for any $\theta \in \Gamma$, the processes $\sum_{k \in D, \zeta \in \Sigma}Z^{k, \theta, \zeta}\mathcal{S}^{1,T_{k} \wedge \zeta}$, ...,$\sum_{k \in D, \zeta \in \Sigma}Z^{k, \theta, \zeta}\mathcal{S}^{n,T_{k} \wedge \zeta}$ are supermartingales, while $\sum_{k \in D, \zeta \in \Sigma}Z^{k, \theta, \zeta}\mathcal{S}^{n+1,T_{k}\wedge \zeta},$ ...,$\sum_{k \in D, \zeta \in \Sigma}Z^{k, \theta,\zeta }\mathcal{S}^{d,T_{k} \wedge \zeta}$ are martingales, see Remark \ref{ex:density_short}. Noting that $H^1 \ge 0,..,H^n \ge 0$, the vector stochastic integral $$ \int_{\theta}^{t}  H_u(\theta) d\left( \sum_{k \in D, \zeta \in \Sigma}Z^{k,\theta,\zeta}_uS^{T_k \wedge \zeta}_{u}\right) $$
is a local supermartingale on $[[\theta, \infty))$ by Lemma \ref{lem:int_super} below. Similarly, the term \eqref{eq:term2} is also a local supermartingale. Fatou's lemma yields the conclusion. The proof of Proposition \ref{pro:1} is complete. \qed 

\begin{lemma}\label{lem:int_super}
		Let $X = (X^1,...,X^d)$ be such that for some $1 \le n \le d$, the processes $X^1,...,X^n$ are supermartingales and $X^{n+1},...,X^d$ are martingales. Assume that $H = (H^1,...,H^d)$ is integrable w.r.t $X$ and $H^i \ge 0, i = 1,...,n$. Then the vector stochastic integral $\int H dX$ is a local supermartingale. 
	\end{lemma}
\begin{proof}
Define $H^n = H1_{\|H\| \le n}$ for  $n \in \mathbb{N}$. By Lemma 3.3 of \cite{pulido}, $H^n \cdot X$ is a local supermartingale. Therefore, $H \cdot X$ is also a local supermartingale. \qed
	
%			By the Doob–Meyer decomposition theorem, for $i=1,...,n$, there are unique martingale $M^i$ and predictable non-decreasing $A^i$ processes such that $X^i = M^i - A^i$. The $d$-dimensional process $\widetilde{M} = (M^1,...,M^n, X^{n+1},...,X^n)$ is a martingale and the process $\widetilde{A} = (A^1,...,A^n,0,...,0)$ is non-decreasing. We obtain that		$$\int H dX = \int H d\widetilde{M} - \int H d\widetilde{A}.$$		The term $\int H d\widetilde{A}$ is non-negative because $H^1,...,H^n$ are all non-negative. The term $\int H d\widetilde{M}$ is a local martingale, see \cite{sc2002}. Therefore, $\int H dX$ is a local supermartingale.  \qed 
\end{proof}

\subsection{Proof of Corollary \ref{cor:density_A} }
We note that the implication $(ii) \Rightarrow (iii)$ is obvious.

$(i) \Rightarrow (ii)$: Let us fix $D \times \Gamma \times \Sigma$ containing $(k,\theta, \zeta)$  such that $\infty \in \Gamma, \infty \in \Sigma$ and $A \in \mathcal{F}_{T_{k} \wedge \zeta}$ with $P(A) > 0$. By   Theorem \ref{thm:local_closed_0}, the convex set $\widetilde{\bC}^{D \times \Gamma \times \Sigma}$ is $\bw^*$-closed in $\bL^{\infty,D \times \Gamma \times \Sigma,b}$. Applying the Hahn-Banach theorem to the compact set $1_A\mathbf{1}^{k,\theta, \zeta}$ and the $\bw^*$-closed convex set $\widetilde{\bC}^{D \times  \Gamma \times \Sigma}$, there exists $\mathbf{Q} \in \left(  \bL^{\infty,D \times \Gamma \times \Sigma,b} \right) ^*$  such that 
$$  \sup_{\bff \in \widetilde{\bC}^{D\times \Gamma \times \Sigma}} \mathbf{Q}(\bff) \le \alpha < \beta \le \mathbf{Q}(1_A\mathbf{1}^{k,\theta, \zeta}).$$
By Theorem 3.6 of \cite{rudin}, $\mathbf{Q}$ has a continuous linear extension $\overline{\mathbf{Q}}$ to $\bL^{\infty,D \times \Gamma \times \Sigma}$ such that $\overline{\mathbf{Q}}|_{\bL^{\infty,D\times \Gamma \times \Sigma,b}} = \mathbf{Q} $. We identify	$$\overline{\mathbf{Q}} = (Z^{k,\theta,\zeta})_{k \in D, \theta \in \Gamma, \zeta \in \Sigma} \in \bigoplus_{k \in D, \theta \in \Gamma, \zeta \in \Sigma}  L^1(\mathcal{F}_{T_k \wedge \zeta},P),$$ and normalize  $\overline{\mathbf{Q}}(\mathbf{1})=1$. If there are no confusion, we write $\mathbf{Q}$ instead of $\overline{\mathbf{Q}}$. Since $\mathbf{0} \in \widetilde{\bC}^{D \times  \Gamma \times \Sigma}$, it follows that $\alpha \ge 0$. Since $\widetilde{\bC}^{D \times  \Gamma \times \Sigma}$ is a cone, we must have 
\begin{equation}\label{eq:Q_ro}
	\mathbf{Q}(\mathbf{f}) \le 0, \ \forall \mathbf{f} \in \widetilde{\bC}^{D \times  \Gamma \times \Sigma},
\end{equation}
and as a consequence,  $Z^{k,\theta, \zeta} \ge 0, \forall k \in D, \theta \in \Gamma, \zeta \in \Sigma$. Note that $E[Z^{k,\theta, \zeta}1_A] > 0.$ 

We assume that $D = \{k_1 < ...< k_p\}$,  $\Gamma = \{\theta_1 < ... < \theta_q\}$, and $\Sigma = \{ \zeta_1, < ... < \zeta_r\}$. Let $0 \ne u \in \mathfrak{C}$. For $\theta_{\ell} < \sigma \le \tau \le  \theta_{\ell+1}$ where $\ell \in \{0, 1,...,q-1\}, B \in \F_{\sigma}$ (with the convention $\theta_0 = 0$), the strategy $\widetilde{H} = (H, (H(\theta))_{\theta \in \Theta})$ with
$H = 1_{B} \frac{u}{||u||} 1_{]]\sigma, \tau]]}, H(\theta) = 0$ belongs to $\widetilde{\mathcal{A}}^{\Theta}$, noting that $S$ is bounded up to $T_{k_p}$. Therefore, \eqref{eq:stopped_wealth} implies that
$$W^{k,\theta, \zeta}_{\infty} = \int_0^{\theta} 1_B\frac{u}{\|u\|}1_{]]\sigma, \tau]]} dS^{T_{k} \wedge \zeta}_u, \ k \in D, \theta \in \Gamma, \zeta \in \Sigma$$ and then
\begin{eqnarray*}\label{eq:mart_2}
	E\left[ \sum_{i=1}^{p} \sum_{s = 1}^{r} \sum_{j=\ell+1}^{q}  Z^{k_i,\theta_j, \zeta_s} \left( u \cdot S^{T_{k_i} \wedge \zeta_s}_{\tau} - u \cdot S^{T_{k_i} \wedge \zeta_s}_{\sigma} \right)1_{B}  \right] \le 0,
\end{eqnarray*} 
and then \eqref{eq:mart_11} follows. Next, fixing $\theta_j, j \in \{1,...,q\}$ and choosing $\theta_j < \sigma \le \tau$,
\begin{eqnarray*}
	H_t = 0, \qquad H(\theta_j) = 1_B\frac{u}{||u||}1_{]]\sigma, \tau]]}, \qquad 	H(\theta) =  0, \ \forall \theta \ne \theta_j, 
\end{eqnarray*}
\eqref{eq:stopped_wealth} yields that
$$W^{k,\theta_j, \zeta}_{\infty} = \int_{\theta_j}^{\infty} 1_B\frac{u}{\|u\|}1_{]]\sigma, \tau]]} dS^{T_{k} \wedge \zeta}_u, \ k \in D, \zeta \in \Sigma,$$ 
and we obtain 
\begin{equation*}\label{eq:martingale_after}
	E\left[ \sum_{i=1}^{p} \sum_{s=1}^{r} Z^{k_i,\theta_j, \zeta_s} \left( u \cdot  S^{T_{k_i} \wedge \zeta_s}_{\tau} - u \cdot  S^{T_{k_i} \wedge \zeta_s}_{\sigma} \right)1_{B}  \right] \le 0.
\end{equation*}
and hence \eqref{eq:martingale_after1}. 

$(iii) \Rightarrow (i)$: Assume $\mathfrak{C} = \mathfrak{C}^u$. Choose $D = \{k\}$. There exists a $(D \times \{\infty\})$-pricing system $0 \le Z^{k,\infty}, E[Z^{k,\infty}] = 1$ such that $Z^{k,\infty}_tS^{T_k}_t$ is a local martingale on $[0,\infty)$.  Proposition \ref{pro:1} implies the condition NFLVR$_{\mathfrak{C}}$ holds for the market up to time $ T_k.$ The case $\mathfrak{C} = \mathfrak{C}^s$ is treated similarly.
\subsection{Proof of Proposition \ref{pro:reduce}}\label{proof:hedge_American} 

$(i) \implies (ii):$ Let $\widetilde{H}$ be such that \eqref{eq:hedge} holds true.  We construct a generalized strategy $\widetilde{G} = (G,(G(\theta))_{\theta \in \Theta})$ by $G = H, G(\theta) = H(\theta), \forall \theta \in \Theta$ and get \eqref{eq:hedge2_2}. 

$(ii) \implies (i):$ Let $\widetilde{G} = \left( G, (G(\theta))_{\theta \in \Theta} \right) $ be a strategy satisfying \eqref{eq:hedge2_2}, i.e., there exists $\widetilde{G} \in \widetilde{ \mathcal{A}}^{\Theta}$ such that $\forall
k \in \mathbb{N}, \theta, \zeta \in \Theta,$
\begin{equation}\label{superhedging2}
	z + \int_0^{\zeta \wedge T_k} \widetilde{G}_u(\theta) dS_u - \Phi_{\theta}1_{\theta\le \zeta \wedge T_k }  \ge  - x 1_{\zeta \wedge T_k < \infty}, \ a.s.
\end{equation}
Since $\Phi_t \ge -x, \ a.s.$, we get from \eqref{superhedging2} that $\forall k \in \mathbb{N}, \theta,  \zeta \in \Theta$
\begin{equation*}
\int_0^{\theta} G_u1_{[[0,\zeta\wedge T_k]]}dS_u + \int_{\theta}^{\infty} G_u(\theta)1_{]]\theta, \zeta \wedge T_k]]} dS_u \ge -(2x+z), \ a.s.
\end{equation*}
or equivalently 
\begin{equation}\label{eq:G_theta}
G_u(\theta)1_{]]\theta, \infty[[} \in \A_{2x+z}\left(\theta, \int_0^{\theta} G_udS_u \right), \  \forall \theta \in \Theta.
\end{equation}
We define 
\begin{equation}\label{eq:H}
	H_t:= G_t, \qquad H_t(\theta) = G_t(\theta) \ \text{ for all }\theta \in \Theta.
\end{equation} 
Fixing an arbitrary $\tau \in \mathbb{T} \setminus\Theta$, we will find $H(\tau)1_{]]\tau,\infty[[} \in \mathcal{A}_{2x+z}\left( \tau, \int_{0}^{\tau}H_udS_u \right) $ such that for all $\forall
k \in \mathbb{N}, \zeta \in \Theta,$
\begin{equation}\label{superhedging22}
	z + \int_0^{\zeta \wedge T_k } \widetilde{H}_u(\tau) dS_u - \Phi_{\tau}1_{\tau\le \zeta \wedge T_k }  \ge  - x 1_{\zeta \wedge T_k < \infty}, \ a.s.
\end{equation}
For each $n \in \mathbb{N}$, we  define a new sequence of exercise times
$$\tau_n = (m+1)/2^n \text{ on } A^m_{n}= \left\lbrace m/2^n \le \tau < (m+1)/2^n \right\rbrace \in \mathcal{F}_{(m+1)/2^n}, \ m \in \mathbb{N}$$
and $\tau_n = \infty$ on $\{\tau = \infty\}$. Then $\tau_n \downarrow \tau$. We define 
\begin{eqnarray*}
	H_t(\tau_n) &:=&   \sum_{m \in \mathbb{N}}G_t((m+1)/2^n)1_{A^m_{n}}. 
\end{eqnarray*}
Since $A^m_n \in \mathcal{F}_{(m+1)/2^n}$, the strategy $H(\tau_n)1_{]]\tau_n,\infty[[}$ is predictable. Furthermore, the strategy
$H(\tau_n)1_{]]\tau_n,\infty[[} \in \mathcal{A}_{2x+z}\left(\tau_n,\int_{0}^{\tau_n}H_udS_u \right) $ because of \eqref{eq:G_theta}. Define 
\begin{eqnarray*}
	\widetilde{H}_t(\tau_n) &:=& H_t1_{t \le \tau_n} + H_t(\tau_n)1_{t > \tau_n}. 
%	&=&G_t1_{t \le \tau_n} + 1_{t > \tau_n} \sum_{m \in \mathbb{N}}G((m+1)/2^n)1_{A^m_{n}}
\end{eqnarray*}
From \eqref{superhedging2},  we have $\forall k \in \mathbb{N}, \zeta \in \Theta$
\begin{eqnarray*}
	z + \int_0^{\zeta \wedge T_k} \widetilde{H}_u(\tau_n) dS_u - \Phi_{\tau_n}1_{\tau_n\le \zeta \wedge T_k }  \ge  - x 1_{\zeta \wedge T_k <\infty}, \ a.s.
\end{eqnarray*}
or equivalently,
\begin{eqnarray}
&&\int_0^{\tau} H_u1_{[[0, \zeta \wedge T_k]]} dS_u +  \left( \int_{\tau }^{\tau_n } H_u1_{]]\tau, \zeta \wedge T_k]]} dS_u  \right.  \label{eq:phi} \\
&&\left. +   \int_{\tau_n }^{\infty} \widetilde{H}_u(\tau_n)1_{]]\tau_n, \zeta \wedge T_k]]} dS_u\right)  \ge  \Phi_{\tau_n}1_{\tau_n\le \zeta \wedge T_k }  - x 1_{\zeta \wedge T_k <\infty} - z, \ a.s. \nonumber
\end{eqnarray}
Define $\bv = (v^{k,\tau,\zeta})_{k \in \mathbb{N}, \zeta \in \Theta}$ where 
$$v^{k,\tau,\zeta} =  \int_0^{\tau} H_u1_{[[0, \zeta \wedge T_k]]} dS_u.$$
By \eqref{eq:phi}, the sequence of vectors
$$-2x - z \le \bff_n := \left( \Phi_{\tau_n}1_{\tau_n \le \zeta \wedge T_k}  - x1_{\zeta \wedge T_k <\infty} - z \right)_{k \in \mathbb{N}, \zeta \in \Theta}  \in \mathbf{C}^{\mathbb{N} \times \Theta}\left(\tau ,\bv \right),$$
where $\mathbf{C}^{\mathbb{N} \times \Theta}\left(\tau ,\bv \right)$ is defined in \eqref{eq:c_set}. Noting that $\tau_n \downarrow \tau$, we have 
$$\mathbf{f}_n \to \mathbf{f}:= \left( \Phi_{\tau}1_{\tau \le \zeta \wedge T_k}  - x1_{\zeta \wedge T_k <\infty} - z \right)_{k \in \mathbb{N}, \zeta \in \Theta}, \ a.s. $$
by Assumption \ref{assumption_main} (ii). By Proposition \ref{pro:countable_Fatou}, the set $\mathbf{C}^{\mathbb{N} \times \Theta}\left(\tau,\bv \right)$ is Fatou-closed. Therefore, $\bff \in \mathbf{C}^{\mathbb{N} \times \Theta}\left(\tau,\bv \right) $, that is there exists $H(\tau)$ such that for all $k \in \mathbb{N}, \zeta \in \Theta$,  
\begin{eqnarray*}
z + \int_0^{\tau} H_u1_{[[0, \zeta \wedge T_k]]} dS_u + \int_{\tau}^{\infty}H_u(\tau)1_{]]\tau, \zeta \wedge T_k]]}dS_{u} \ge \Phi_{\tau}1_{\tau \le \zeta \wedge T_k} - x1_{\zeta \wedge T_k < \infty}, \ a.s.
\end{eqnarray*}
and $$H(\tau)1_{]]\tau,\zeta \wedge T_k]]} \in  \mathcal{A}_{2x+z}\left( \tau, \int_{0}^{\tau}H_u1_{[[0, \zeta \wedge T_k]]} dS_u \right).$$
The strategy defined by $\widetilde{H}=(H,(H(\tau))_{\tau \in \mathbb{T}})$ satisfies \eqref{superhedging22}. Sending $k$ to infinity in \eqref{superhedging22}, we get that for any $\tau \in \mathbb{T}, \zeta \in \Theta$,
\begin{equation}\label{hedging4}
	z + \int_0^{\zeta} \widetilde{H}_u(\tau) dS_u - \Phi_{\tau}1_{\tau\le \zeta }  \ge  - x 1_{\zeta  <\infty}, \ a.s.
\end{equation}
For any $t \in [0,\infty)$, applying \eqref{hedging4} to the sequence $\zeta_n \downarrow t$ we get \eqref{eq:hedge} and  the proof is complete. 

\subsection{Proof of Theorem \ref{thm:hedge_American}}\label{subsection:proof_American}
Firstly, we consider the case $\Phi$ is bounded and $\Phi_t \ge -x, a.s., \forall t \ge 0$. Define $$\widetilde{\Phi}^{k,\theta,\zeta}:= \Phi_{\theta}1_{\theta\le  \zeta \wedge T_k}   - x 1_{\zeta \wedge T_k <\infty} \ge -2x, \ a.s.,  \ \forall k \in \mathbb{N}, \theta \in \Theta, \zeta \in \Theta.$$
By Proposition \ref{pro:reduce}, it suffices to  prove that
\begin{eqnarray}
&&\sup_{\bZ\in \mathcal{Z}^{A}}E\left[\sum_{k \in \mathbb{N}, \theta\in \Theta, \zeta \in \Sigma}Z_{}^{k,\theta,\zeta}\widetilde{\Phi}^{k,\theta,\zeta}\right] \nonumber \\
&=& \inf \left\lbrace z:   \exists   \widetilde{H} \in \widetilde{\mathcal{A}}^{\Theta}_{}, z + \int_0^{T_k \wedge \zeta} \widetilde{H}_u(\theta) dS_u \ge \widetilde{\Phi}^{k,\theta,\zeta}, \ a.s., \forall k \in \mathbb{N}, \theta \in \Theta, \zeta \in \Theta \right\rbrace \nonumber\\
&=& \inf \left\lbrace z:   \exists  \widetilde{H} \in \widetilde{\mathcal{A}}^{\Theta}_{2x+z}, 	z + \int_0^{ T_k \wedge \zeta} \widetilde{H}_u(\theta) dS_u \ge \widetilde{\Phi}^{k,\theta, \zeta}, \ a.s., \forall k \in \mathbb{N}, \theta \in \Theta, \zeta \in \Sigma\right\rbrace. \nonumber \\
	\label{eq:super}
\end{eqnarray}
The second quality of \eqref{eq:super} is clear. We prove the first equality of \eqref{eq:super}. For any $\bZ \in \mathcal{Z}^{A}$, we have
\begin{eqnarray}\label{eq11}
	z + E\left[ \sum_{k \in \mathbb{N},\theta \in \Theta, \zeta \in \Theta} Z^{k,\theta, \zeta} \widetilde{W}^{k,\theta, \zeta}_{\infty}(\widetilde{H}) \right] \ge E\left[ \sum_{k \in \mathbb{N},\theta \in \Theta, \zeta \in \Theta} Z^{k,\theta, \zeta} \widetilde{\Phi}^{k,\theta, \zeta}  \right] .
\end{eqnarray}
By Proposition \ref{pro:1}, we have 
\begin{eqnarray*}
	E\left[ \sum_{k \in \mathbb{N}, \theta \in \Theta, \zeta \in \Theta} Z^{k,\theta, \zeta} \widetilde{W}^{k,\theta, \zeta}_{\infty}(\widetilde{H}) \right] \le 0.
\end{eqnarray*}
Therefore, $z \ge \sup_{\bZ \in \mathcal{Z}^{A}} E\left[ \sum_{k \in \mathbb{N}, \theta \in \Theta, \zeta \in \Theta} Z^{k,\theta, \zeta}\widetilde{\Phi}^{k,\theta, \zeta} \right].$ 
Next, we prove the reverse inequality. Let $z \in \mathbb{R}$ be such that there is no strategy $\widetilde{H} \in \widetilde{\mathcal{A}}^{\Theta}_{}$ satisfying
$$ z + \widetilde{W}^{k,\theta, \zeta}_{\infty}(\widetilde{H}) \ge \widetilde{\Phi}^{k,\theta, \zeta}, a.s., \ \forall (k,\theta, \zeta)\in \mathbb{N} \times \Theta \times \Theta.$$
In other words, $(\widetilde{\Phi}^{k,\theta, \zeta})_{(k,\theta, \zeta)\in \mathbb{N} \times \Theta \times \Theta} - z \notin \widetilde{\bC}^{\mathbb{N} \times \Theta \times \Theta}.$ From Theorem \ref{thm:local_closed_0},  the Hahn-Banach theorem implies that there exists $\mathbf{Q} = (Z^{k,\theta, \zeta})_{(k,\theta, \zeta)\in \mathbb{N} \times \Theta \times \Theta}$ in $\bigoplus_{(k,\theta, \zeta)\in \mathbb{N} \times \Theta \times \Theta} L^1(\mathcal{F}_{ T_k \wedge \zeta},P)$ such that 
$$  \sup_{\bff \in \widetilde{\bC}^{\mathbb{N} \times \Theta \times \Theta}} \mathbf{Q}(\bff) \le \alpha < \beta \le \mathbf{Q}\left((\widetilde{\Phi}^{k,\theta, \zeta})_{(k,\theta, \zeta)\in \mathbb{N} \times \Theta \times \Theta} - z \right) .$$ 
Since $\widetilde{\bC}^{\mathbb{N} \times \Theta \times \Theta}$ is a cone containing $-\bL^{\infty}_+$, it is necessarily that
$$ \sup_{\bff \in \widetilde{\bC}^{\mathbb{N} \times \Theta \times \Theta}} \mathbf{Q}(\bff)  = 0, \qquad  \mathbf{Q}\left((\widetilde{\Phi}^{k,\theta, \zeta})_{(k,\theta, \zeta)\in \mathbb{N} \times \Theta \times \Theta} - z \right) > 0.$$
We also deduce that $Z^{k,\theta, \zeta} \ge 0, a.s., (k,\theta, \zeta)\in \mathbb{N} \times \Theta \times \Theta$ and it is possible to normalize $\bQ$ such that $\bQ(\mathbf{1}) = 1$. This means $\bZ \in \mathcal{Z}^{A}$ and that 
$$z < \mathbf{Q}\left((\widetilde{\Phi}^{k,\theta, \zeta})_{(k,\theta, \zeta)\in \mathbb{N} \times \Theta \times \Theta} \right) \le \sup_{\bZ \in \mathcal{Z}^{A}} E\left[ \sum_{(k,\theta, \zeta)\in \mathbb{N} \times \Theta \times \Theta} Z^{k,\theta, \zeta} \widetilde{\Phi}^{k,\theta, \zeta} \right] .$$
Secondly, we consider the case $\Phi_t \ge 0$ for all $t \ge 0$ and prove that
\begin{eqnarray}
&&\sup_{\bZ\in \mathcal{Z}^{A}}E\left[\sum_{k \in \mathbb{N}, \theta\in \Theta, \zeta \in \Theta}Z^{k,\theta, \zeta}\widetilde{\Phi}^{k,\theta, \zeta}\right] \nonumber\\
&=& \inf \left\lbrace z:   \exists  \widetilde{H} \in \widetilde{\mathcal{A}}^{\Theta}_{x+z}, 	z + \int_0^{T_k \wedge \zeta} \widetilde{H}_u(\theta) dS_u \ge \widetilde{\Phi}^{k,\theta, \zeta}, \ a.s., \forall
	k \in \mathbb{N}, \theta \in \Theta, \zeta \in \Theta \right\rbrace. \nonumber \\
	\label{eq:super2}
\end{eqnarray}
The inequality $(\leq)$ in \eqref{eq:super2} follows by the same manner. Consider the reverse inequality. In this case, it holds that  $\widetilde{\Phi}^{k,\theta, \zeta} \ge - x, \ a.s.$ Let $z \in \mathbb{R}$ be such that
$z > \sup_{\bZ \in \mathcal{Z}^{A}} E\left[ \sum_{(k,\theta, \zeta)\in \mathbb{N} \times \Theta \times \Theta} Z^{k,\theta, \zeta} \widetilde{\Phi}^{k,\theta, \zeta} \right].$ Then for all $n \in \mathbb{N}$, we have
$$z > \sup_{\bZ \in \mathcal{Z}^{A}} E\left[ \sum_{(k,\theta, \zeta)\in \mathbb{N} \times \Theta \times \Theta} Z^{k,\theta, \zeta} \left(  \widetilde{\Phi}^{k,\theta, \zeta} \wedge n\right)  \right]. $$
The result for the bounded case implies for each $n \in \mathbb{N}$, there exist $-x \le z_n  \le z$ and  $\widetilde{H}^n \in \widetilde{\mathcal{A}}^{\Theta}_{x + z_n}$ such that 
$$ z_n + \widetilde{W}^{k,\theta, \zeta}_{\infty}(\widetilde{H}^n) \ge  \widetilde{\Phi}^{k,\theta, \zeta} \wedge n, \ a.s., \ \forall (k,\theta, \zeta) \in \mathbb{N} \times \Theta \times \Theta.$$
Up to a subsequence, we may assume that $z_n \to z^* \le z$. The strategy $\widetilde{H}^n \in \widetilde{\mathcal{A}}^{\Theta}_{x+z}$ for each $n \in \mathbb{N}$. The sequence $\widetilde{\bW}_n = \left(\widetilde{W}^{k,\theta, \zeta}_{\infty}(\widetilde{H}^n) \right)_{(k,\theta, \zeta) \in \mathbb{N} \times \Theta \times \Theta} $ is in $\widetilde{\bK}^{\mathbb{N} \times \Theta \times \Theta}_{x+z}$, which is $c$-bounded by Lemma \ref{lemma:c-bounded}. We may take convex combination $\widehat{\bW}_n \in  conv(\widetilde{\bW}_n, \widetilde{\bW}_{n+1},...)$ which converges almost surely to a vector $\widehat{\bW}$. We still have $z^* + \widehat{\bW} \ge (\widetilde{\Phi}^{k,\theta, \zeta})_{(k,\theta, \zeta) \in \mathbb{N} \times \Theta \times \Theta}$. Since $\widetilde{\bC}^{\mathbb{N} \times \Theta \times \Theta}_0$ is Fatou-closed by Proposition \ref{pro:countable_Fatou_A}, there exists a strategy $\widetilde{H}^* \in \widetilde{\mathcal{A}}^{\Theta}_{x+z}$ such that 
$$z^* + \widetilde{W}^{k,\theta, \zeta}_{\infty}(\widetilde{H}^*) \ge \widetilde{\Phi}^{k,\theta, \zeta}, \ a.s., \ \forall (k,\theta, \zeta) \in \mathbb{N} \times \Theta \times \Theta.$$ 
Hence, we get 
$$z \ge \inf\left\lbrace  v \in \mathbb{R}: \exists \widetilde{H} \in \widetilde{\mathcal{A}}^{\Theta}_{x+v}, v + \widetilde{W}^{k,\theta, \zeta}_{\infty}(\widetilde{H}^*) \ge \widetilde{\Phi}^{k,\theta, \zeta}, \ a.s., \forall
(k,\theta, \zeta) \in \mathbb{N} \times \Theta \times \Theta
\right\rbrace.$$ The proof is complete.
\subsection{Proof of Proposition \ref{pro:appro_hedging}}
Denote by $\mathcal{Q}^k$ the set of Radon-Nikodym densities of  equivalent local martingale measures for the market up to time $T \wedge T_k$. By the classical superhedging duality, we compute
	\begin{eqnarray*}
		\pi_{k}(G_{T},x) &=& \sup_{Z \in \mathcal{Q}^k} E\left[ Z_{T_k \wedge T}\left( G_{T}1_{T \le  T_k} - x1_{ T_k <T}\right) \right] \\
		&=& \sup_{Z \in \mathcal{Q}^k} E\left[ Z_{T_k \wedge T}\left( G_{T}1_{T \le  T_k} + x1_{T_k \ge T}\right) \right]  - x\\
		&=&\sup_{Y \in ELMD} E\left[ Y_{T_k \wedge T}\left( G_{T} + x\right)1_{ T_k \ge T} \right]  - x\\
		&=&\sup_{Y \in ELMD} E\left[ Y_{T}\left( G_{T} + x\right)1_{ T_k \ge T} \right]  - x
	\end{eqnarray*}
	where the third equality is due to the fact that for each density $Z_{T \wedge T_k}$, there exists an ELMD $Y$ such that $Z_{T\wedge T_k}1_{T_k \ge T} = Y_{T \wedge T_k}1_{T_k \ge T}$ and vice versa.  
	
	The sequence $\pi_k(G_{T},x), k \in \mathbb{N}$ is increasing, and bounded by $\pi_T(G_T,x)$.  Indeed, for each $k \in \mathbb{N}$, let $H^k$ be a superhedging strategy corresponding to the price $\pi_k(G_{T},x)$ for the payoff $G_{T}1_{T \le T_k } - x1_{T_k <T}$  when trading until $T \wedge T_k $, i.e.,
	\begin{equation}\label{eq:super_k}
		\pi_k(G_{T},x) + \int_0^{T_k \wedge T} H^k_u dS_u \ge  G_{T}1_{ T \le T_k} - x1_{T_k <   T}, \ a.s.
	\end{equation}
	This comes from the classical superhedging duality. For the price $\pi_T(G_T,x)$, it satisfies the condition \eqref{eq:super_Eur_prod_2},  so it is clear that $\pi_k(G_{T},x) \le  \pi_T(G_T,x)$ for all $k \in \mathbb{N}$.  Therefore, there exists the limit $\pi^* = \lim_{k \to \infty}\pi_k(G_{T},x) \le \pi_T(G_{T},x)$. 
	
	We prove that $\pi^* \ge \pi_T(G_{T},x)$. From \eqref{eq:super_k},   extending $H^k = 0$ on $]]T \wedge T_k,T]]$, we obtain for all $n \ge k$ that
	\begin{equation*}
		\pi_k(G_{T},x) + \int_0^{T_n \wedge T} H^k_u dS_u \ge G_{T}1_{ T \le T_k } - x1_{T_k < T}, \ a.s.
	\end{equation*}
	For $n < k$, it holds that
	\begin{equation*}
		\pi_k(G_{T},x) + \int_0^{T_n \wedge T} H^k_u dS_u \ge  G_{T}1_{T \le  T_n } - x1_{T_n < T \le \ T_k } -  x1_{ T_k <T}, \ a.s.
	\end{equation*}
	This means that for all $k \in \mathbb{N}$, the vector $\bff_k =(f^n_k)_{n\in \mathbb{N}}$ defined by
	$$ f^n_k=
	\left\{
	\begin{array}{ll}
		G_{T}1_{ T \le T_k } - x1_{T_k < T} - \pi_k(G_{T},x)  & \mbox{if } n \geq k \\
		G_{T}1_{T \le  T_n } - x1_{T_n < T \le \ T_k } -  x1_{ T_k <T} - \pi_k(G_{T},x)& \mbox{if } n < k
	\end{array} 
	\right., $$
	is in $\mathbf{C}^{\mathbb{N}  \times \{T\}}(0,\mathbf{0})$, see in  \eqref{eq:c_set}. 	Noting that $\bff_k \ge -2x - \pi^*, \ a.s.$ for all $k \in \mathbb{N}$ and 
	$$\bff_k \to \left( G_{T}1_{T \le T_n} - x 1_{T_n <T} - \pi^* \right)_{n \in \mathbb{N}}$$
	as $k \to \infty$. The vector $\left( G_{T}1_{T \le T_n} - x 1_{T_n <T}  -  \pi^* \right)_{n \in \mathbb{N}}$ is in $\mathbf{C}^{\mathbb{N} \times \{T\}}(0,\mathbf{0})$ by Proposition \ref{pro:countable_Fatou}, and thus there exists a strategy $H$  such that for all $n \in \mathbb{N}$, we have $H1_{]]0,T_n \wedge T]]} \in \A_{2x+\pi^*}(0,0)$ and 
	$$\pi^* + \int_0^{T_n \wedge T} H_u dS_u - G_{T}1_{T\le  T_n }  \ge  - x 1_{T_n <T}, \ a.s,$$
	Therefore, $\pi^* \ge \pi_T(G_T,x).$ Finally, the dominated convergence theorem yields
	\begin{eqnarray*}
		\lim_{k \to \infty} \sup_{Y \in ELMD}  E\left[ Y_{T }\left( G_{T} +x\right) 1_{T \le T_n }  \right] =  \sup_{Y \in ELMD} E\left[ Y_{T}(G+x)\right]
	\end{eqnarray*}
	and then \eqref{eq:superheding} follows. 
\section{Appendix}\label{sec:app}
\subsection{Convex compactness and maximal elements in $(L^0_+)^{\mathbb{N}}$}\label{sec:convex_comp}

%Let $L^{\S}$ denote the set of 
%$[0,\infty]$-valued random variables, equipped with the topology of convergence in probability.
A set $A\subset L^0_{+}$ is \emph{bounded} if $\sup_{X\in A}P(X\geq n)\to 0$, $n\to\infty$.
Now consider the topological product $\mathbf{L}:=(L^0_+)^{\mathbb{N}}$.
We call a subset $C\subset\mathbf{L}$ \emph{c-bounded}, if $\pi_k(C)$
is bounded in $L^0_{+}$ for all coordinate mappings $\pi_k:\mathbf{L}\to L^{0}_{+}$, $k\in\mathbb{N}$.

For any set $A$ we denote by $\mathrm{Fin}(A)$ the family of all non-empty finite subsets of $A$. This
is a directed set with respect to the partial order induced by inclusion. 
We recall  Definition 2.1 of \cite{gordan}.

\begin{definition}
	{\rm A convex subset $C$ of some topological vector space is \emph{convexly compact},
		if for any non-empty set $A$ and any family $F_a$, $a\in A$ of closed and convex
		subsets of $C$, one has $\cap_{a\in A}F_a\neq\emptyset$ whenever
		\[
		\forall B\in\mathrm{Fin}(A),\ \cap_{a\in B}F_a\neq\emptyset.
		\]}
\end{definition}

It was established, independently in both \cite{pratelli} and \cite{gordan}, that every
closed and bounded convex subset of $L_+^0$ is convexly compact. The following criterion
was formulated in Proposition 4.2 of \cite{cfr}. 

\begin{proposition}\label{bfl}
	Any c-bounded, convex and closed subset $C\subset\mathbf{L}$ is convexly compact.
\end{proposition}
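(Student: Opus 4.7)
The plan is to identify the countable product $\mathbf{L} = (L^0_+)^{\mathbb{N}}$, equipped with its product topology, with a single $L^0_+$ space over an enlarged measure space, so that the claim reduces directly to the convex compactness of bounded closed convex subsets of $L^0_+$ established in \cite{pratelli} and \cite{gordan}. Fix a probability measure $\mu$ on $\mathbb{N}$ with full support, for instance $\mu(\{k\}) = 2^{-k}$, and consider the natural map
$$\Psi: \mathbf{L} \to L^0_+\bigl(\Omega \times \mathbb{N},\, \mathcal{F} \otimes 2^{\mathbb{N}},\, P \otimes \mu\bigr), \qquad \Psi\bigl((x_k)_k\bigr)(\omega, k) := x_k(\omega).$$
This is a linear order-preserving bijection, so closedness and convexity of subsets will be transported between the two spaces by $\Psi$ as soon as $\Psi$ is a homeomorphism.

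I would first check that $\Psi$ is indeed a homeomorphism between the product topology on $\mathbf{L}$ and the convergence-in-probability topology on $L^0_+(P \otimes \mu)$. This follows from the identity
$$(P \otimes \mu)\bigl(\{|\Psi(x^n) - \Psi(x)| > \varepsilon\}\bigr) = \sum_k \mu(\{k\})\, P(|x^n_k - x_k| > \varepsilon).$$
Convergence of the sum to zero forces coordinate-wise convergence in probability because $\mu(\{k\}) > 0$, and the converse follows by dominated convergence applied to the series, since each term is bounded by the $\mu$-summable sequence $\mu(\{k\})$.

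Next, I would verify that $\Psi$ maps c-bounded sets onto bounded subsets of $L^0_+(P \otimes \mu)$. The implication from bounded to c-bounded is immediate by projection. For the converse, given a c-bounded $C$ and $\varepsilon > 0$, pick $K$ with $\sum_{k > K} \mu(\{k\}) < \varepsilon/2$, and for each $k \leq K$ pick $N_k$ such that $\sup_{x \in C} P(x_k \geq n) \leq \varepsilon/(2K)$ for all $n \geq N_k$, which is possible by c-boundedness. Splitting the sum into head and tail yields $\sup_{x \in C}(P \otimes \mu)\bigl(\{\Psi(x) \geq n\}\bigr) \leq \varepsilon$ for all $n \geq \max_{k \leq K} N_k$, which is the required boundedness.

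Having established these two facts, $\Psi(C)$ is a bounded, convex, closed subset of $L^0_+(P \otimes \mu)$, hence convexly compact by the one-dimensional result of \cite{pratelli} and \cite{gordan}. Pulling back through the homeomorphism $\Psi$, which preserves closedness, convexity, and therefore the finite-intersection property of families of closed convex subsets, delivers convex compactness of $C$. The main subtlety I anticipate is the identification c-bounded $\leftrightarrow$ bounded: for a finite product this would be trivial, but for the infinite product it is precisely here that the countability of $\mathbb{N}$ enters, allowing the tail of the chosen probability measure $\mu$ to absorb the non-uniformity of the coordinate-wise bounds.
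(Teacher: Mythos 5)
Your proof is correct. Note that the paper itself gives no argument for Proposition \ref{bfl}: it simply imports the statement as Proposition 4.2 of \cite{cfr}, so your write-up is a genuinely self-contained alternative. Your route --- identifying $\mathbf{L}=(L^0_+)^{\mathbb{N}}$ with $L^0_+(\Omega\times\mathbb{N},\mathcal{F}\otimes 2^{\mathbb{N}},P\otimes\mu)$ for a fully supported summable weight $\mu$, checking that the product topology corresponds to convergence in $P\otimes\mu$-probability and that c-boundedness corresponds to boundedness in probability, and then quoting the single-space result of \cite{pratelli} and \cite{gordan} --- buys a short reduction in which the only point where countability of the index set enters is the summability of the weights, exactly as you observe; convexity and closedness transport for free because $\Psi$ is the restriction of a linear bijection of $(L^0)^{\mathbb{N}}$ onto $L^0(P\otimes\mu)$. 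Three cosmetic points to tighten: (i) normalize $\mu$ (if $0\in\mathbb{N}$ then $\sum_k 2^{-k}=2$), or simply note that any finite measure with full support works; (ii) your sequential arguments suffice because both topologies are metrizable --- indeed, with $d_0$ as defined in the paper, $\sum_k\mu(\{k\})\,d_0(x_k,y_k)$ metrizes the product topology and coincides with the $d_0$-distance of $\Psi(\mathbf{x})$ and $\Psi(\mathbf{y})$ under $P\otimes\mu$, so $\Psi$ is even an isometry for compatible metrics; (iii) when pulling back the finite-intersection property, add the remark that closed convex subsets of $C$ are closed in $\mathbf{L}$ (since $C$ is closed), so $\Psi$ carries them to closed convex subsets of $\Psi(C)$ and conversely. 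Also, only the implication c-bounded $\Rightarrow$ bounded is actually used; the converse direction you verify is dispensable.
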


For $\mathbf{f}=(f_{0},f_{1},\ldots),\mathbf{g}=(g_{0},g_{1},\ldots)\in\mathbf{L}$, we write
$$
\mathbf{f}\preceq \mathbf{g}
$$ 
when $f_{k}\leq g_{k}$ for all $k\in\mathbb{N}$.

\begin{lemma}\label{lemma_maximal} Each (nonempty) closed $c$-bounded set $C\subset \mathbf{L}$ contains a maximal element
	with respect to the partial ordering $\preceq$.{}
\end{lemma}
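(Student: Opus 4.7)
The plan is to invoke Zorn's lemma, using the convex compactness of $C$ granted by Proposition \ref{bfl} to produce upper bounds for chains.

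More precisely, let $\{\mathbf{f}_\alpha\}_{\alpha\in A}$ be an arbitrary non-empty chain in $C$ with respect to $\preceq$. For each $\alpha\in A$, I would consider the set
\[
F_\alpha \;:=\; \{\mathbf{g}\in C : \mathbf{g}\succeq \mathbf{f}_\alpha\}.
\]
The first step is to check that each $F_\alpha$ is a closed, convex subset of $C$. Convexity is immediate: if $\mathbf{g}_1,\mathbf{g}_2\succeq \mathbf{f}_\alpha$, then for $\lambda\in[0,1]$ the convex combination dominates $\mathbf{f}_\alpha$ coordinatewise. Closedness follows because, for each fixed $k$, the set $\{h\in L^0_+ : h\geq f_{\alpha,k}\}$ is closed in $L^0_+$ (convergence in probability admits a.s.\ convergent subsequences, which preserves the inequality a.s.), so $\{\mathbf{g}\in \mathbf{L}:\mathbf{g}\succeq \mathbf{f}_\alpha\}$ is an intersection of countably many closed sets and hence closed in the metrizable product $\mathbf{L}$; intersecting with the closed set $C$ preserves closedness.

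The second step is to verify the finite intersection property for $\{F_\alpha\}_{\alpha\in A}$. Given any $B\in\mathrm{Fin}(A)$, since $\{\mathbf{f}_\alpha\}_{\alpha\in B}$ is finite and totally ordered by $\preceq$, it admits a maximum element $\mathbf{f}_{\alpha^*}$, and this element lies in $\bigcap_{\alpha\in B}F_\alpha$ (it is in $C$ and dominates each $\mathbf{f}_\alpha$, $\alpha\in B$). Since $C$ is closed and $c$-bounded, Proposition \ref{bfl} ensures that $C$ is convexly compact, and therefore
\[
\bigcap_{\alpha\in A}F_\alpha \;\neq\;\emptyset.
\]
Any $\mathbf{g}^*$ in this intersection lies in $C$ and satisfies $\mathbf{g}^*\succeq \mathbf{f}_\alpha$ for all $\alpha\in A$, so it is an upper bound for the chain inside $C$.

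Having shown that every chain in $(C,\preceq)$ has an upper bound in $C$, Zorn's lemma produces a maximal element of $C$. I do not expect any serious obstacle: the only mildly delicate point is confirming that the sets $F_\alpha$ are closed in the product topology on $\mathbf{L}$, which reduces to the elementary fact that order inequalities pass to limits in probability along subsequences. The substantive content is already packaged in Proposition \ref{bfl}; the rest is a standard Zorn argument.
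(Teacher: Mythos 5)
Your argument has a genuine gap: the lemma as stated assumes only that $C$ is nonempty, closed and $c$-bounded — it does \emph{not} assume that $C$ is convex — yet convexity is essential to your proof in two places. Proposition \ref{bfl} yields convex compactness only for \emph{convex}, closed, $c$-bounded sets, and the definition of convex compactness itself only tests families $F_a$ of closed \emph{convex} subsets of $C$; without convexity of $C$ your sets $F_\alpha=\{\mathbf{g}\in C:\mathbf{g}\succeq\mathbf{f}_\alpha\}$ need not be convex, so the finite intersection property buys you nothing. What you have proved is therefore the lemma with an added convexity hypothesis, not the lemma itself. The paper's proof is quite different and avoids convexity altogether: a transfinite recursion up to the first uncountable ordinal, taking coordinatewise suprema at countable limit ordinals (a.s.\ finite by $c$-boundedness, in $C$ by closedness), together with the nonincreasing functional $\Lambda(\gamma)=\sum_{k}2^{-k}E[e^{-\mathbf{f}^{\gamma}_{k}}]$, which must stabilise at some countable ordinal, and the element there is maximal.

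The rest of your argument (closedness and convexity of each $F_\alpha$ when $C$ is convex, the finite intersection property along a chain, Zorn) is fine, and the convex version would in fact suffice for the paper's applications, since the sets $\mathfrak{D}(\tau,v)$ and $\mathfrak{D}^{D\times\Gamma}$ to which the lemma is applied are convex (convex combinations of admissible strategies are admissible). If you want to keep the Zorn route but prove the stated lemma, drop convex compactness and build the upper bound of a chain directly: for each coordinate $k$ choose a countable subfamily of the chain whose $k$-th coordinates increase a.s.\ to the essential supremum of that coordinate over the chain; the union over $k$ of these subfamilies is a countable subchain, which can be arranged into a $\preceq$-increasing sequence whose coordinatewise limit is a.s.\ finite by $c$-boundedness, belongs to $C$ by closedness, and dominates every element of the chain. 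That argument uses exactly the hypotheses of the lemma, no convexity, and is arguably more elementary than the transfinite recursion in the paper.
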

\begin{proof}
	We apply transfinite recursion along the ordinals $0\leq \alpha<\omega_{1}$, where $\omega_{1}$ is the
	first uncountable cardinal.	
	
	Let $\mathbf{f}^{0}\in C$ be arbitrary. If $\alpha=\beta +1$ is a successor ordinal and $\mathbf{f}^{\beta}$ is not
	maximal then we can find $\mathbf{f}^{\alpha}=\mathbf{f}^{\beta+1}\in C$ such that $\mathbf{f}^{\beta}_{k}\leq{}
	\mathbf{f}^{\alpha}_{k}$ for each $k\in\mathbb{N}$ and there exists $k_{0}\in\mathbb{N}$ such that
	$P(\mathbf{f}^{\beta}_{k_{0}}<\mathbf{f}^{\alpha}_{k_{0}})>0$. If $\mathbf{f}^{\beta}$ is maximal
	then set $\mathbf{f}^{\alpha}:=\mathbf{f}^{\beta}$. 
	
	If $\alpha$ is a limit ordinal, we take a sequence $\alpha_{n}$, $n\in\mathbb{N}$ cofinal in $\alpha$ and
	define $\mathbf{f}^{\alpha}_{k}:=\sup_{n\in\mathbb{N}}\mathbf{f}^{\alpha_{n}}_{k}$, $k\in\mathbb{N}$.
	We claim that $\mathbf{f}^{\alpha}_{k}$ is a.s.\ finite for each $k$. Indeed, $c$-boundedness of $C$ implies
	that $\mathbf{f}^{\alpha_{n}}_{k}$, $n\in\mathbb{N}$ is bounded in $L^{0}_{+}$ hence its limit is necessarily
	a.s.\ finite. Now the closedness of $C$ implies that $\mathbf{f}^{\alpha}\in C$.
	
	The above construction gives us a family $\mathbf{f}^{\gamma}$, $\gamma<\omega_{1}$. Define
	$$
	\Lambda({\gamma}):=\sum_{k=0}^{\infty} 2^{-k}E[e^{-\mathbf{f}^{\gamma}_{k}}]
	$$
	and let $\Lambda_{*}:=\inf_{\gamma}\Lambda(\gamma)$. As every point in $\mathbb{R}$ has a countable
	neighbourhood basis, there is a sequence $\gamma_{n}$ such that $\Lambda(\gamma_{n})$ tends
	to $\Lambda_{*}$ in a non-increasing way. The ordinal $\gamma_{*}:=\sup_{n}\gamma_{n}$ is necessarily
	countable, that is $\gamma_{*}<\omega_{1}$. But this means that $$
	\Lambda_{*}\leq\Lambda(\gamma_{*})\leq \inf_{\gamma}\Lambda(\gamma)=\Lambda_{*}
	$$
	which guarantees that $\Lambda(\gamma_{*})=\Lambda_{*}$ hence
	$\mathbf{f}^{\gamma}_{k}= \mathbf{f}^{\gamma_{*}}_{k}$ for all $\gamma\geq\gamma_{*}$
	and for all $k\in\mathbb{N}$. But this means that $\mathbf{f}_{\gamma_{*}}$ is a maximal element for $\preceq$. \qed  
\end{proof}

\subsection{Some topological spaces}
We recall some topological vector spaces for convenience of the readers.

\textbf{Inductive topologies.} Let $X$ be a vector space and $I$ be an index set such that for each 
$i \in I$, we are given a locally convex space $(X_i, \tau_i)$ and a linear mapping $f_i: X_i \to X$. 
The inductive topology of the family of spaces $(X_i, \tau_i)$ with respect to the family of 
mappings $f_i$ is the strongest locally convex topology in $X$ such that all mappings $f_i$ are continuous. 
The inductive limit of the family $X_i$ with respect to the mappings $f_i$ is the vector space $X$ 
equipped with this topology.  

\textbf{Topological direct sums.}
Let $I$ be a non-empty set and, for each $i \in I$, let $(X_i,\tau_i)$ be a locally convex topological spaces. 
The topological direct sum of the family $(X_i,\tau_i)$, denoted by 
$\bigoplus_{i \in I} (X_i, \tau_i)$, is the locally convex space defined as follows. 
The vector space $\bigoplus_{i \in I} X_i$ is the set of tuples $(x_i)_{i \in I}$ with $x_i \in X_i$ such that 
$x_i =0$ for all but finitely many $i$. It is equipped with the inductive topology with respect to the canonical embeddings 
\begin{eqnarray*}
	e_i: (X_i,\tau_i) &\to& X\\
	x_i &\mapsto& x = (x^i),
\end{eqnarray*}
where $x^i = x_i$ and $x^j = 0$ whenever $j \ne i$, i.e.\ the 
strongest locally convex topology on $\bigoplus_{i \in I} X_i$ such that all these embeddings are continuous.

\textbf{LF space (countable inductive limits of Fr\'echet spaces)} Let $X_n , n \in \mathbb{N}$ be an increasing sequence of linear
subspaces of a vector space $X$, i.e. $X_n \subset X_{n+1}$ for all $n \in N$, such that
$X = \bigcup_{n \in \mathbb{N}}X_n$. For each $n \in \mathbb{N}$,  let $(X_n, \tau_n)$ be a Fr\'echet space such that the
natural embedding in of $X_n$ into $X_{n+1}$ is a topological isomorphism, i.e. the
topology induced by $\tau_{n+1}$ on $X_n$ coincide with $\tau_n$. The space $X$ equipped with 
the inductive topology $\tau$ w.r.t. the family $(X_n, \tau_n), n \in \mathbb{N}$ is said to be 
the LF-space with defining sequence $(X_n, \tau_n), n \in \mathbb{N}$.

The direct sum of a finite number of Fréchet spaces is a Fréchet space. 
It could be checked that the direct sums of a countable sequence of Fr\'echet spaces is an LF space. 
An LF space is not necessarily metrizable, see Section 27.41 of \cite{Schechter96}. 
For duality between direct sums and product spaces, we refer to \cite{Schechter96},  
\cite{bogachev2017topological},  \cite{mv97} and Section 4 of \cite{cfr}. 

We define $B^{\infty}_{x}= \{ f \in L^{\infty}(\mathcal{F}_T,P): \|f\|_{\infty} \le x\}$, the closed ball of radius $x\geq 0$ in 
$L^{\infty}(\mathcal{F}_T,P)$. The following result is given in Proposition 4.1 of \cite{cfr}.   
\begin{proposition}\label{cfr}
	Let $D$ be a finite index set and $\bC \subset \mathbf{L}^{\infty,D}$ be a convex set. The set $\bC$ is closed in the $\bw^*$ topology if and only if 
	$\bC \cap \prod_{k \in D} B^{\infty}_{x}$ is closed in $ \mathbf{L}^{0,D}(\mathcal{F}_T,P)$
	for each $x\geq 0$. 
\end{proposition}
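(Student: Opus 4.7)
The plan is to view this as a product-space version of the classical result identifying $w^*$-closed convex subsets of $L^\infty$ with those whose intersection with each ball is closed in probability, and to reduce everything to the standard Krein-\v{S}mulian theorem by exploiting the finiteness of $D$.

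First I would set up the functional analytic framework. Since $D$ is finite, the predual of $\bL^{\infty,D}$ equipped with the product $\bw^*$-topology is the (finite) direct sum $\bigoplus_{k\in D} L^1(\mathcal{F}_T,P)$, which for a finite index set coincides with the product $\prod_{k\in D} L^1(\mathcal{F}_T,P)$ and is therefore a Banach space. This is the crucial point where finiteness of $D$ is used: the predual is Fr\'echet (in fact Banach), so the Krein-\v{S}mulian theorem applies. Moreover the natural notion of ``bounded set'' in $\bL^{\infty,D}$ is precisely being contained in some $\prod_{k\in D} B^\infty_r$.

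Next I would invoke the well-known fact that on a norm-bounded subset of $L^\infty(\mathcal{F}_T,P)$ the $w^*$-topology coincides with the topology of convergence in probability (i.e., the $L^0$-topology), since bounded sequences converge in probability iff they converge $w^*$, by dominated convergence applied to bounded test functions in $L^1$. Taking finite products, the same equivalence holds coordinatewise on $\prod_{k\in D} B^\infty_r$: the induced $\bw^*$-topology coincides with the induced topology from $\bL^{0,D}$. This identification is the technical heart of the argument.

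With these two ingredients, both directions are short. For $(\Rightarrow)$: if $\bC$ is $\bw^*$-closed then $\bC \cap \prod_{k \in D} B^\infty_r$ is $\bw^*$-closed in the bounded set, and by the topology-coincidence it is closed in $\bL^{0,D}$. For $(\Leftarrow)$: if $\bC \cap \prod_{k \in D} B^\infty_r$ is $\bL^{0,D}$-closed then it is $\bw^*$-closed in $\prod_{k \in D} B^\infty_r$; since $\prod_{k \in D} B^\infty_r$ is itself $\bw^*$-compact (Banach-Alaoglu, again using finiteness of $D$), this intersection is $\bw^*$-closed in all of $\bL^{\infty,D}$ for every $r\geq 0$, and an application of the Krein-\v{S}mulian theorem to the Banach predual $\bigoplus_{k\in D} L^1(\mathcal{F}_T,P)$ concludes that $\bC$ itself is $\bw^*$-closed.

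The main obstacle is not a deep one but a bookkeeping issue: one must carefully verify that the product $\bw^*$-topology on the product of dual spaces is the $w^*$-topology associated with the direct sum predual, and that the natural ``ball'' in the product coincides with $\prod_{k\in D} B^\infty_r$ up to constants (so that the Krein-\v{S}mulian hypothesis is correctly matched). Once these identifications are made, the argument is essentially the classical scalar proof applied in the product category, and this is indeed the content of Proposition 4.1 of \cite{cfr} to which the statement is attributed.
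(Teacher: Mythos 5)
There is a genuine gap in your argument, located in the step you yourself call ``the technical heart'': the claim that on a norm-bounded subset of $L^{\infty}(\mathcal{F}_T,P)$ the $w^*$-topology coincides with the topology of convergence in probability is false, and so is the sequential version ``bounded sequences converge in probability iff they converge $w^*$''. Dominated convergence only gives one implication: if $\|f_n\|_{\infty}\le r$ and $f_n\to f$ in probability, then $\int f_n g\to\int fg$ for every $g\in L^1$. The converse fails: the Rademacher functions $r_n$ on $[0,1]$ satisfy $\|r_n\|_{\infty}=1$ and $r_n\to 0$ in the $w^*$-topology, but $|r_n|\equiv 1$, so they do not converge to $0$ in probability. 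Consequently your $(\Leftarrow)$ direction is unjustified at the point where you pass from ``$\bC\cap\prod_{k\in D}B^{\infty}_r$ is closed in $\bL^{0,D}$'' to ``it is $w^*$-closed in $\prod_{k\in D}B^{\infty}_r$'': a $w^*$-convergent net in the set need not converge in probability, so $L^0$-closedness alone does not let you conclude. Note that your $(\Rightarrow)$ direction is unaffected, because there you only need the true implication (probability convergence of a bounded sequence implies $w^*$-convergence), and the outer frame of your proof (the predual $\bigoplus_{k\in D}L^1(\mathcal{F}_T,P)$ is Banach because $D$ is finite, Krein--\v{S}mulian, Banach--Alaoglu for the product of balls) is sound.

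The missing ingredient is convexity, which you never actually use in the critical step even though it is a hypothesis. The standard repair, coordinatewise or directly in the product, is: if $\bC_r:=\bC\cap\prod_{k\in D}B^{\infty}_r$ is convex, uniformly bounded by $r$ and closed in $\bL^{0,D}$, then it is norm-closed in $\prod_{k\in D}L^1$ (an $L^1$-convergent sequence has an a.s.\ convergent subsequence, hence converges in probability, and the limit stays in the ball), hence weakly closed, i.e.\ closed for $\sigma(L^1,L^{\infty})$ in each coordinate, by Mazur/Hahn--Banach --- this is where convexity enters. Since a net in $L^{\infty}$ that converges in $\sigma(L^{\infty},L^1)$ converges a fortiori in $\sigma(L^1,L^{\infty})$ (test functions in $L^{\infty}\subset L^1$), weak $L^1$-closedness of $\bC_r$ yields its $w^*$-closedness, and then Krein--\v{S}mulian applied to the Banach predual finishes the proof exactly as you outline. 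So the architecture of your proof is the right one, but the topology-coincidence claim must be replaced by this convexity-based separation argument; as stated, the $(\Leftarrow)$ direction does not go through.
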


% Authors must disclose all relationships or interests that 
% could have direct or potential influence or impart bias on 
% the work: 
%
% \section*{Conflict of interest}
%
% The authors declare that they have no conflict of interest.

% BibTeX users please use one of
%\bibliographystyle{spbasic}      % basic style, author-year citations
%\bibliographystyle{spmpsci}      % mathematics and physical sciences
%\bibliographystyle{spphys}       % APS-like style for physics
%\bibliography{}   % name your BibTeX data base

% Non-BibTeX users please use

\end{document}